\documentclass[11pt]{article}
\usepackage{enumerate}
\usepackage[OT1]{fontenc}
\usepackage{amsmath,amssymb, amsthm}
\usepackage[square]{natbib}
\setcitestyle{authoryear}
\let\cite\citep
\usepackage[usenames]{color}
\usepackage{dsfont}
\usepackage{pgfplots}
\newtheorem{theorem}{Theorem}

\newtheorem{lemma}{Lemma} 
 
\newtheorem{definition}{Definition}
\newtheorem{remark}{Remark}
\newtheorem{assumption}{Assumption}
\newtheoremstyle{nonindented}{1ex}{1ex}{}{}{\bfseries}{.}{.5em}{}
\newtheoremstyle{indented}{1ex}{1ex}{\itshape\addtolength{\leftskip}{0.6cm}\addtolength{\rightskip}{0.6cm}}{}{\bfseries}{.}{.5em}{}
\theoremstyle{nonindented}
\theoremstyle{indented}
\theoremstyle{plain}

\newcommand{\abs}[1]{\left| #1 \right|}

\renewcommand{\hat}{\widehat}
\renewcommand{\tilde}{\widetilde}
\renewcommand{\bar}{\overline}

\DeclareMathOperator{\poly}{poly}

\def\max{\qopname\relax n{max}}

\def\argmax{\qopname\relax n{argmax}}

\def\Pr{\qopname\relax n{\mathbf{Pr}}}
\def\Ex{\qopname\relax n{\mathbf{E}}}
\def\supp{\qopname\relax n{\mathbf{supp}}}

\newcommand{\eat}[1]{}

\newcommand{\maxi}[1]{\mbox{maximize} & {#1 } & \\}

\newcommand{\st}{\mbox{subject to} }

\newcommand{\con}[1]{&#1 & \\}
\newcommand{\qcon}[2]{\displaystyle &#1 & \mbox{for } #2  \\}
\newenvironment{lp}{\begin{equation}  \begin{array}{l>{\displaystyle}l>{\displaystyle}l}}{\end{array}\end{equation}}
\newenvironment{lp*}{\begin{equation*}  \begin{array}{l>{\displaystyle}l>{\displaystyle}l}}{\end{array}\end{equation*}}

\usepackage{smile}
\usepackage{makecell}
\usepackage{multirow}
\usepackage{rotating}
\usepackage{enumerate}
\usepackage{esvect}
\usepackage{multirow}

\usepackage{tikz}
\usetikzlibrary{patterns}
\usetikzlibrary{arrows}

\usepackage{times}
\usepackage{setspace}
\usepackage{amsfonts}       %
\usepackage{nicefrac}       %
\usepackage{microtype}      %
\usepackage{mathtools}
\usepackage[ruled, vlined, linesnumbered]{algorithm2e}
\usepackage{fullpage}

\usepackage{float}

\usepackage[colorlinks,
            linkcolor=red,
            anchorcolor=blue,
            citecolor=blue
            ]{hyperref}

\usepackage{algorithmic}
\definecolor{green}{rgb}{0.09, 0.45, 0.27}

\title{Generalized Principal-Agency: Contracts,  Information, Games and Beyond}
\author{
Jiarui Gan \\ University of Oxford \\ \small jiarui.gan@cs.ox.ac.uk
\and
Minbiao Han \\ University of Chicago \\ \small minbiaohan@uchicago.edu
\and
Jibang Wu \\ University of Chicago \\ \small wujibang@uchicago.edu
\and
Haifeng Xu \\ University of Chicago \\ \small haifengxu@uchicago.edu
}

\begin{document}
\maketitle

\begin{abstract}
\noindent
In the principal-agent problem formulated by \citet{myerson1982optimal}, agents have private information (type) and make private decisions (action), both of which are unobservable to the principal. 
Myerson pointed out an elegant linear programming solution based on the revelation principle.   
This work extends Myerson's results to a more general setting where the principal's strategy space can be an arbitrary convex set, and thereby accommodating infinite strategy space and additional design constraints.   

Our generalized principal-agent model unifies several important design problems including contract design, information design, and Bayesian Stackelberg games, and encompasses them as special cases. 
We first extend the revelation principle to this general model, based on which a polynomial-time algorithm is then derived for computing the optimal   mechanism for the principal.  
This algorithm not only implies new efficient solutions simultaneously for all the aforementioned special cases  but also significantly  simplifies previously known algorithms designed for special cases. Inspired by the recent interest in the algorithmic design of a single contract and  menu of contracts,  we  study such constrained design problems to our general principal-agent model. In contrast to the above \emph{unification},  our results here illustrate the other facet of \emph{diversity} among different principal-agent design problems and demonstrate how their different structures   can lead to   different  complexities: some are tractable whereas others are APX-hard. Finally, we reveal an interesting  connection of our model to the problem of information acquisition for decision making and study its algorithmic properties in general.           

\end{abstract}

\section{Introduction}
The generalized principal-agency encapsulates two important economic concepts, \emph{adverse selection} and \emph{moral hazard}, both of which have been extensively analyzed and are foundational to seminal works in the field \cite{akerlof1978market, grossman1992analysis, antle1981moral, einav2013selection}.\footnote{These conceptual breakthroughs have been recognized as key contributions in the Nobel Prize in Economic Sciences awarded in 2001 and 2016.}  Adverse selection refers to the scenario where the agent holds private information not accessible to the principal, creating an information asymmetry. Moral hazard, on the other hand, occurs when the agent's actions cannot be directly monitored or enforced by the principal, which may lead to opportunistic behavior. While these factors together enable the generalized principal-agent problem to address a wide array of economic issues, they also pose substantial obstacles in designing effective strategies for the principal to interact with agents. 

The seminal work of \citet{myerson1982optimal} introduced the general principal-agent problem under an extended revelation principle~\cite{myerson1981optimal} and outlined an elegant linear programming approach for solving the principal's optimal coordination mechanism. Despite its significance in the economics literature,  this general and cleanly formulated model has not been studied from the algorithmic perspective. In this paper, through the algorithmic lens, we extend \citet{myerson1982optimal}'s mechanism design insights to a significantly broader class of principal-agent problems. Core to our development is the insight that the finite number of principal actions in \citet{myerson1982optimal}'s model is unnecessary. Specifically, we generalize Myerson's model to a more general framework that can accommodate convex principal strategy spaces (as opposed to discrete actions as in his original model) and additional design constraints.
Our general model encompasses as special cases several popular design problems such as contract design, information design, information pricing, and Bayesian Stackelberg games (see our summary in Section~\ref{sec:special-cases}). Moreover, we observe that the various (and seemingly irrelevant) solution concepts studied previously in these different design problems are merely instantiations of two natural restrictions to the principal's design space. 
This paper is set to connect the dots on the intrinsic structures of different principal-agent problems and provide a generic algorithmic approach to analyze their general properties under a unified framework. The elevated level of abstraction enables a cleaner and deeper understanding of the multifaceted nature of these design problems. Consequently, we are able to simplify existing proofs and address the remaining gaps in the literature, thereby painting a complete picture regarding these problems' algorithmic landscape.  

In addition to the design of coordination mechanisms,  another common practice for mitigating adverse selection is through information acquisition. In reality, a principal often adopts some form of interview or internship to learn signals about the agent's types, hence assisting her design of contracts with the agents. Through the algorithmic lens, we connect this optimal information acquisition problem in the principal-agent setup to the problem of function concavification, another fundamental concept in economics \cite{kamenica2011bayesian,aumann1995repeated} and mathematical optimization. This connection enables us to identify useful tractable cases for optimal information acquisition. 
Finally, we notice that the hardness of concavification can originate from the hardness of convex maximization, even when the optimization region is restricted to a simplex with a linear number of vertices. This insight probably leads to the hardness of optimal information acquisition to reduce adverse selection.

\subsection{Organizations \& Contributions} 
The core conceptual contribution of this paper lies in elucidating the \emph{connections} between various economic models. We start by presenting a generalized framework in Section \ref{sec:setup} that extends the general principal-agent game model by \citet{myerson1982optimal} and unifies several important principal-agent problems, including contract design, Stackelberg games, and information design. 
In Section \ref{sec:opt}, we show that the optimal coordination mechanism can nevertheless be efficiently computed in this general model.
In Theorem \ref{thm:revelation-principle}, we extend the revelation principle from~\cite{myerson1982optimal} to the \emph{succinct revelation principle}, which ensures the existence of an optimal coordination mechanism that is succinct, truthful, and direct. 
Specifically, we introduce the succinctness constraint as an additional refinement of the optimal coordination policy, which uses at most one principal strategy to induce a specific agent type to take a specific action. This result leads to our Theorem~\ref{thm:computability}: with the {succinct revelation principle}, the optimal coordination policy can be formulated as a convex program and solved efficiently in regular cases. We also identify the irregular cases where the exact optimal mechanism does not exist and provide an efficient algorithmic approach to determine the $\epsilon$-optimal solution.

Section \ref{sec:special-cases} carries the most conceptual insights in this paper. Zooming into several popular special cases of principal-agent games, we showcase how various (and seemingly irrelevant) economic designs  there can all be viewed as a coordination mechanism here, up to different levels of restrictions, e.g., whether the principal's strategy can coordinate with the agent's action or reported type. We systematically review the algorithmic properties of these special design problems, compare the nuanced complexity differences under their distinct problem structure, and settle the few remaining gaps in the literature with our Theorem~\ref{thm:computability} and~\ref{thm:Optimal_IC_leader_strategy}.

Last but not least, in Section~\ref{sec:info-acq}, we consider costly information acquisition problems to reduce adverse selection and hence assist the design of optimal coordination mechanisms. In Theorem~\ref{thm:computing-optimal-info-scheme}, leveraging techniques from function concavification, we characterize the general problem structure where information acquisition can be computed efficiently. This result notably implies the algorithmics of the costly persuasion problem~\cite{gentzkow2014costly} and has broad implications for the information acquisition problem in general. Meanwhile, we prove in  Theorem~\ref{thm:hard-info-acquisit} that optimal information acquisition in principal-agent problems can be APX-hard in general. This hardness result also suggests that even concavifying the difference between two convex functions is generally intractable despite its simplistic structure. The two results together shed light on the computational properties of function concavification through the algorithmic lens.

\subsection{Related Work} 

The principal-agent problem, with its intricate dynamics and multifaceted applications, spans a long and extensive research literature that spans multiple disciplines \cite{ross1973economic,holmstrom1979moral,mirrlees1976optimal,harris1979optimal,myerson1982optimal,grossman1992analysis, kadan2017existence}. Among them, \citet{myerson1982optimal} introduced a framework for the \textit{generalized} principal-agent problem, abstracting individuals' decisions to a higher level while losing some of the structural properties of specific principal-agent problems. \citet{kadan2017existence} considers the same framework and shows the condition where optimal mechanism exists. Next, we discuss a few works from several different domains that are closely related to our paper.

\vspace{2mm}
\noindent\textbf{Contract Design.} The contract theory has been a crucial branch of economics \cite{grossman1992analysis, smith2004contract, laffont2009theory}. 
Driven by an accelerating trend of contract-based markets deployed to Internet-based applications, the contract design problem recently started to receive a surging interest, especially from the computer science community \cite{dutting2019simple, dutting2021complexity, guruganesh2021contracts, alon2021contracts, castiglioni2021bayesian, castiglioni2022designing}.
As pointed out by \citet{alon2021contracts}, this includes online markets for crowdsourcing, sponsored content creation, affiliate marketing, freelancing and etc. The economic value of these markets is substantial and the role of data and computation is pivotal.  Hence, to accommodate the growing need for data elicitation and personalization, the aspect of private agent information in principal-agent problems has been studied by several recent works.  \citet{guruganesh2021contracts} modeled the hidden type of the agent which determines the outcome distribution and presents hardness results on solving the optimal menu of a single contract. \citet{alon2021contracts} modeled the hidden type of the agent which determines the agent's cost per unit-of-effort as a single-parameter setup and showed its tractability under a constant number of actions. \cite{castiglioni2022designing} presented a positive result on the efficient computation of the optimal menu of randomized contracts while showing that the exact optimal solution does not always exist.
Our work presents a simplified and unified characterization of the optimal coordination mechanism in the more generalized model of the principal-agent problem. 
Notably, our algorithm \ref{alg:find-neighbour} solves for the optimal (or $\epsilon$-optimal) mechanism in explicitly constructed convex programs, saving us from the (practically inefficient) ellipsoid method in \cite{castiglioni2022designing}.

\vspace{2mm}
\noindent\textbf{Bayesian Persuasion.} In a Bayesian persuasion problem \cite{kamenica2011bayesian}, the principal (sender) possesses private information about the state of the world and their goal is to influence the agent (receiver)'s actions by selectively revealing the information. \citet{castiglioni2022bayesian} extended the classic Bayesian persuasion problem with a type reporting step in which the receiver is asked to report their type to the sender, which corresponds to a specific instantiation of our generalized principal-agent framework into the Bayesian persuasion application. They showed both positive and negative results under different restrictions on the principal's design space, which we will discuss in more detail in Section \ref{sec:special-cases}. \citet{kolotilin2017persuasion} also studied persuasion mechanisms with type reporting, while they focused on the setting where the agent only has two actions.

\vspace{2mm}
\noindent\textbf{Stackelberg Games.} 
Stackelberg games \cite{stackelberg1934marktform,von2004leadership} have a wide range of applications in economics, public policy-making, and security \citep{von2004leadership,von2010leadership,van2010dynamic,paruchuri2008playing,tambe2011security,kiekintveld2013security}. \citet{conitzer2006computing} modeled the hidden type of the agent and showed the NP-Hardness to compute the optimal principal mechanism, termed Bayesian Stackelberg equilibrium (BSE),  where the principal is only allowed to play a single strategy for all agent types. \citet{paruchuri2008playing} then developed an efficient mixed integer linear programming tool for practical computation of the BSE. Our work considers a generalized framework, which enables personalized principal strategies for different agent types. Moreover, these principal strategies can be randomized. This concept of randomized commitment has also been investigated by \citet{gan2019imitative} to efficiently compute optimal mechanisms to play against imitative deception in Stackelberg games.

\vspace{2mm}
\noindent\textbf{Information Acquisition.} Information acquisition for reducing adverse selection in principal-agent problems has been widely studied in the literature, particularly in asset markets where  the buyer often  acquires costly  signals about assets' types from outside sources, 
\cite{jang2021adverse,maug1998large,chemla2014skin} (see more detailed discussions in Appendix \ref{appendix_sec:related_work}). These studies analyze the equilibrium structure and properties of the induced game, whereas our study is about algorithms. 
information acquisition  has also been studied in mechanism design to refine uncertainty \cite{bergemann2002information,barlevy2000information}. 
Incentive-compatible data acquisition has been widely studied by the EC community \cite{chen2018optimal,chen2019prior}, though data have different properties from the information. Information elicitation is different since it is about eliciting truthful information without optimization over the payment. An exception is the recent work on optimizing payment of scoring rules \cite{Liopt-score,papireddygari2022contracts,pmlr-v195-hartline23a}, which turns out to relate to contract design. \citet{chen2011information} studied how to elicit information for decision-making. \citet{Cacciamani23} took an online learning perspective. \citet{oesterheld2020minimum} considered regret minimization with unknown agent preferences. The focus of all these works is on the tension between the agents' misreporting of information or type, often due to involved payment or preference. In contrast, our problem is purely about the tradeoff between the benefit and cost of information and is free of incentive issues. Nevertheless, we show that such basic tension is already NP-hard to address in general.

\section{The Generalized Principal-Agent Problem}
\label{sec:setup}
\subsection{Problem Setting}\label{sec:setup:problem}
We consider a generalized principal-agent problem that extends the model of \citet{myerson1982optimal}. The model features a sequential game played between two players, namely the \emph{principal}   and the \emph{agent}. %
The agent has a private type $\theta$ that is sampled from a probability distribution $f$. 
The agent also takes a private action $a$, which the principal cannot control and hence has to incentivize. We denote the type and action spaces of the agent as $\Theta$ and $\cA$, respectively. Meanwhile, the principal  can also execute a strategy $\bx$ from the feasible strategy space $\cX$.  These parameters together determine the utility for both the principal and agent,   denoted as $U$ and $V$  respectively, where $U,V: \cX\times\cA\times \Theta \to \RR$. As a conventional assumption, we assume $f, U, V, \Theta, \cA, \cX$ are   public knowledge, whereas the agent's type $\theta$ and action $a$ are only privately known to the agent. 

The original model of  \citet{myerson1982optimal} assumes that   $\Theta, \cX, \cA$ are all finite, for which he developed a linear programming approach to solving for the optimal principal mechanism. In this paper, we generalize his model to the case where $\cX \subseteq \RR^d$ is an arbitrary \emph{closed convex set}  with a linear agent utility structure and concave principal utility. Notably, despite that $\Theta,   \cA$ remain   finite in our model, we show in Section \ref{sec:special-cases} that this generalization already includes several major economic design problems as special cases. 
Specifically, we assume $U(\bx, a;\theta)$ is concave in $\bx$ for any given (discrete) $a, \theta$, whereas the agent's utility function $V$ satisfies the following \emph{linearity}. 

\begin{assumption}[Affine Agent Utility]\label{ass:linear}
Agent's utility function $V(\bx, a;\theta)$ satisfies %
\begin{equation}\label{eq:linear-ass}
   \lambda  \cdot V(\bx, a;\theta) + (1-\lambda) \cdot V(\bx', a;\theta) = V(\lambda  \bx + (1-\lambda) \bx', a;\theta), \quad \forall \bx, \bx' \in \cX, \lambda \in [0, 1],  \forall \theta, a  
\end{equation}
\end{assumption}
Note that, Assumption \ref{ass:linear} is equivalent to that for any $a, \theta$, there exists $\alpha^{a, \theta} \in \RR^d$ and $\beta^{a, \theta} \in \RR$ such that $V(\bx, a;\theta) = \alpha^{a, \theta} \bx + \beta^{a, \theta} $. We avoided this more direct definition in order to eliminate additional notations of $\alpha^{a, \theta}, \beta^{a, \theta}$, as the property in Equation \eqref{eq:linear-ass} suffices for our technical development.

\vspace{2mm}
\noindent\textbf{The Coordination Mechanism.} To find the optimal mechanism for the principal --- coined the \emph{coordination mechanism} by \citet{myerson1982optimal}, as we will also adopt throughout the paper ---  a standard revelation principle style argument shows that it is without loss of generality to focus  {\em direct} coordination mechanism. Such a mechanism is given by a coordination policy $\pi: \cX\times\cA\times \Theta \to [0,1]$ such that $\sum_{\bx \in \cX, a\in \cA} \pi(\bx, a; \theta) = 1$ for all $\theta \in \Theta$.\footnote{To be more precise, $\pi$ should have been thought of as a density function since $\bx$ is continuous and the summation condition should have been $\sum_{a\in \cA} \int_{\bx \in \cX } \pi(\bx, a; \theta) = 1$. However, as we will show below, for the purpose of principal utility maximization, it is  without loss of generality to focus on $\pi$ that supports on only finitely many $\bx$. Due to this reason, we will always use $\sum_{\bx \in \cX }$, instead of $\int_{\bx \in \cX }$, throughout the paper for notational convenience. } Given such a $\pi$, if the agent reports any type $\theta$, then with probability $\pi(\bx, a; \theta)$ the principal plays strategy $\bx$ and recommends the agent to take action $a$.
In summary, the sequence of the interactions proceeds as follows, 
\begin{enumerate}
\item The principal announces a coordination mechanism $\pi$ to the agent.
\item An agent type $\theta$ is drawn from $f$ and the agent reports a type $\theta'$ to the principal.
\item A strategy $\bx$ of the principal and an action $a$ of the agent are drawn from $\pi(\cdot, \cdot; \theta')$. The principal plays $\bx$ (and recommends the agent to take action $a$).
\item The agent plays an action $a'$ in response to $\bx$; that is, $a' \in \argmax_{b \in \cA} V(\bx, a; \theta)$.
\item The principal and the agent receive utilities $U(\bx, a'; \theta)$ and $V(\bx, a'; \theta)$, respectively.
\end{enumerate}

A direct coordination mechanism is {\em incentive compatible (IC)}, or {\em truthful}, if and only if the honest-obedient participation strategy---i.e., reporting his type truthfully and meanwhile also taking the recommended action---is optimal for the agent \cite{myerson1982optimal}. 
Mathematically, a truthful and direct coordination mechanism satisfies the following constraints:
$$
 \sum_{\bx \in \cX} \sum_{a \in \cA} \pi(\bx, a; \theta) \cdot V(\bx, a; \theta) \geq \sum_{\bx \in \cX} \sum_{a \in \cA} \big[ \pi(\bx, a; \theta') \cdot \max_{a'\in \cA} V(\bx, a'; \theta) \big], \quad \forall \theta, \theta'\in \Theta.
$$
Consequently, given a truthful and direct coordination mechanism, the principal's utility can be written as
$$ \sum_{\theta \in \Theta} f(\theta) \sum_{\bx \in \cX} \sum_{a \in \cA} \pi(\bx, a; \theta) \cdot U(\bx, a; \theta). $$

\vspace{2mm}
\noindent\textbf{Supplemental Constraints.} %
Another novel aspect of our framework is that it can accommodate supplemental convex constraints   on the coordination policy $\pi$, in the following format
\begin{equation}
\label{eq:supplimental-const}
\sum_{x\in \cX}\sum_{a\in \cA} \pi(\bx, a;\theta) \cdot \bx \in \cC_{\theta}, \qquad \forall \theta \in \Theta  
\end{equation}
for some convex set $\cC_{\theta}$. To see how this could be useful, in the Bayesian persuasion problem, $\bx$ will correspond to the agent's posterior belief over certain random state of nature and the well-known Bayes plausibility \cite{aumann1995repeated} constraint will have the format of  $\sum_{x\in \cX}\sum_{a\in \cA} \pi(\bx, a;\theta) \cdot \bx = \mu_{\theta}$ where $\mu_{\theta}$ is agent type $\theta$'s prior belief over the state (different types could have different beliefs). Such constraints can be captured by the above supplemental constraints. 
In Section \ref{sec:opt}, we will develop a revelation principle for the general principal-agent problem under supplemental constraints of the form \eqref{eq:supplimental-const}, and then show how to solve it efficiently through a carefully tailored algorithm.

\subsection{Warm-up: Myerson's Finite Action Case} 
Before presenting our results, it is worth recalling Myerson's solution for the case with a finite principal strategy set $\cX$ and without any supplemental constraint.
His solutions are based on the  following two results. 

\begin{prop}[Revelation Principle~\cite{myerson1982optimal}] \label{prop:revelation-principle}
There always exists an optimal coordination mechanism that is truthful and direct.
\end{prop}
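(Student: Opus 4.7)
The plan is to run the classical revelation principle argument in this setting, which in its original form operates with $\cX$, $\cA$, and $\Theta$ all finite as in Myerson's paper. Consider any (possibly indirect, non-truthful) coordination mechanism, abstractly given by a message space $\cM$ together with a map $\rho : \cM \to \Delta(\cX \times \cA)$ specifying for each reported message the distribution over principal strategies and action recommendations. Because $\cA$ is finite, for every $\theta \in \Theta$ there exists an optimal reporting strategy $\sigma^*(\theta) \in \Delta(\cM)$ together with an optimal action response $a^*(\bx;\theta) \in \argmax_{b \in \cA} V(\bx, b; \theta)$ in the induced equilibrium (with any fixed tie-breaking). The principal's expected utility under the original mechanism equals the expected $U$-value computed with respect to these equilibrium responses.

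Next, I would construct a direct mechanism $\pi$ by \emph{simulation plus enforced obedience}: upon receiving a reported type $\theta'$, the new mechanism internally draws $m \sim \sigma^*(\theta')$, samples $(\bx, a) \sim \rho(m)$, and then overrides the recommended action by $a^*(\bx;\theta')$ before presenting the pair $(\bx, a^*(\bx;\theta'))$ to the agent. Thus the marginal distribution of $\bx$ under report $\theta'$ matches what the agent would induce in the original mechanism by using $\sigma^*(\theta')$, while the recommendation is always already a best response to $\bx$ for the declared type. Two conditions then need to be verified. \emph{Obedience} holds by construction, since the recommended action coincides with $a^*(\bx;\theta')$, so step~4 of the interaction returns that very action. \emph{Truthful reporting} follows by a deviation argument: if some type $\theta$ strictly preferred reporting $\theta' \neq \theta$ under $\pi$, then in the original mechanism type $\theta$ could reproduce that deviation by using reporting strategy $\sigma^*(\theta')$ and then best-responding action-wise, contradicting the optimality of $\sigma^*(\theta)$.

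Finally, the principal's expected utility under $\pi$ with honest-obedient play coincides with her equilibrium utility under the original mechanism, since both are computed from the same joint distribution over $(\bx, a^*(\bx;\theta), \theta)$. Applying the construction to an optimal (possibly indirect) mechanism therefore yields an optimal direct truthful mechanism, establishing the claim. In this warm-up regime no serious obstacle is anticipated; the only subtle point is the replacement of the original recommendation by the agent's deterministic best response so that obedience becomes self-enforcing. The analogous statement for arbitrary closed convex $\cX$ and mechanisms subject to supplemental constraints of the form~\eqref{eq:supplimental-const} is deferred to the succinct revelation principle in Theorem~\ref{thm:revelation-principle}, where the infinite strategy space (and the need to realize the resulting mechanism with finite support) introduces the real technical work.
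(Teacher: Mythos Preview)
Your argument is correct and follows the same standard simulation approach as Myerson's original proof, which the paper simply cites for Proposition~\ref{prop:revelation-principle} (and then sketches in the first part of the appendix when extending it to supplemental constraints en route to Theorem~\ref{thm:revelation-principle}). The only cosmetic difference is that you write the agent's equilibrium action as $a^*(\bx;\theta)$, a best response to the observed principal strategy, whereas the paper writes it as $\delta(m,\theta)$ depending on the received message; since in the paper's model the agent observes $\bx$ before acting (step~4 of the interaction), these coincide, and your ``override'' step is exactly the composition $\pi^*(\bx,a;\theta)=\sum_{m\in\delta^{-1}(a,\theta)}\pi(\bx,m,\rho(\theta))$ the paper uses.
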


\begin{prop}[\cite{myerson1982optimal}]
\label{prop:computability}
When the agent type set $\Theta$ and the players' action sets $\cX$ and $\cA$ are finite, the optimal truthful and direct coordination mechanism can be computed by the following linear program:
\begin{lp*}
    \maxi{\sum_{\theta \in \Theta} f(\theta) \sum_{x \in \cX} \sum_{a \in \cA} \pi(\bx, a; \theta) \cdot U(\bx, a; \theta)  }
\st  
\qcon{\sum_{\bx \in \cX} \sum_{a \in \cA} \pi(\bx, a; \theta) \cdot V(\bx, a; \theta) \geq \sum_{\bx \in \cX} \sum_{a \in \cA} \big[ \pi(\bx, a; \theta') \cdot \max_{a'\in \cA} V(\bx, a'; \theta ) \big]}{\theta, \theta'\in \Theta}
    \qcon{ \sum_{\bx\in \cX} \sum_{a\in \cA}   \pi(\bx, a; \theta) = 1 }{\theta \in \Theta} 
\end{lp*}
\end{prop}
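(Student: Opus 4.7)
My plan is to check that the linear program is an exact reformulation of the optimal truthful direct coordination problem, after which polynomial-time computability follows because the LP has $|\cX|\cdot|\cA|\cdot|\Theta|$ variables and $|\Theta|^2 + |\Theta|$ explicit constraints (plus implicit nonnegativity), all with coefficients that are polynomially bounded in the input size. By Proposition~\ref{prop:revelation-principle}, it suffices to consider truthful direct mechanisms, so I would establish a value-preserving bijection between such mechanisms and feasible LP solutions. The normalization constraints together with $\pi\ge 0$ express exactly that each $\pi(\cdot,\cdot;\theta)$ is a probability distribution over $\cX\times\cA$, and the objective is by definition the principal's expected utility under honest-obedient play; the only non-trivial ingredient is the IC constraint.

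The crucial observation is that the inequality indexed by $(\theta,\theta')$ plays a double role depending on whether $\theta'=\theta$. For $\theta'=\theta$, the inequality rearranges to
\[
\sum_{\bx\in\cX}\sum_{a\in\cA}\pi(\bx,a;\theta)\Bigl(\max_{a'\in\cA}V(\bx,a';\theta)-V(\bx,a;\theta)\Bigr)\le 0,
\]
in which every summand is nonnegative; this forces $a\in\argmax_{a'}V(\bx,a';\theta)$ whenever $\pi(\bx,a;\theta)>0$, which is precisely the \emph{obedience} condition. For $\theta'\neq\theta$, the left-hand side is the expected payoff to type $\theta$ under truthful reporting and obedient play, while the right-hand side, after marginalizing over the recommended action that the agent is free to ignore, equals the expected payoff when type $\theta$ misreports $\theta'$ and then best-responds to the realized $\bx$; the inequality thus encodes \emph{truthful reporting}. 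Any truthful direct mechanism satisfies both properties and is therefore feasible for the LP; conversely, any feasible $\pi$ defines such a mechanism.

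Since the correspondence is value-preserving -- both problems maximize the same objective $\sum_\theta f(\theta)\sum_{\bx,a}\pi(\bx,a;\theta) U(\bx,a;\theta)$ -- their optima coincide. I do not anticipate any genuine obstacle: the entire argument hinges on the elementary inequality $\max_{a'}V(\bx,a';\theta)\ge V(\bx,a;\theta)$, which collapses the $\theta'=\theta$ instance of the IC constraint into the obedience requirement. Polynomial-time solvability of the resulting finite LP then follows from any standard algorithm such as the ellipsoid or interior-point method.
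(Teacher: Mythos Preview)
Your proposal is correct. The paper does not provide its own proof of this proposition --- it is quoted as a background result from \citet{myerson1982optimal} --- so there is nothing to compare against beyond noting that your argument is sound: the first family of LP constraints is verbatim the incentive-compatibility condition defined just above the proposition, and the normalization constraint together with the implicit nonnegativity $\pi\ge 0$ encodes the distributional requirement on each $\pi(\cdot,\cdot;\theta)$, so the value-preserving bijection is immediate. Your decomposition of the $(\theta,\theta')$-inequality into an obedience part ($\theta'=\theta$) and a truthful-reporting part ($\theta'\neq\theta$) is a pleasant way to read the constraint but is more than what is strictly needed, since the paper already takes that combined inequality as the \emph{definition} of IC.
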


Note that the revelation principle presented in Proposition~\ref{prop:revelation-principle} is  applicable even in  settings with infinite $\Theta$, $\cX$, and $\cA$, but Proposition~\ref{prop:computability} relies on finite action and type spaces.

\section{The Optimal Coordination Mechanism}\label{sec:opt}

In this section, we extend Myerson's result above and show that the optimal coordination mechanism for the principle can be computed even when $\cX$ is an arbitrary convex set and when supplemental constraints are imposed, as in the model of Section \ref{sec:setup:problem}. To do so, we first prove a  strengthened   revelation principle for our setup by leveraging convexity, and prove that it is without loss of generality to design the \emph{succinct coordination mechanism} for our setup. 

\begin{definition}\label{def:succint}
A direct coordination mechanism is \emph{succinct} if it satisfies the following  constraint:
$$ \big|\supp (\pi(\cdot, a; \theta) )\big| \leq 1, \quad \forall \theta \in \Theta, a\in \cA,$$ 
where $\supp(\cdot)$ denotes the support of a distributional function $\pi(\cdot, a; \theta) $. 
\end{definition}

In other words, a succinct mechanism prescribes deterministically a single strategy $\bx$ for each type $\theta$ reported and action $a$ chosen by the agent. 
Under this definition, a succinct mechanism admits the following simplified representation.
Denote by $\bx^{a, \theta}$   the strategy prescribed for a reported type $\theta$ and action $a$; namely,   $\bx^{a, \theta}$ is the strategy with $\pi(\bx, a; \theta) >0$   in the coordination mechanism $\pi$ and Definition \ref{def:succint} guarantees that there is exactly  one such $\bx$ for any fixed $a, \theta$ with $\pi(\bx, a; \theta) >0$.  

A succinct mechanism can thus be represented by $ \{( \pi(a; \theta), \bx^{a, \theta}) : \theta \in \Theta, a \in \cA\}$, subject to $\sum_{a\in \cA} \pi(a; \theta) = 1$   for all $\theta \in \Theta$. This mechanism is then implemented as follows: when the agent reports a type $\theta$, with probability $\pi(a; \theta)$, the principal plays $\bx^{a,\theta}$ and recommends the agent to take action $a$. The crucial advantage of succinct mechanisms is that the  entire mechanism $ \{( \pi(a; \theta), \bx^{a, \theta}) : \theta \in \Theta, a \in \cA\}$ now has a compact polynomial-size representation, irrespective of the size of $\cX$. However, it is still unclear whether it is without loss of generality to only consider such succinct direct coordination mechanisms for the principal's utility maximization. Our first main result shows that the answer is Yes --- and this is also the reason that we have been using the notation of $\sum_{\bx}$ instead of $\int_{\bx}$. We remark that Theorem \ref{thm:revelation-principle}  is a \emph{strengthened} version of Proposition \ref{prop:revelation-principle}, while not  a generalization. In fact, Theorem \ref{thm:revelation-principle} does not hold in the case of discrete $\cX$ and is a consequence of the convexity of player utilities in the feasible principal strategy set. We refer readers to Appendix \ref{appendix_sec:thm_rp} for the complete proof.

\begin{theorem}[Succinct Revelation Principle] \label{thm:revelation-principle}
In any generalized principal-agent problem with supplemental constraints, there always exists an optimal coordination mechanism that is succinct, truthful, and direct.
\end{theorem}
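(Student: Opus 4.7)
Starting from any optimal truthful direct coordination mechanism $\pi$ (which exists by Proposition~\ref{prop:revelation-principle}), I will construct a succinct mechanism $\pi^*$ that achieves at least as much principal utility and preserves both truthfulness and the supplemental constraints. The construction is the natural one: for each pair $(a,\theta)$ with $\pi(a;\theta) := \sum_{\bx \in \cX} \pi(\bx,a;\theta) > 0$, define
\[
\bx^{a,\theta} \;=\; \frac{\sum_{\bx \in \cX} \pi(\bx,a;\theta)\cdot \bx}{\sum_{\bx \in \cX} \pi(\bx,a;\theta)},
\]
and otherwise let $\bx^{a,\theta}$ be any fixed element of $\cX$. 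Because $\cX$ is convex and closed, $\bx^{a,\theta} \in \cX$. The succinct mechanism $\pi^*$ then prescribes $\bx^{a,\theta}$ with probability $\pi(a;\theta)$ upon reported type $\theta$.

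The verification proceeds by reading off three inequalities from the three structural assumptions. First, by Assumption~\ref{ass:linear} (affine $V$ in $\bx$), for every $\theta$ and $a$,
\[
\pi(a;\theta)\cdot V(\bx^{a,\theta}, a;\theta) \;=\; \sum_{\bx \in \cX} \pi(\bx,a;\theta)\cdot V(\bx, a;\theta),
\]
so the honest-obedient payoff to type $\theta$ under $\pi^*$ equals that under $\pi$. Second, $\max_{a' \in \cA} V(\bx, a';\theta)$ is the maximum of affine functions of $\bx$ and hence convex in $\bx$; by Jensen's inequality,
\[
\pi(a;\theta')\cdot\max_{a'\in\cA} V(\bx^{a,\theta'}, a';\theta) \;\le\; \sum_{\bx\in\cX} \pi(\bx, a;\theta')\cdot \max_{a'\in\cA} V(\bx, a';\theta),
\]
so the best deviation payoff under $\pi^*$ is no larger than that under $\pi$. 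Combining these two inequalities with the IC constraint of $\pi$ immediately yields IC for $\pi^*$. Third, concavity of $U$ in $\bx$ gives, again by Jensen,
\[
\pi(a;\theta)\cdot U(\bx^{a,\theta}, a;\theta) \;\ge\; \sum_{\bx\in\cX} \pi(\bx,a;\theta)\cdot U(\bx, a;\theta),
\]
so summing over $a$ and taking expectation over $\theta\sim f$ shows the principal's utility weakly increases. Hence $\pi^*$ is still optimal.

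Finally, the supplemental constraints \eqref{eq:supplimental-const} are preserved automatically, because by construction
\[
\sum_{a\in\cA} \pi(a;\theta)\cdot \bx^{a,\theta} \;=\; \sum_{a\in\cA}\sum_{\bx\in\cX} \pi(\bx,a;\theta)\cdot \bx,
\]
which lies in $\cC_\theta$ by the feasibility of $\pi$. Putting these pieces together, $\pi^*$ is succinct (each $(a,\theta)$ pair is assigned exactly one strategy), truthful, direct, feasible, and at least as good as $\pi$ for the principal, proving the theorem.

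\textbf{Main obstacle.} There are no deep obstacles; the proof is essentially a single convex-averaging argument. The only mildly delicate step is that truthfulness involves \emph{both} sides of the IC inequality, and these move in opposite directions under averaging---the honest side is preserved by linearity of $V$, while the deviation side weakly decreases by convexity of $\max_{a'} V(\cdot, a';\theta)$. One must be careful that this directional asymmetry is exactly what is needed (not the other way around) for IC to be preserved. Everything else---feasibility of $\bx^{a,\theta}$, preservation of supplemental constraints, and weak improvement of the principal's objective---follows from closedness/convexity of $\cX$ and concavity of $U$ by direct application of Jensen.
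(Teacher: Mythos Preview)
Your proof is correct and follows essentially the same averaging construction as the paper's proof: replace the distribution $\pi(\cdot,a;\theta)$ by its barycenter $\bx^{a,\theta}$, then use linearity of $V$ for the honest payoff, Jensen on the convex map $\bx\mapsto\max_{a'}V(\bx,a';\theta)$ for the deviation payoff, and Jensen on the concave $U$ for the objective. The only small omission is that you invoke Proposition~\ref{prop:revelation-principle} to obtain a starting optimal truthful direct mechanism, but that proposition as stated does not cover supplemental constraints; the paper first spends a paragraph verifying that the standard simulation argument preserves the constraint $\sum_{\bx,a}\pi(\bx,a;\theta)\cdot\bx\in\cC_\theta$, which is immediate but worth stating.
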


Thanks to Theorem \ref{thm:revelation-principle}, we can restrict our attention to finding the optimal succinct mechanism, represented by  variables $ \{( \pi(a; \theta), \bx^{a, \theta}) : \theta \in \Theta, a \in \cA\}$. This representation leads to the following (non-convex) optimization problem (OP):
\begin{lp} \label{lp:original}
    \maxi{\sum_{\theta \in \Theta} f(\theta) \sum_{a \in \cA} \pi(a; \theta) \cdot U(\bx^{a, \theta}, a; \theta)   }
\st 
\qcon{\sum_{a \in \cA} \pi(a; \theta) \cdot V(\bx^{a, \theta}, a; \theta) \geq \sum_{a \in \cA} \big[ \pi(a; \theta') \cdot \max_{a'\in \cA} V(\bx^{a, \theta'}, a'; \theta ) \big]}{\theta, \theta'\in \Theta}
    \qcon{\bx^{a,\theta} \in \cX;\quad \pi(a; \theta) \in [0,1]}{a\in \cA, \theta \in \Theta}
    \qcon{\sum_{a\in \cA} \pi(a; \theta) = 1; \quad \sum_{a\in \cA} \pi(a;\theta) \cdot \bx^{a,\theta} \in \cC_{\theta}}{\theta \in \Theta} 
\end{lp}

In the above program, the objective is the principal's expected utility with respect to the prior distribution $f$ over the agent's private type when deploying the succinct mechanism $\{( \pi(a; \theta), \bx^{a, \theta}) : \theta \in \Theta, a \in \cA\}$. The first set of constraints is the key incentive constraint, which guarantees that under the succinct mechanism, the agent is incentivized to report their type $\theta$ truthfully and respond with action $a$ when the principal plays strategy $x^{a, \theta}$. That is, the agent's utility (on the RHS) under truthfulness is no worse than the agent's utility (on the LHS) under any alternative report of type $\theta'$ and the action $a'$ that maximizes the agent's utility under the reported type $\theta'$. 
The remaining constraints ensure the feasibility of the succinct mechanism. The second and third set of constraints ensures that the principal's strategy is within the feasible strategy space $\cX$ and the randomization of the principal's coordination policy is valid. The last set of constraints ensures that any supplemental constraint is satisfied. 

Unfortunately, OP \eqref{lp:original} is a non-convex quadratic program due to the bilinear term $\pi(a;\theta) \cdot \bx^{a,\theta}$ (recall that $V$ is  an affine function of $\bx$ by Assumption \ref{ass:linear}). Here then comes our second main result, which surprisingly shows that the above bilinear program can be solved efficiently by a carefully designed algorithm.
\begin{theorem}\label{thm:computability}
In the generalized principal-agent game, the optimal coordination mechanism can be computed in $\poly(|\cA|, |\Theta|,d)$ time, where $d$ is the dimension of  $\cX (\subseteq \RR^d)$.  %
\end{theorem}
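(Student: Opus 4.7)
The plan is to linearize OP \eqref{lp:original} via a change of variables and to show that the resulting program is a convex program in polynomially many variables. Specifically, for every $a\in\cA$ and $\theta\in\Theta$ I introduce $\by^{a,\theta} := \pi(a;\theta)\cdot \bx^{a,\theta}\in\RR^d$. By Assumption \ref{ass:linear}, write $V(\bx,a;\theta) = \alpha^{a,\theta}\bx+\beta^{a,\theta}$. Then the bilinear LHS of the IC constraint rewrites as
\[
  \sum_{a\in\cA} \bigl(\alpha^{a,\theta}\by^{a,\theta} + \pi(a;\theta)\beta^{a,\theta}\bigr),
\]
which is linear in $(\pi,\by)$. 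The RHS becomes $\sum_{a\in\cA}\max_{a'\in\cA}\bigl(\alpha^{a',\theta}\by^{a,\theta'} + \pi(a;\theta')\beta^{a',\theta}\bigr)$, a sum of pointwise maxima of affine functions, hence convex. Thus each IC constraint becomes ``linear $\geq$ convex,'' which defines a convex feasible region. The supplemental constraint $\sum_{a}\pi(a;\theta)\cdot \bx^{a,\theta}\in\cC_\theta$ simplifies directly to $\sum_{a}\by^{a,\theta}\in\cC_\theta$ and remains convex.

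Next I would address the objective and the membership $\bx^{a,\theta}\in\cX$. The objective $\sum_\theta f(\theta)\sum_a \pi(a;\theta)\cdot U(\by^{a,\theta}/\pi(a;\theta),a;\theta)$ is the perspective transform of the concave function $U(\cdot,a;\theta)$ and is therefore jointly concave in $(\pi(a;\theta),\by^{a,\theta})$ on the half-space $\pi(a;\theta)>0$. The constraint $\bx^{a,\theta}\in\cX$ translates to $\by^{a,\theta}\in \pi(a;\theta)\cX$, which is a slice of the homogenization cone $\{(\pi,\by):\pi\geq 0,\ \by\in\pi\cX\}$ and is convex (with $\pi=0$ forcing $\by$ into the recession cone of $\cX$, which collapses to $\{0\}$ when $\cX$ is bounded). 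Consequently, the entire reformulation is a convex program in $O(|\cA||\Theta|(d+1))$ variables and $O(|\Theta|^2+|\Theta||\cA|)$ convex constraints, and can be solved in $\poly(|\cA|,|\Theta|,d)$ time using interior-point or ellipsoid methods, given a (polynomially implementable) separation oracle for $\cX$ and each $\cC_\theta$. Recovering the succinct mechanism is immediate: set $\bx^{a,\theta} = \by^{a,\theta}/\pi(a;\theta)$ whenever $\pi(a;\theta)>0$.

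The main obstacle I anticipate is the degenerate boundary behavior at $\pi(a;\theta)=0$. The perspective transform is well-defined at the boundary (taking value $0$ when $\by=0$), and when $\pi(a;\theta)=0$ in an optimal solution the corresponding $\bx^{a,\theta}$ can be set arbitrarily in $\cX$ since that branch of the mechanism is never activated. However, when $\cX$ is unbounded (or when boundary-attainment fails due to the homogenization), the supremum may only be \emph{approached} as $\pi(a;\theta)\to 0$ with $\bx^{a,\theta}$ diverging along a recession direction. This is the ``irregular'' regime foreshadowed in Section \ref{sec:opt}: one does not extract an exact optimum but rather an $\epsilon$-optimal mechanism, for instance by imposing $\pi(a;\theta)\geq \delta$ for a sufficiently small $\delta>0$ (chosen as a function of $\epsilon$ and problem parameters), solving the resulting perturbed convex program, and arguing by continuity of the perspective objective that the loss in principal utility is $O(\epsilon)$. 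Verifying this quantitative bound — and showing that $\delta$ can be taken to be $1/\poly$ in the input — is where the technical care is needed; everything else reduces to invoking standard convex programming guarantees.
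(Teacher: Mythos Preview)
Your core approach---substituting $\by^{a,\theta}=\pi(a;\theta)\,\bx^{a,\theta}$, rewriting the IC constraints as ``affine $\ge$ pointwise-max of affines,'' and invoking the perspective transform to get a concave objective---is exactly the paper's route (with $\bz^{a,\theta}$ in place of your $\by^{a,\theta}$ and the set $\cP=\{(\lambda,\lambda\bx):\bx\in\cX,\lambda\in[0,1]\}$ playing the role of your homogenization). The diagnosis of the irregular boundary (where $\pi=0$ but $\by\neq 0$, arising precisely when $\cX$ is unbounded) also matches.

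The one substantive gap is your proposed repair for irregularity. Imposing $\pi(a;\theta)\ge\delta$ \emph{uniformly} over all $(a,\theta)$ can make the program infeasible: taking $\theta'=\theta$ in the IC constraint forces, termwise, that every action $a$ with $\pi(a;\theta)>0$ be a best response for type $\theta$ against $\bx^{a,\theta}$; if $a$ is strictly dominated for type $\theta$ (no $\bx\in\cX$ makes it optimal), then $\pi(a;\theta)>0$ is impossible and your perturbed program has no feasible point. Even when feasible, there is no a priori reason the loss is $O(\delta)$ uniformly in the instance. The paper avoids this by first solving the convex program over the \emph{closure} $\bar\cP$ (CP~\eqref{lp:transformed-closure}) and then repairing irregularity only where it actually occurs: for each pair $(\tilde a,\tilde\theta)$ with $\pi(\tilde a;\tilde\theta)=0$ but $\bz^{\tilde a,\tilde\theta}\neq 0$ in that solution, it solves an auxiliary program (CP~\eqref{lp:transformed-interior}) that maximizes $\pi(\tilde a;\tilde\theta)$ over the same feasible region, shows that this maximum is strictly positive precisely for irregular pairs, and then outputs a convex combination of the closure optimum with these witnesses (Algorithm~\ref{alg:find-neighbour}). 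This yields a feasible $(1-\epsilon)$-approximate mechanism after at most $|\cA||\Theta|$ auxiliary solves. Your $\delta$-perturbation idea could be salvaged if you restricted the lower bound to the irregular pairs only and supplied the argument that positive mass is attainable for exactly those pairs---but that is essentially what the paper's algorithm does.
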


The algorithm for Theorem \ref{thm:computability}  simply  solves carefully constructed polynomial-size convex programs, which is  also  practical in reality.  However, the correctness  proof of the algorithm turns out to be quite involved. Such non-triviality is perhaps as expected --- \citet{castiglioni2022designing} studied the same computational problem but for the contract design problem which is a  special case of our model; yet the algorithm they designed for that special case is  even   more complex  and had to go through the machinery of the ellipsoid method and counting polytope vertices, leading to a polynomial time algorithm with large polynomial degrees. Our proof uses a completely different approach, and is simpler compared to that of  \cite{castiglioni2022designing}, yet is applicable to much more general problem setups. Specifically, we show that for  typical  problem instances --- i.e., when $\cX$ is a bounded set, or when the instance is ``regular'' (see formal definition in our proof) --- OP \eqref{lp:original}  reduces to an explicitly constructed convex program (CP) and thus admits practically efficient algorithms. The key challenge in proving Theorem \ref{thm:computability} is to address unbounded $\cX$ with irregular game structure (the special case of contract design as studied in \cite{castiglioni2022designing} happens to fall within this difficult situation).  To handle this case, we develop approaches to first identify the irregularity and then adjust the mechanism to output a finite solution that is arbitrarily close to the principal's optimal utility.\footnote{An arbitrarily small utility loss turns out to be necessary, as \cite{castiglioni2022designing} also shows that the precisely optimal principal utility cannot be achieved in contract design unless $\infty$ payment is allowed. }  The formal proof is presented below in subsection \ref{sec:alg:proof}.%

\subsection{Proof Sketch of Theorem \ref{thm:computability}}\label{sec:alg:proof}

At a high level, we will convert OP~\eqref{lp:original} into a convex optimization problem, which is efficiently solvable.  
For this purpose, define the variable $ \bz^{a, \theta} \coloneqq \pi(a; \theta) \cdot \bx^{a, \theta} $ and the following joint space of $\pi(a; \theta)$ and $\bz^{a,\theta}$:    
\[
\cP \coloneqq \big\{ ( \lambda, \lambda\bx) : \bx \in \cX, \lambda\in [0,1] \big\}.
\]
\noindent
As shown in Lemma~\ref{lm:convex-z} below, a useful property of $\cP$ is that it is convex whenever $\cX$ is convex. The proof of Lemma 1 can be found in Appendix \ref{appendix_sec:thm_computability}.

\begin{lemma}\label{lm:convex-z}
If $\cX$ is a convex set, then $\cP := \{ ( \lambda, \lambda\bx) : \bx \in \cX, \lambda\in [0,1] \}$ is a convex set.
\end{lemma}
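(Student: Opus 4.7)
The plan is to verify convexity directly from the definition. Take two arbitrary points $(\lambda_1, \lambda_1 \bx_1)$ and $(\lambda_2, \lambda_2 \bx_2)$ in $\cP$, with $\bx_1, \bx_2 \in \cX$ and $\lambda_1, \lambda_2 \in [0,1]$, together with any mixing weight $t \in [0,1]$. I want to exhibit a pair $(\lambda, \bx)$ with $\lambda \in [0,1]$ and $\bx \in \cX$ such that the convex combination $t(\lambda_1, \lambda_1 \bx_1) + (1-t)(\lambda_2, \lambda_2 \bx_2)$ equals $(\lambda, \lambda\bx)$, which by definition puts it in $\cP$.

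The natural choice is $\lambda \coloneqq t\lambda_1 + (1-t)\lambda_2$, which automatically matches the first coordinate and lies in $[0,1]$ since $[0,1]$ is convex. For the second coordinate, when $\lambda > 0$ I would set $\bx$ to be the re-normalized weighted average $\frac{t\lambda_1}{\lambda}\bx_1 + \frac{(1-t)\lambda_2}{\lambda}\bx_2$. The two weights are non-negative and sum to $1$, so this is a genuine convex combination of $\bx_1,\bx_2 \in \cX$, hence belongs to $\cX$ by convexity; a direct calculation then confirms $\lambda \bx = t\lambda_1 \bx_1 + (1-t)\lambda_2 \bx_2$, matching the second coordinate of the convex combination.

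The only thing to double-check is the degenerate case $\lambda = 0$. In that case, non-negativity forces $t\lambda_1 = (1-t)\lambda_2 = 0$, so the second coordinate of the convex combination is the zero vector, and the pair $(0, \mathbf{0})$ clearly lies in $\cP$ (take any $\bx \in \cX$ together with $\lambda = 0$; if $\cX$ is empty then $\cP$ is empty and convexity is trivial). No substantive obstacle arises here — the only subtlety is remembering to handle the vanishing-$\lambda$ case separately so the renormalization is well-defined, which is why I would phrase the argument as a split into $\lambda > 0$ and $\lambda = 0$.
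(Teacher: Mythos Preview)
Your proof is correct and follows essentially the same approach as the paper: a direct verification of convexity by writing the convex combination as $(\lambda,\lambda\bx)$ with $\lambda$ the obvious scalar combination and $\bx$ the renormalized average. The only cosmetic difference is the case split---the paper branches on whether one of $\lambda_1,\lambda_2$ vanishes, whereas you branch on whether $\lambda = t\lambda_1+(1-t)\lambda_2$ vanishes---but the content is the same.
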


Nevertheless, $\cP$ may not be a closed set even when $\cX$ is closed. For instance, if $\cX = \RR_{\geq 0}$, then $\cP = \{(\lambda, z): \lambda \in (0, 1], z\geq0\} \cup \{(0,0)\}$ which is convex but not closed. This may cause irregularity issues in our optimization and a large part of our proof is devoted to addressing these issues. 

Next, we convert OP~\eqref{lp:original} to a convex optimization problem, possibly one with a feasible region that is not closed. We define $\hat{U}, \hat{V}$ such that $U,V$ are perspective functions of $\hat{U}, \hat{V}$ respectively, i.e., 
$$ \hat{U}(t \bx, t, a; \theta) \coloneqq t \cdot U(\bx, a; \theta),\quad  \hat{V}(t  \bx, t, a; \theta) \coloneqq t \cdot V(\bx, a; \theta). $$ 
Recall that perspective transformation preserves the convexity~\cite{boyd2004convex}. 
Since $U$ is concave in $\cX$, the new function $\hat{U}$ is also concave w.r.t. $t \cdot \bx$ and $t$. Similarly, $\hat{V}$ is linear w.r.t. $t \cdot \bx$ and $t$. 
We can then obtain the following convex optimization with variables 
$ \{( \pi(a; \theta), \bz^{a, \theta}) : \theta \in \Theta, a \in \cA\} $. 
For ease of description, we will also refer to the solution of this optimization problem as a mechanism $M$. 
\begin{lp}\label{lp:transformed}
    \maxi{\sum_{\theta \in \Theta} f(\theta) \sum_{a \in \cA}  \hat{U}(\bz^{a, \theta}, \pi(a; \theta), a; \theta)  }
\st 
\qcon{\sum_{a \in \cA}  \hat{V}(\bz^{a, \theta}, \pi(a; \theta), a; \theta) \geq \sum_{a \in \cA} \max_{a'\in \cA} \hat{V}(\bz^{a, \theta'}, \pi(a; \theta'), a'; \theta) }{\theta, \theta'\in \Theta}
    \qcon{(\pi(a; \theta), \bz^{a, \theta}) \in \cP }{a \in \cA, \theta \in \Theta} 
    \qcon{\sum_{a\in \cA} \pi(a; \theta) = 1; \quad \sum_{a\in \cA} \bz^{a,\theta} \in \cC_{\theta} }{\theta \in \Theta}
\end{lp}
\noindent
In CP \eqref{lp:transformed}, the first set of constraints is convex with respect to the variables $ \{( \pi(a; \theta), \bz^{a, \theta}) : \theta \in \Theta, a \in \cA\} $. 
The third set of constraints is linear. In Lemma \ref{lm:convex-z}, we show that the second set of constraint is convex. The last set of constraints is convex as $\cC_{\theta}$ is convex. In Lemma \ref{lm:lp-equivalence}, we show this program admits the same supremum as OP \eqref{lp:original} and their solutions can be converted into one another by construction. The proof of Lemma \ref{lm:lp-equivalence} is in Appendix \ref{appendix_sec:thm_computability}.

\begin{lemma} \label{lm:lp-equivalence}
$\sup (\textup{CP} \eqref{lp:transformed}) = \sup (\textup{OP} \eqref{lp:original})$.  That is, any mechanism $ \{( \pi(a; \theta), \bz^{a, \theta}) : \theta \in \Theta, a \in \cA\}$ in CP \eqref{lp:transformed} achieves the same amount of utility as some mechanism $ \{( \pi(a; \theta), \bx^{a, \theta}) : \theta \in \Theta, a \in \cA\}$ in OP \eqref{lp:original}, where $\bx^{a, \theta} = \begin{cases}\bz^{a, \theta} / \pi(a; \theta) & \pi(a; \theta) \neq 0 \\ 0 & \pi(a; \theta) = 0 \end{cases}$, for each $(a, \theta)$. %
\end{lemma}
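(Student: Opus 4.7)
The plan is to establish the supremum equality by constructing explicit, value-preserving conversions between feasible solutions of OP \eqref{lp:original} and CP \eqref{lp:transformed} in both directions. The central tool is the perspective-function identity: for $\lambda \in (0,1]$ and $\bx \in \cX$,
\[
\hat{U}(\lambda \bx, \lambda, a; \theta) = \lambda \cdot U(\bx, a; \theta), \quad \hat{V}(\lambda \bx, \lambda, a; \theta) = \lambda \cdot V(\bx, a; \theta),
\]
extended to $(0,0)$ with value $0$. By the definition of $\cP$, the constraint $(\pi, \bz) \in \cP$ encodes exactly the disjunction ``$\pi > 0$ and $\bz/\pi \in \cX$'' or ``$\pi = 0$ and $\bz = 0$,'' which pinpoints when each identity applies.

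For the forward direction ($\sup \textup{OP} \leq \sup \textup{CP}$), I would take any feasible $\{(\pi(a;\theta), \bx^{a,\theta})\}$ of OP \eqref{lp:original} and set $\bz^{a,\theta} \coloneqq \pi(a;\theta)\,\bx^{a,\theta}$. Membership $(\pi(a;\theta), \bz^{a,\theta}) \in \cP$ is then immediate, and the supplemental constraint transfers verbatim since $\sum_a \bz^{a,\theta} = \sum_a \pi(a;\theta)\,\bx^{a,\theta} \in \cC_\theta$. Plugging the perspective identity into the objective term by term converts the OP objective to the CP objective. The IC constraints convert the same way once one notes that the nonnegative scalar $\pi(a;\theta')$ commutes with $\max_{a'}$, giving $\max_{a'} \hat{V}(\bz^{a,\theta'}, \pi(a;\theta'), a'; \theta) = \pi(a;\theta') \max_{a'} V(\bx^{a,\theta'}, a'; \theta)$, with both sides equal to $0$ when $\pi(a;\theta') = 0$.

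For the reverse direction ($\sup \textup{CP} \leq \sup \textup{OP}$), I would take any feasible CP solution and define $\bx^{a,\theta} = \bz^{a,\theta}/\pi(a;\theta)$ when $\pi(a;\theta) > 0$; when $\pi(a;\theta) = 0$ the $\cP$-membership forces $\bz^{a,\theta} = 0$, and I would pick $\bx^{a,\theta}$ to be any fixed reference point of $\cX$ (the lemma's choice of the zero vector is valid whenever $\mathbf{0} \in \cX$; otherwise any point of $\cX$ serves identically for the argument). The $\cP$-description gives $\bx^{a,\theta} \in \cX$ on the $\pi > 0$ support, and every OP term carries an explicit $\pi(a;\theta)$ factor, so the arbitrary placeholder at $\pi(a;\theta) = 0$ does not affect the objective or any constraint. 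The same perspective identities then translate the CP objective, IC constraints, and supplemental constraint back into their OP counterparts.

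The main obstacle I anticipate is the boundary case $\pi(a;\theta) = 0$, where the division $\bz/\pi$ is undefined and the perspective function must be treated by its continuous extension. This is resolved cleanly by reading off from $\cP$ that $\bz = 0$ necessarily whenever $\pi = 0$, and the fact that $\hat{U}(\mathbf{0},0,\cdot) = \hat{V}(\mathbf{0},0,\cdot) = 0$ makes these terms contribute nothing on either side. Once this case is dispatched, the two conversions are inverses of each other on the $\pi > 0$ support, producing a value-preserving correspondence between the feasible regions of OP and CP, which yields $\sup(\textup{CP}) = \sup(\textup{OP})$.
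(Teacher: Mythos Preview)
Your proposal is correct and follows essentially the same approach as the paper: establish both inequalities by explicit value-preserving conversions $\bz^{a,\theta} = \pi(a;\theta)\,\bx^{a,\theta}$ in one direction and $\bx^{a,\theta} = \bz^{a,\theta}/\pi(a;\theta)$ (with a placeholder when $\pi=0$) in the other, using the perspective identities and the fact that $(\pi,\bz)\in\cP$ forces $\bz=\mathbf{0}$ whenever $\pi=0$. Your treatment is in fact slightly more careful than the paper's in noting that $\mathbf{0}\in\cX$ need not hold and that any fixed reference point of $\cX$ would do for the placeholder.
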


By Lemma \ref{lm:lp-equivalence}, the optimal mechanism can be recovered from the optimal solution to the convex program (CP) \eqref{lp:transformed}, except for a technical caveat --- the feasible region CP \eqref{lp:transformed} may not be a closed set. Thus when the optimal solution of CP \eqref{lp:transformed} lies on the boundary not contained in the feasible region, the exact optimal coordination mechanism may not exist.
This makes it difficult to apply an off-the-shelf convex program solver. 
We address this issue through a careful analysis below.  

\vspace{2mm}
\noindent \textbf{The Case with Bounded Action Set.}
For simplicity, we start by assuming that the set $\cX$ is bounded. 
In this case, we show via Lemmas \ref{lm:convex-z} and the following Lemma \ref{lm:compact-z} (proved in Appendix \ref{appendix_sec:thm_computability}) that the set $\cP$ in the second set of constraints is  convex and closed, and therefore the solution to CP \eqref{lp:transformed} is the optimal coordination mechanism. Moreover, we can convert the optimal solution from CP \eqref{lp:transformed} to the original mechanism according to Lemma \ref{lm:lp-equivalence}.

\begin{lemma}\label{lm:compact-z}
If $\cX$ is a compact set, then $\cP := \{ ( \lambda, \lambda\bx) : \bx \in \cX, \lambda \in [0,1] \}$ is a compact set.
\end{lemma}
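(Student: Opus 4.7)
The plan is to realize $\cP$ as the continuous image of a compact set and then invoke the standard fact that continuous images of compact sets are compact. Concretely, define the map
\[
\phi : [0,1] \times \cX \to \RR^{d+1}, \qquad \phi(\lambda, \bx) = (\lambda,\, \lambda \bx).
\]
By construction, the range of $\phi$ is exactly $\cP$, so it suffices to argue (i) that $[0,1] \times \cX$ is compact and (ii) that $\phi$ is continuous.

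For (i), the product $[0,1] \times \cX$ is compact because it is a finite product of compact sets, using that $\cX \subseteq \RR^d$ is compact by hypothesis and $[0,1] \subseteq \RR$ is compact. For (ii), both coordinate functions of $\phi$ (the projection $\lambda$, and the coordinate-wise product $\lambda \bx$) are polynomial in $(\lambda, \bx)$, hence continuous. Therefore $\cP = \phi\bigl([0,1] \times \cX\bigr)$ is the continuous image of a compact set, and is thus compact.

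I do not anticipate any real obstacle; the only things to verify carefully are that the image of $\phi$ really equals $\cP$ (immediate from the defining expression of $\cP$) and that the ambient space is a Hausdorff Euclidean space, so that compactness coincides with closed-and-bounded in the usual sense. One could alternatively give a direct argument: boundedness follows because $\|\lambda \bx\| \le \max_{\bx \in \cX}\|\bx\| < \infty$ and $\lambda \in [0,1]$, while closedness follows by taking a convergent sequence $(\lambda_n, \lambda_n \bx_n) \to (\lambda^*, \bz^*)$, passing to a subsequence on which $\bx_n \to \bx^* \in \cX$ (using compactness of $\cX$), and observing that $\bz^* = \lambda^* \bx^*$. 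But the one-line argument via the continuous image of a compact set is cleaner and is what I would present.
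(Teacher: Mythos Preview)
Your proposal is correct and follows essentially the same approach as the paper: the paper also defines the compact product $[0,1]\times\cX$ and observes that $\cP$ is its image under the continuous map $(\lambda,\bx)\mapsto(\lambda,\lambda\bx)$. Your write-up is slightly more detailed, but the argument is identical in substance.
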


\vspace{2mm}
\noindent \textbf{The Case with Unbounded Action Set under Regularity Assumption.}
We now consider the case when $\cX$ is unbounded (but closed). 
The set $\cP$ is not a closed set and we cannot directly solve CP \eqref{lp:transformed}. 
We say that the problem instance is \emph{regular} if CP \eqref{lp:transformed} admits a maximum,
and it is \emph{irregular} otherwise.

For a regular game instance, instead of solving CP \eqref{lp:transformed}, we can without loss of generality solve its ``closure'' CP \eqref{lp:transformed-closure}, which is simply CP \eqref{lp:transformed} with its  second constraint relaxed to the closure $\bar{\cP}$ of the original (possibly not closed) $\cP$. Note that $\bar{\cP}$ is convex by the fact that the closure of a convex set must be convex. 
According to the definition of regularity above, the solution to CP~\eqref{lp:transformed-closure} is exactly the maximum of CP \eqref{lp:transformed}. Hence, we can again convert this solution to the original mechanism according to Lemma \ref{lm:lp-equivalence}.
\begin{lp}\label{lp:transformed-closure}
    \maxi{\sum_{\theta \in \Theta} f(\theta) \sum_{a \in \cA}  \hat{U}(\bz^{a, \theta}, \pi(a; \theta), a; \theta)  }
\st 
\qcon{\sum_{a \in \cA}  \hat{V}(\bz^{a, \theta}, \pi(a; \theta), a; \theta) \geq \sum_{a \in \cA}   \max_{a'\in \cA} \hat{V}(\bz^{a, \theta'}, \pi(a; \theta'), a'; \theta ) }{\theta, \theta'\in \Theta}
    \qcon{(\pi(a; \theta), \bz^{a, \theta}) \in \bar{\cP} }{a \in \cA, \theta \in \Theta} 
    \qcon{\sum_{a\in \cA} \pi(a; \theta) = 1; \quad \sum_{a\in \cA} \bz^{a,\theta} \in \cC_{\theta} }{\theta \in \Theta}
\end{lp}

\noindent\textbf{Addressing  Irregularity.} 
However, not all problem instances are regular. 
A concrete example of irregularity can be found   in \cite{castiglioni2022designing}, where the optimal menu of randomized contracts has to use an infinitely large payment. This is precisely an irregular problem instance under our definition. 
We denote $\cM$ as the feasible region of CP~\eqref{lp:transformed},  $\bar{\cM}$ as the feasible region of CP~\eqref{lp:transformed-closure} and recall our definition above: if the solution to CP \eqref{lp:transformed-closure} $M \not\in \cM$, the problem instance is irregular.  For the sake of convenience, we will also say that a pair $(a, \theta)$ is \emph{irregular} in a mechanism if $\pi(a; \theta)=0$ but $\bz^{a, \theta}\neq 0$; and it is \emph{regular} otherwise. 
In Lemma \ref{lm:unbounded-z}, we provide a characterization of the unattainable boundary of $\cP$, with which we can identify irregular cases and efficiently resolve them through an algorithmic approach. 
The proof of Lemma \ref{lm:unbounded-z} can be found in Appendix \ref{appendix_sec:thm_computability}. 

\begin{lemma}\label{lm:unbounded-z}
Suppose that $M$ is a feasible mechanism in CP \eqref{lp:transformed-closure}. Then $M$ is feasible in CP \eqref{lp:transformed} if and only if $(a,\theta)$ is regular in $M$ for all $a\in \cA, \theta\in \Theta$.

\end{lemma}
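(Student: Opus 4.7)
\begin{proofsketch}
The plan is to derive the lemma from a clean pointwise characterization of $\cP$ versus its closure $\bar{\cP}$, and then observe that the regularity condition is precisely the condition distinguishing the two. All the real content sits in understanding what points get added when we close $\cP$.

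First, I would unpack the definitions. From $\cP = \{(\lambda, \lambda \bx) : \bx \in \cX, \lambda \in [0,1]\}$ one sees directly that $(\lambda, \bz) \in \cP$ holds if and only if either (i) $\lambda \in (0,1]$ and $\bz / \lambda \in \cX$, or (ii) $\lambda = 0$ and $\bz = 0$ (using nonemptiness of $\cX$). In particular, whenever $\pi(a;\theta) = 0$, membership in $\cP$ forces $\bz^{a,\theta} = 0$, which is exactly the regularity condition on $(a,\theta)$. This immediately gives the ``only if'' direction: a mechanism $M$ feasible in CP~\eqref{lp:transformed} satisfies $(\pi(a;\theta), \bz^{a,\theta}) \in \cP$ for every $(a,\theta)$, hence every $(a,\theta)$ is regular.

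For the ``if'' direction, the key step is to show that every point of $\bar{\cP}$ with $\lambda > 0$ is actually already in $\cP$; the closure only adds points of the form $(0, \bz)$ with $\bz \neq 0$ (corresponding to recession directions of $\cX$ when $\cX$ is unbounded). Take any $(\lambda, \bz) \in \bar{\cP}$ with $\lambda > 0$ and a sequence $(\lambda_n, \lambda_n \bx_n) \to (\lambda, \bz)$ with $\bx_n \in \cX$ and $\lambda_n \in [0,1]$. Since $\lambda_n \to \lambda > 0$, eventually $\lambda_n > 0$, so $\bx_n = (\lambda_n \bx_n)/\lambda_n \to \bz/\lambda$. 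Closedness of $\cX$ then gives $\bz/\lambda \in \cX$, so $(\lambda, \bz) \in \cP$. Consequently $\bar{\cP} \setminus \cP \subseteq \{(0, \bz) : \bz \neq 0\}$.

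Now assume $M$ is feasible in CP~\eqref{lp:transformed-closure} and regular at every $(a,\theta)$. Fix any $(a, \theta)$. If $\pi(a;\theta) > 0$, then by the preceding paragraph $(\pi(a;\theta), \bz^{a,\theta}) \in \bar{\cP}$ already lies in $\cP$. If $\pi(a;\theta) = 0$, then regularity gives $\bz^{a,\theta} = 0$, so $(\pi(a;\theta), \bz^{a,\theta}) = (0,0) \in \cP$. Thus $M$ satisfies the second constraint of CP~\eqref{lp:transformed}, and all the other constraints are identical in the two programs, so $M$ is feasible in CP~\eqref{lp:transformed}. The main (minor) obstacle is the closedness argument used to rule out ``new'' boundary points at $\lambda > 0$; the rest is bookkeeping on the two cases $\pi(a;\theta) = 0$ and $\pi(a;\theta) > 0$.
\end{proofsketch}
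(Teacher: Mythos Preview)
Your proposal is correct and follows essentially the same approach as the paper: both reduce the lemma to the pointwise claim that $\bar{\cP}\setminus\cP \subseteq \{(0,\bz):\bz\neq 0\}$, established via a convergent sequence and the closedness of $\cX$, and then dispatch the two cases $\pi(a;\theta)>0$ and $\pi(a;\theta)=0$. Your treatment is arguably a bit cleaner in explicitly noting that eventually $\lambda_n>0$ before dividing, but otherwise the arguments coincide.
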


To solve irregular instances, we now show that, for any arbitrarily small $\epsilon > 0$, we can obtain an (additively) $ \epsilon$-optimal mechanism in polynomial time using Algorithm~\ref{alg:find-neighbour}, when the problem's optimal objective value is bounded (for situations with unbounded optimal utility, our constructed mechanism will also have unbounded utility).\footnote{Typical problems, such as contract design, are guaranteed to have bounded optimal utility even when the principal strategy is unbounded (due to the individual rationality constraint). } 
The high level idea is to solve CP \eqref{lp:transformed-closure} first and then add a slight perturbation to the optimal solution if it is on the unattainable boundary $\bar{\cP} \setminus \cP$. This is based on two established facts: (1) the concavity of the objective in CP \eqref{lp:transformed-closure}; (2) the convexity of constraints in CP \eqref{lp:transformed-closure}. Consequently,  we can construct a new mechanism $\bar{M} = (1-\epsilon) M^* +  \epsilon  M'$ for any arbitrarily small $\epsilon>0$, which is a convex combination of the optimal mechanism $M^*$ to CP~\eqref{lp:transformed-closure} and a coordination mechanism $M'$ that lies within $\cP$ of CP~\eqref{lp:transformed}.
$\bar{M}$ attains utility at least $(1-\epsilon)$ fraction of $M^*$ and lies within $\cP$ of CP~\eqref{lp:transformed} by convexity. Since $\bar{M}$ is feasible in CP~\eqref{lp:transformed}, we can convert it to {a solution to OP~\eqref{lp:original}} according to Lemma~\ref{lm:lp-equivalence}.

Hence, the remaining challenge is to find the mechanism $M'$, which lies within $\cP$.
The task is akin to finding a relative interior point in the feasible region of CP \eqref{lp:transformed-closure} 
(but not exactly the same as a feasible mechanism in CP \eqref{lp:transformed-closure} might be on the boundary $\pi(a;\theta)=0$ and $\bz^{a,\theta}=0$, e.g., when the action $a$ of type $\theta$ is strictly dominated by other actions). 
Our approach relies on the observation of Lemma \ref{lm:unbounded-z}
and proceeds by enforcing a margin on the constraints $\pi(a,\theta) \ge 0$ to nudge solutions on the boundary $\bar{\cP} \setminus \cP$ towards $\cP$.
Specifically, we will use CP~\eqref{lp:transformed-interior} to find out the maximum enforceable margin for each action-type pair $\pi(a;\theta)$.
More efficiently, leveraging the special problem structure, we can avoid having to always solve CP~\eqref{lp:transformed-interior} for all the action-type pairs; the approach is presented in Algorithm~\ref{alg:find-neighbour}.

\begin{lp}\label{lp:transformed-interior}
    \maxi{\pi(\tilde{a}; \tilde{\theta})  }
\st 
\con{\text{the same constraints as CP~\eqref{lp:transformed-closure}}}
\end{lp}

\begin{algorithm}[t]
\caption{Computing an $\epsilon$-optimal solution for CP \eqref{lp:transformed} }
\label{alg:find-neighbour}
\begin{algorithmic}[1] %
\STATE Solve for the optimal mechanism $M^*$ in CP \eqref{lp:transformed-closure};
\STATE $\cS \gets \emptyset$ and  $\bar{M} \gets  {M}^*$;
\WHILE[i.e., $ \pi(\tilde{a};\tilde{\theta})=0$ and $\bz^{\tilde{a},\tilde{\theta}} \neq \zero$]{exists $(\tilde{a}, \tilde{\theta})$ irregular in $\bar{M}$ }

\STATE $M^{\tilde{a}, \tilde{\theta}} \gets$ an optimal mechanism in CP \eqref{lp:transformed-interior}; 
\label{ln:M-gets}

\STATE $\cS \gets \cS \cup \{(\tilde{a},\tilde{\theta})\}$;

\STATE $\bar{M} \gets  (1-\epsilon) {M}^* + \frac{\epsilon}{|\cS|} \sum_{(a, \theta)\in \cS} M^{a, \theta}$;
\label{ln:bar-M-gets}
\ENDWHILE
\RETURN $\bar{M}$. 
\end{algorithmic}
\end{algorithm}

\noindent\textbf{Correctness and Time Complexity of Algorithm \ref{alg:find-neighbour}. }
Clearly, if Algorithm \ref{alg:find-neighbour} terminates, the solution must be a feasible $(1-\epsilon)$-approximate solution, as desired; indeed, the termination condition, that the output mechanism $\bar{M}$ contains no irregular pair $(\theta, a)$, implies that $\bar{M}$ is feasible in CP~\eqref{lp:transformed} by Lemma \ref{lm:unbounded-z}. 
The fact that the iterative process of Algorithm~\ref{alg:find-neighbour} always terminates can be verified by the fact that each action-type pair appears at most once in the iterations, which we demonstrate next.
This observation also implies that Algorithm~\ref{alg:find-neighbour} terminates in at most $|\cA||\Theta|$ many iterations.

Suppose that $(\tilde{a},\tilde{\theta})$ appears in some iteration, and consider the following cases with respect to the optimal mechanism  $M^{\tilde{a},\tilde{\theta}}$ at Line~\ref{ln:M-gets}:
\begin{itemize}
\item[1.] $\pi(\tilde{a};\tilde{\theta}) > 0$ in $M^{\tilde{a},\tilde{\theta}}$.
In this case, we will have $\pi(\tilde{a};\tilde{\theta}) > 0$ in  $\bar{M}$ in the remainder of the process as $M^{\tilde{a},\tilde{\theta}}$ always gets a positive weight at Line~\ref{ln:bar-M-gets}. Hence, $(\tilde{a},\tilde{\theta})$ remains regular and will not appear again in the subsequent iterations.

\item[2.] $\pi(\tilde{a};\tilde{\theta}) = 0$ in $M^{\tilde{a},\tilde{\theta}}$.
In this case, we argue that it must be $\bz^{\tilde{a},\tilde{\theta}} = \zero$, so $(\tilde{a},\tilde{\theta})$ should actually not be selected in the previous iterations, meaning that this case is not possible.
To see that $\bz^{\tilde{a},\tilde{\theta}} = \zero$, suppose for the sake of contradiction that $\bz^{\tilde{a},\tilde{\theta}} \neq \zero$.
Since $M^{\tilde{a},\tilde{\theta}}$ is a feasible solution to CP~\eqref{lp:transformed-interior}, we have $p = \left(\pi(\tilde{a};\tilde{\theta}), \bz^{\tilde{a},\tilde{\theta}} \right) \in \bar{\cP}$, so according to the definition of closure there exists a point $p' = (\lambda, \bz) \in \cP$ in every neighborhood of $p$.
Now that $\bz^{\tilde{a},\tilde{\theta}} \neq \zero$, there must be some neighbourhood of $p$ that does not contain $(0, \zero)$, which then implies the existence of $p' = (\lambda, \bz) \in \cP \setminus \{(0, \zero)\}$. 
According to the definition of $\cP$, we have $\lambda > 0$, which contradicts the assumption that $\pi(\tilde{a};\tilde{\theta}) = 0$ in the optimal solution to CP~\eqref{lp:transformed-interior}.
\end{itemize}

\noindent\textbf{Final Remarks.}
While the proof of Theorem \ref{thm:computability} is involved, the main challenge in the analysis lies in handling the irregular situations in which the exact optimal solution may not be attainable. Though this is important for theoretical guarantee, we expect such corner situations not to happen often. For typical regular instances, a practical implementation of our algorithm could be very efficient as it suffices to solve CP~\eqref{lp:transformed-closure}. 
If there is no irregular $(a, \theta)$ pair in the solution, such that $\pi(a; \theta) = 0$ and $\bz^{a,\theta}\neq 0$, then one can directly convert the solution into a feasible coordination mechanism according to Lemma \eqref{lm:lp-equivalence}. Otherwise, to find the $\epsilon$-optimal mechanism, our algorithm would just solve a few more convex programs unless the instance is extremely degenerated.

\section{Notable Special Cases of  Generalized Principal-Agency}
\label{sec:special-cases}
This section  discusses several notable applications of our generalized principal-agent model. Our goal here is to establish connections among many familiar (and important) models in the literature. This also offers us an opportunity to overview and compare many previous algorithmic results, which were developed separately by different research threads that may appear irrelevant at first glance but could now be examined under the same umbrella. In order to paint a complete picture of the algorithmic landscape of our problem and its special cases,  we also establish new computational results along the way, which completes the last few missing pieces of the whole picture.

\vspace{2mm}
\noindent \textbf{Coordination Mechanisms in Restricted Design Spaces. } Our Theorem  \ref{thm:computability} develops an efficient algorithm for solving the globally optimal coordination mechanism in the general principal-agent problem. This readily implies a unified polynomial time algorithm  for finding the globally optimal mechanisms  in  all the special cases. Unsurprisingly, the optimal mechanisms for each special setting have  been studied in previous literature, though under different terms (e.g., contracts, signaling schemes, Stackelberg equilibrium, etc.). What is interesting though is that previous studies have paid more attention to the design in \emph{restricted} design space, due to either practical motivations or historical reason as some of these restricted mechanisms are more well-motivated in reality. 

Specifically, in previous algorithmic literature, two natural restrictions to generic  coordination mechanisms $\pi(\bx, a; \theta)$ have been widely studied:
\begin{itemize}
    \item \textbf{Action-independent Coordination Mechanisms}, with succinct representation $ \{ \bx^{\theta} \}_{\theta \in \Theta}$.  
    
    These mechanisms are allowed to use different strategies for different agent types, but the agent will not receive an action recommendation in this coordination mechanism --- i.e., the generic strategy   degenerates from $\pi(\bx, a; \theta)$ to $\pi(\bx; \theta)$. Due to the linearity of the agent's utility function and the restriction   that the agent does not receive any action recommendation, this means that the agent will induce the same utility under ``aggregated'' strategy $\bx^{\theta} = \sum_{\bx \in \cX} \pi(\bx; \theta) \cdot \bx $; consequently, the concavity of the principal's utility function implies that the resultant succinct mechanism of form  $ \{ \bx^{\theta} \}_{\theta \in \Theta}$ is no worse hence can always achieve the optimal principal utility. Note that this observation holds even when the supplemental constraints of form $\sum_{x\in \cX}\sum_{a\in \cA} \pi(\bx, a;\theta) \cdot \bx \in \cC_{\theta}$ present. 
    
    \item \textbf{Type-independent Coordination Mechanisms}, with succinct representation $\bx$ without supplemental constraints and  $ \{ \pi(\bx) \}_{\bx \in \cX}$ with  supplemental constraints. 
    
    These mechanisms  have to use the same recommendation for every agent type hence the  generic (possibly not IC) coordination mechanism    degenerates from $\pi(\bx, a; \theta)$ to $\pi(\bx, a)$.\footnote{Note that $\pi(\bx, a)$ is not incentive compatible in general since the recommendation of action $a$ is likely not obedient for every receiver type $\theta$. However, it is well-defined as a generic coordination mechanism.} In this case, no type reporting is needed since the strategy is not discriminative. Moreover, in such a mechanism, the distribution of $\bx$  conditioned on the receiver's action recommendation $a$, i.e., $\Pr(\bx|a)$ induced by $\pi(\bx,a)$, is unrelated to   $\Pr(\bx|a')$, hence the principal can simply maximize each distribution $\Pr(\bx|a)$  separately and then pick the most profitable one among them. This reduces the problem to finding a single distribution $ \{ \pi(\bx) \}_{\bx \in \cX}$. 
    
    The above mechanism $ \{ \pi(\bx) \}_{\bx \in \cX}$ can be further simplified when there is no supplemental constraint in the form of Equation  \eqref{eq:supplimental-const}. In this case, the principal can simply examine her utility under every $\bx$ in the support of $\pi(\cdot)$ and simply adopt the single strategy $\bx^*$ that achieves her best utility. However, this reduction is not valid when there are supplemental constraints because the single $\bx^*$ may no longer be feasible under supplemental constraints.   
    These differences  will be reflected in our different special cases below. 
\end{itemize}

Next, we will elaborate on a few notable special cases of our general model. For each case, we briefly survey known results or, in some cases, prove new computational results for finding restricted or unrestricted coordination mechanisms. We defer detailed descriptions of every specific model (e.g., agent's utility function, action space, etc.) to Appendix \ref{appendix:additional_special}.

\subsection{Bayesian Contract Design: A Case with Unbounded Strategy Space}
One special case of our generalized principal-agent problem is  the Bayesian contract design problem, as populated in recent works \cite{alon2021contracts,guruganesh2021contracts}. 
The three classes of coordination mechanisms above have all been studied in the above contract design problem, known under the following terminologies:
\begin{itemize}
    \item   \emph{Menu of randomized contract} \cite{castiglioni2022designing} which corresponds to our general succinct coordination mechanism;
    \item \emph{Menu of deterministic contracts} \cite{guruganesh2021contracts} which corresponds to action-independent coordination mechanism in succinct form, i.e., $\{ \bx^{\theta} \} $, which is   without loss of generality as mentioned above;  
    \item \emph{Single contract} \cite{xiao2020optimal,alon2021contracts} which corresponds to  type-independent coordination mechanism in succinct form, i.e., a single contract $\bx^*$, which is without loss of generality.    
\end{itemize} The following Table \ref{tab:hardness-results1} summarizes the algorithmics of each mechanism.    

\begin{table}[tbh]
    \footnotesize
    \centering
    \begin{tabular}{|c|c|c|c|} \hline
         & \multirow{2}{*}{\makecell{Single Contract}} & Menu of    & Menu of  \\
        &  &  Deterministic Contracts   &  Randomized Contracts
         \\ \hline
      \multirow{2}{*}{\makecell{When   type only affects \\ action cost }}   &  APX-Hard  &  APX-Hard  & Polynomial \\
          &  \cite{xiao2020optimal} &  \makecell{\cite{xiao2020optimal} \\ \cite{castiglioni2022designing}} & \makecell{\cite{castiglioni2022designing} \\ $[$Our Thm. \ref{thm:computability} $]$}  \\ \hline
      \multirow{2}{*}{\makecell{When   type only affects \\ outcome distribution}}   &  APX-Hard &  APX-Hard  & Polynomial \\ 
       &  \cite{guruganesh2021contracts} &  \makecell{\cite{guruganesh2021contracts} \\ \cite{castiglioni2022designing}} &  \makecell{\cite{castiglioni2022designing} \\ $[$Our Thm. \ref{thm:computability}  $]$}\\
      \hline
    \end{tabular}
    \caption{Complexity landscape for the case of optimal Bayesian contract design. %
    } 
    \label{tab:hardness-results1}
\end{table}
\vspace{-6mm}
\begin{remark}
We remark that the optimal positive results in Table \ref{tab:hardness-results1} is the case of menu of randomized contracts. As mentioned previously, this positive result has been proved in \cite{castiglioni2022designing} via a  different and more complex approach relying on the ellipsoid method and polytope analysis. Our algorithm in Theorem \ref{thm:computability} is simpler, more practical yet more generally applicable. 
\end{remark}

\subsection{Bayesian Persuasion: A Case with supplemental Constraints}
\label{sec:info-design}
A second special case of our generalized principal-agent problem is  the (Bayesian) persuasion  of  a receiver with private  type $\theta$ drawn from discrete distribution $f$ supported on set $\Theta$, as studied in  \cite{kolotilin2017persuasion,castiglioni2022bayesian}. 
It turns out that two out of the three coordination mechanisms above are studied in the literature. 
\begin{enumerate}
    \item \emph{Private signaling} \cite{kolotilin2017persuasion,castiglioni2022bayesian} which corresponds to our general succinct coordination mechanism;
    \item \emph{Public signaling} \cite{kolotilin2017persuasion,castiglioni2022bayesian} which requires the sender to use the same signaling scheme for every receiver and corresponds to our type-independent coordination mechanism.
\end{enumerate}
Why the action-independent coordination  was not studied? It turns out that, in persuasion problems, all action-independent coordination mechanisms  are equivalent to  the trivial signaling scheme of revealing no information. This is reflected in our descriptions of action-independent coordination at the beginning of this section, which can be represented as $\{ \bx^{\theta} \}_{\theta \in \Theta}$. The only $\bx^{\theta}$ that satisfies Bayes plausibility is $\bx^{\theta} = \mu$. On an intuitive level, this is also easy to see since if the coordination mechanism does not make any recommendation to the receiver, then the receiver has no way to receive any information regarding the state of nature $\omega$ hence any action-independent coordination scheme must be equivalent to revealing no information. The following Table \ref{tab:hardness-results3} summarizes the algorithmics of each mechanism.

\begin{table}[!h]
    \footnotesize
    \centering
   \begin{tabular}{|c|c|c|} \hline
        \makecell{Type-independent Coordination  \\ (Public Signaling)} & Action-independent Coordination &  \makecell{Generic Coordination \\ (Private Signaling) }  \\ \hline
      \makecell{APX-Hard \\ \cite{castiglioni2022bayesian}} &  Trivial  & \makecell{ Polynomial \\ \cite{castiglioni2022bayesian}, $[$Theorem \ref{thm:computability}$]$ }
      \\\hline
    \end{tabular}
    \caption{Complexity landscape for the case of Bayesian   persuasion of a receiver with private type.} 
    \label{tab:hardness-results3}
\end{table}

\vspace{-7mm}
\begin{remark}
Two other problems are closely related to, yet subtly different from, the above problem of information design for a receiver with private type. The first is about selling information to a receiver, as widely studied recently \cite{babaioff2012optimal,bergemann2018design,chen2020selling,yang2022selling}. This turns out to also be a special case of our general model, as elaborated in Appendix \ref{append:sell-info}. Another problem is to persuade multiple receivers from set $\Theta$ with \emph{no externality}, as studied in \cite{arieli2019private,babichenko2017algorithmic,dughmi2017algorithmic}. Our general model can capture their special situation when the sender's utility is additive across receivers, but cannot accommodate  utilities beyond additivity (e..g,  supermodular or submodular sender utilities \cite{arieli2019private}). More detailed discussions on this can be found in Appendix \ref{appendix_sec:related_work}. %
\end{remark}

\subsection{Bayesian Stackelberg Game: A Case with Bounded Strategy Space}
Another special case of the generalized principal-agent problem is the extensively studied model of Bayesian Stackelberg games \cite{stackelberg1934marktform,paruchuri2008playing,conitzer2006computing}. 
Interestingly, unlike the diversity in contract design,  only   the type-independent coordination has been studied in  the Bayesian Stackelberg games, often known as the Bayesian Stackelberg equilibrium. Hence, to paint the entire picture,  we introduce the other two classes    below.     
\begin{itemize}
    \item \emph{Bayesian Stackelberg equilibrium} \cite{paruchuri2008playing,conitzer2006computing}  which corresponds to  type-independent coordination mechanism in succinct form, i.e., a single mixed strategy $\bx^* \in \Delta^d$ that maximizes the leader's expected utility;     
    \item   \emph{Stackelberg Policy}  which corresponds to action-independent coordination mechanism in succinct form, i.e.,  $\{ \bx^{\theta} \} $, which prescribes a mixed strategy for each follower type $\theta$.    
    \item \emph{Stackelberg Mixed Policy} $\{ (\pi(a;\theta), \bx^{a,\theta}) \}_{\theta \in \Theta, a \in \cA} $  
    which corresponds to our general succinct coordination mechanism. Here, the leader commits to the following ``mixed policy'' by first eliciting follower type $\theta$ and then  playing mixed strategy $\bx^{a,\theta}$ with probability $\pi(a;\theta)$, under which the follower type $\theta$ should be incentivized to play $a$ as his best response.  
\end{itemize} 

Since the complexity of the Stackelberg  policy and Stackelberg mixed policy are not studied previously, we resolve their complexity in this paper by proving the following Theorem \ref{thm:Optimal_IC_leader_strategy} (proof deferred to Appendix \ref{append:stackelberg-hard}). This leads to a fully painted algorithmic landscape for Stackelberg games, as shown in  Table \ref{tab:hardness-results2}.    
\begin{theorem}\label{thm:Optimal_IC_leader_strategy} 
For any constant $\epsilon > 0$, it is NP-Hard to compute $\frac{1}{|\Theta|^{1-\epsilon}}$-approximation incentive compatible leader strategy in a Stackelberg game with Bayesian followers. 
\end{theorem}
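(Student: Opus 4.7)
The plan is to establish a gap-preserving reduction from Maximum Independent Set (MIS), invoking the classical inapproximability result that MIS on $n$-vertex graphs is NP-hard to approximate within factor $n^{1-\epsilon}$ for any constant $\epsilon > 0$. Given an MIS instance $G = (V,E)$ with $|V|=n$, I would construct a Bayesian Stackelberg game with $|\Theta| = n$ follower types, one per vertex, equipped with the uniform prior $f(\theta_i)=1/n$. The goal is to engineer a two-way correspondence: every independent set $I \subseteq V$ induces an IC Stackelberg policy of expected leader utility $|I|/n$, and conversely every IC Stackelberg policy of expected utility $U$ yields an independent set of size at least $Un$ in $G$. Any polynomial-time $\tfrac{1}{|\Theta|^{1-\epsilon}}$-approximation for computing optimal IC leader strategies would then translate into an $n^{1-\epsilon}$-approximation for MIS, contradicting NP-hardness.

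Concretely, I would give each follower type $\theta_i$ a ``productive'' action $a_i^+$ yielding the leader unit utility and a ``safe'' action $a_i^-$ yielding $0$. The leader's mixed strategy lives in a simplex over a pure action set that contains one coordinate per vertex of $G$. Follower payoffs are calibrated so that type $\theta_i$ best-responds with $a_i^+$ under a strategy $\bx^{\theta_i}$ precisely when $\bx^{\theta_i}$ concentrates sufficient mass on coordinate $v_i$. The edge relation is then encoded into follower utilities so that whenever $(v_i,v_j) \in E$, every strategy $\bx^{\theta_i}$ that makes $\theta_i$ productive would, if imitated by type $\theta_j$, yield strictly higher follower utility to $\theta_j$ than any admissible productive strategy $\bx^{\theta_j}$ does. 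This violates the IC constraint $\max_{a} V(\bx^{\theta_j}, a; \theta_j) \ge \max_{a} V(\bx^{\theta_i}, a; \theta_j)$, so under any IC policy, two adjacent types cannot both be productive.

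The forward direction of the reduction is then straightforward: given an independent set $I$, assign $\bx^{\theta_i}$ concentrated on coordinate $v_i$ for each $\theta_i \in I$, and for each $\theta_i \notin I$ assign a strategy eliciting $a_i^-$; by the edge encoding, no IC inequality is violated, and the expected leader utility equals $|I|/n$. The main obstacle lies in the reverse direction --- showing that the set of productive types in any IC policy is necessarily independent in $G$. This requires delicately tuning follower payoffs so that the IC comparisons between $\theta_i$ and $\theta_j$ correspond exactly to the non-adjacency of $v_i$ and $v_j$: too loose a coupling undercounts adjacent pairs and collapses the reduction, while too tight a coupling spuriously forbids non-adjacent pairs and makes the constructed game trivial. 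A natural device is to let $\theta_j$'s utility on coordinate $v_i$ equal a baseline value plus a bonus that depends on whether $(v_i,v_j) \in E$, calibrated so that the marginal gain from imitating $\theta_i$ crosses $\theta_j$'s IC threshold exactly on edges. Once both directions are verified, the reduction preserves approximation ratios, and combining with the $n^{1-\epsilon}$-inapproximability of MIS yields the claimed $|\Theta|^{1-\epsilon}$ hardness.
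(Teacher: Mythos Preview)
Your proposal is correct and follows essentially the same approach as the paper: both reduce from Maximum Independent Set with one follower type per vertex under the uniform prior, engineer leader utility $1$ for a ``productive'' follower response and $0$ otherwise, and encode edges into follower payoffs so that two adjacent types cannot simultaneously be productive without violating the IC constraint. The paper instantiates your template with an explicit gadget (three follower actions and $2|V|$ leader actions with concrete numerical payoffs), which in particular handles the point you leave open---assigning non-productive types a strategy that keeps them obedient without tempting other types to misreport---via dedicated ``opt-out'' leader actions $b_v$.
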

\begin{small}
\begin{table}[tbh]
    \footnotesize
    \centering
    \begin{tabular}{|c|c|c|c|} \hline
         & Bayesian Stackelberg Equlibrium & Stackelberg Policy  & Stackelberg Mixed Policy \\ \hline
      \multirow{2}{*}{Hidden Follower Type}   & APX-Hard  &  APX-Hard  & Polynomial \\ 
      & \cite{conitzer2006computing} &  [Theorem \ref{thm:Optimal_IC_leader_strategy}] &  [Theorem \ref{thm:computability}]\\\hline
    \end{tabular}
    \caption{Complexity landscape for the case of Bayesian Stackelberg games with hidden follower types.} 
    \label{tab:hardness-results2}
\end{table}
\end{small}

\vspace{-7mm}
\begin{remark} A few remarks are worth mentioning. First,    both the Bayesian Stackelberg equilibrium and Stackelberg policy are NP-hard to be approximated within a factor of $\frac{1}{|\Theta|^{1-\epsilon}}$, whereas the APX-hardness for contract design in Table \ref{tab:hardness-results1} only ruled out constant approximations by \cite{guruganesh2021contracts,castiglioni2022designing}. It is an interesting question to understand whether contract design is fundamentally easier to approximate or previous hardness results could be further strengthened. Second, in the literature, there have been studies that combine Stackelberg games with signaling. For example, \citet{xu2016signaling} studies the case where the leader can commit to a mixed strategy $\mu^\theta$ for each receiver type $\theta$, and meanwhile employs signaling scheme $\pi^{\theta}$ to strategically reveal partial information about the true action $i \in [d]$ that the sender takes. It turns out that this richer model can also be captured by our general principal-agent model where  the coordination mechanism $\pi(\bx, a; \theta): \Delta^d \times \cA \times \Theta \to \RR$ needs to additionally satisfies the supplemental constraints $\sum_{\bx, a} \pi(\bx, a; \theta) \in \Delta^d$ since the summation has to correspond to a leader mixed strategy. 
\end{remark}

\section{Overcoming Adverse Selection via Information Acquisition} \label{sec:info-acq}

An important theme in general principal-agent problems is  the \emph{adverse selection} problem~\cite{akerlof1978market, myerson1982optimal}. While designing the optimal coordination mechanism is one effective way for the principal to overcome adverse selection, in this section we investigate another commonly adopted approach in practice, and also widely studied in the literature, to further mitigate adverse selection --- that is, through acquiring costly information about the agent's type.  In practice, the principal often conducts interviews or offers internships (which have different costs) as a way to acquire signals about the agent's private type of skill level. %
Based on these examination results, the principal can design a more effective coordination mechanism (e.g., in the form of contracts or signals) to work with the agent.  
This motivates us to study the natural algorithmic question arising in this situation: how to acquire information optimally in our general principal-agency model by balancing the cost of information acquisition and the gain from mitigated adverse selection?

\subsection{Information Acquisition in Generalized Principal Agency} 
We model the information acquisition process as a signaling scheme $\tau: \Theta \to \Delta(\Sigma)$, through which  the principal observes a (possibly randomized) signal $\sigma \sim \tau(\theta)$ about the hidden agent type $\theta$. Mathematically, a signaling scheme $\tau: \Theta \to \Delta_{\Sigma}$ is a mapping from the agent's type $\theta$ to a distribution over a set of possible signals $\Sigma$. The scheme $\tau$ can be formally described by $\{\tau(\sigma|\theta) \}_{\theta \in \Theta, \sigma \in \Sigma}$ in which  $\tau(\sigma|\theta)$ is the probability of generating a signal $\sigma$ conditioned on agent type $\theta$. As a real-world example, $\tau$ can be implemented as an interview process in which $\Sigma$ includes the ranks of interview outcomes and $\tau(\sigma|\theta)$ describes how the interview rank could signal underlying agent type.  

With the correlation between $\theta$ and $\sigma$ specified by $\tau$ as well as the agent's type distribution   $f$,  the principal can infer the posterior type distribution with Bayes rule: for any signal $\sigma$, denote the posterior probability of $\theta$ as $\sigma(\theta) := \PP(\theta | \sigma) = \frac{  \tau(\sigma | \theta) f(\theta) }{ \sum_{\theta'} \tau(\sigma | \theta') f(\theta') }$; henceforth, we will directly think of a signal by its induced posterior, i.e., $\sigma \in \Delta(\Theta)$. Let $p = \langle \tau | f \rangle $ be the induced (Blackwell) experiment from prior $f$ and signaling scheme $\tau$, and $p(\sigma) := \tau(\sigma | \theta) f(\theta)$ denotes the probability of $\sigma$ being realized. It is well-known that the design of a signaling scheme is   equivalent to designing a Blackwell experiment $p \in \Delta(\Delta(\Theta))$ that conforms to the Bayes plausibility constraint $\int_{\sigma\in \Delta(\Theta)} p(\sigma) d\sigma = f$ \cite{aumann1995repeated}.  

We assume acquiring information through   experiments (e.g., conducting interviews or offering internships) is costly. Let $c: \Delta(\Delta(\Theta)) \to \RR$ be the cost function for implementing an experiment. We adopt a common assumption that $c$ is \emph{posterior separable} such that $c(p) = \int_{\sigma\in \Delta(\Theta)} p(\sigma) h(\sigma) d\sigma - h(f)$, and $h: \Delta(\Theta) \to \RR$ denotes the cost function associated with each signal (posterior) $\sigma$ and $h(f)$ is the fixed cost for the prior $f$. This property allows us to directly work with the cost function of signal $h$ in the remainder of this section, as the constant cost $h(f)$ does not affect our optimization.  
We also assume that $h$ is convex, followed from the natural notion of Blackwell's informativeness. That is, a more informative signaling scheme (in terms of second-order stochastic dominance) always generates more utility for the decision maker with convex utility and is thus more costly.
A concrete example of such $h$ is the negative entropy, where $h(\sigma)=\sum_{\theta\in \Theta} \sigma(\theta)\ln(\sigma(\theta))$, which is a popular choice to the measure the amount of information in a signal. More such examples can be found in \cite{frankel2019quantifying}. In addition, notice that when $h$ is convex, the total cost $c$ is always non-negative; the problem degenerates when $h$ is linear, as $c(p)$ becomes a constant  regardless of the choice of $p$.

We abstractly represent  the principal's   utility under the posterior type distribution $\sigma$ as a function $u^*: \Delta(\Theta) \to \RR$ --- specifically,   $u^*(\sigma)$ can be set as the maximal objective of CP\eqref{lp:transformed} that captures the principal's expected utility under the optimal coordination mechanism with the agent's type distribution $\sigma$. With a higher level of abstraction, one can also think of $u^*(\sigma)$ as a decision maker's optimal utility under some state distribution $\sigma$, which is a more common setup for information acquisition; we will come back to this point in the later part of this section to illustrate the general applicability of our results. 
Given the prior $f$ and cost function $h$,   the principal's optimal information acquisition scheme $p$ can be formulated as a maximizer of the following optimization program: 
\begin{equation}\label{eq:info-acquisition}
  \bar{u}(f) :=  \max_{p \in \Delta(\Delta(\Theta) )}  \int_{\sigma\in \Delta(\Theta)} p(\sigma) \big[ u^*(\sigma)  - h(\sigma) \big] d\sigma  \ \textup{ subject to } \int_{\sigma\in \Delta(\Theta)} p(\sigma) d\sigma  = f,
\end{equation}
where $\bar{u}: \Delta(\Theta) \to \RR$ denotes the principal's optimal objective under $f$ by optimally acquiring type information. Such $\bar{u}$ is also known as the concavified function of $u^* - h$~\cite{aumann1995repeated,kamenica2011bayesian}. %

However, it remains unclear how to solve for the optimal information acquisition scheme, as OP~\eqref{eq:info-acquisition} involves infinitely many variables and has an objective that may be neither convex nor concave, despite that both $u^*$ and $h$ are convex.  To tackle these challenges, a natural approach is to resort to the revelation principle to simplify the design space, as we did in Section~\ref{sec:opt}. In the remainder of this section, we will demonstrate how this approach turns out to be useful under certain conditions, though the general problem turns out to be provably   intractable.

\subsection{The Tractable Cases: Efficient Concavification via Partition Oracles}

We first present an approach to simplify the optimization problem  that draws inspiration from the recent literature on algorithmic persuasion, which enables us to concavify the target function $u^* - h$ into $\bar{u}$ efficiently.
In particular, we define a notion of partition oracle, based on which we can construct a polynomial time algorithm to solve for the optimal information acquisition scheme.

\begin{definition}[Partition Oracle]
We say the function $u: \cS \to \RR$ admits an  partition oracle $\cP$ if it constructs a partition of the domain $\cS$ of $u$  
$\cP(u)=\{ \cS_i : i\in [n] \}$ such that $\bigcup_{i\in [n]} \cS_i = \cS$ and the function $u$ is concave over convex set $ \cS_i $ for any $i\in [n]$.
\end{definition}

\begin{theorem}[Computing the Optimal Information Acquisition Scheme] \label{thm:computing-optimal-info-scheme}
Given a partition oracle $\cP$ for the utility function $u^*$, the optimal information acquisition scheme under any convex cost function $h$ and prior $f$ can be solved in polynomial time w.r.t. $\abs{ \cP(u^*) }$.  
\end{theorem}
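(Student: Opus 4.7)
The plan is to convert the infinite-dimensional program~\eqref{eq:info-acquisition} into an explicit convex program of size polynomial in $n := |\cP(u^*)|$, combining a structural reduction (one posterior per partition element) with the same perspective-transformation machinery used in the proof of Theorem~\ref{thm:computability}.

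First, I would establish a one-posterior-per-part reduction by an averaging argument. Fix any feasible scheme $p$ and any partition element $\cS_i$, and suppose $p$ puts positive mass on two posteriors $\sigma_1, \sigma_2 \in \cS_i$ with weights $\lambda_1, \lambda_2 > 0$. Since $\cS_i$ is convex, the merged posterior $\hat\sigma = (\lambda_1 \sigma_1 + \lambda_2 \sigma_2)/(\lambda_1+\lambda_2)$ lies in $\cS_i$, and since $u^*$ is concave on $\cS_i$ and $h$ is convex, the function $u^* - h$ is concave on $\cS_i$, so
\[
(\lambda_1+\lambda_2)\bigl(u^*(\hat\sigma) - h(\hat\sigma)\bigr) \;\geq\; \lambda_1\bigl(u^*(\sigma_1)-h(\sigma_1)\bigr) + \lambda_2\bigl(u^*(\sigma_2)-h(\sigma_2)\bigr).
\]
Bayes plausibility is preserved because $\lambda_1\sigma_1 + \lambda_2\sigma_2 = (\lambda_1+\lambda_2)\hat\sigma$. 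Iterating this merge shows it is without loss of generality to use at most one posterior $\sigma^i \in \cS_i$ per partition element, leaving only the finite set of variables $\{(\lambda_i, \sigma^i)\}_{i\in[n]}$ with the constraints $\lambda_i \geq 0$, $\sigma^i \in \cS_i$, $\sum_i \lambda_i = 1$, $\sum_i \lambda_i \sigma^i = f$, and objective $\sum_i \lambda_i (u^*(\sigma^i) - h(\sigma^i))$.

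The remaining obstacle is the bilinear term $\lambda_i \sigma^i$, exactly the obstacle handled in Theorem~\ref{thm:computability}. I would introduce $\bz^i := \lambda_i \sigma^i$ and rewrite the objective as $\sum_i \bigl[\lambda_i u^*(\bz^i/\lambda_i) - \lambda_i h(\bz^i/\lambda_i)\bigr]$. The first term is the perspective of the concave function $u^*|_{\cS_i}$, hence concave in $(\bz^i, \lambda_i)$; the second term is the negative of the perspective of the convex function $h$, hence also concave. The Bayes-plausibility constraint becomes the linear $\sum_i \bz^i = f$, and the membership constraint $\sigma^i \in \cS_i$ lifts to $(\lambda_i, \bz^i) \in \cP_i := \{(\lambda, \lambda\sigma) : \sigma \in \cS_i,\, \lambda \in [0,1]\}$, which is convex by Lemma~\ref{lm:convex-z}. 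The result is a convex program with $O(n \cdot |\Theta|)$ variables that is solvable in polynomial time by standard interior-point methods, given (i) an evaluation oracle for $u^*$, which exists because $u^*(\sigma)$ is itself the value of CP~\eqref{lp:transformed} and hence computable by Theorem~\ref{thm:computability}, (ii) the convex cost function $h$ given as input, and (iii) a membership oracle for each $\cS_i$ obtained from $\cP$.

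The part I expect to be the main technical nuisance is closedness: just as in the proof of Theorem~\ref{thm:computability}, $\cP_i$ may fail to be closed even when $\cS_i$ is closed, so the exact optimum may not be attained. I would resolve this by the same mechanism used in Algorithm~\ref{alg:find-neighbour}, namely solving the closure relaxation and, if the returned solution has an irregular pair $(i)$ with $\lambda_i = 0$ but $\bz^i \neq \zero$, mixing with a small-weight feasible interior solution obtained by maximizing $\lambda_i$ over the closure. The resulting $(1-\epsilon)$-optimal scheme translates back to a valid signaling scheme via $\sigma^i = \bz^i/\lambda_i$ whenever $\lambda_i > 0$. The overall runtime is $\poly(n, |\Theta|, \log(1/\epsilon))$ plus the cost of the $u^*$ oracle, giving the claimed bound.
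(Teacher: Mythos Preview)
Your proposal is correct and follows essentially the same two-step approach as the paper: first a one-posterior-per-part reduction via concavity of $u^*-h$ on each $\cS_i$, then a perspective substitution $g_i := p_i\sigma_i$ to obtain a convex program. The one simplification you miss is that, because each $\sigma_i$ lies in the simplex $\Delta(\Theta)$, the weight is recovered directly as $p_i = \sum_\theta g_i(\theta)$ (this is the content of the paper's Lemma~\ref{lm:conjugate-property}), so the irregular case $\lambda_i=0,\ \bz^i\neq \zero$ cannot arise and the closedness machinery of Algorithm~\ref{alg:find-neighbour} is unnecessary here.
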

\begin{proof}[Proof Sketch]

The proof follows a similar two-step approach as we did in Section~\ref{sec:opt}. At a high level, we first restrict the design space of the information acquisition scheme through a revelation principle. While the resulting optimization problem \eqref{eq:info-acquisition-simplified} with the bi-linear terms cannot be solved directly, we construct a surrogate convex program \eqref{eq:info-acquisition-simplified-perspective} with the same optimal value. 

Let $\cP(u^*)=\{ \cS_i : i\in [n] \}$ be the output of partition oracle, let $n = |\cP(u^*)|$. We use $\cP(u^*)$ to derive a revelation principle for this design problem: it requires at most one signal $\sigma_i$ for each set $\cS_i$. 
In another word, OP~\eqref{eq:info-acquisition} can simplified as, with variable $ \{ p_i, \sigma_i \}_{i=1}^{n} $,
\begin{equation}\label{eq:info-acquisition-simplified}
    \max_{p_i, \sigma_i} \sum_{i \in [n]} p_i \big[  u^*(\sigma_i)  - h(\sigma_i) \big] \quad \textup{ subject to } \sum_{i \in [n]} p_i \sigma_i  = f, \sigma_i \in \cS_i
\end{equation}
This is easily implied from the property of concavification. For the ``if'' direction, OP \eqref{eq:info-acquisition} achieves no less utility than OP \eqref{eq:info-acquisition-simplified} by relaxing the optimization space. For ``only if'' direction, one can merge all support of $\sigma$ in each one of the partition $\cS_i$ via a convex combination and the concavity ensures that this merge would not decrease the objective.

We now convert OP~\eqref{eq:info-acquisition-simplified} into a concave maximization problem that can be solvable efficiently. 
For this purpose, define the variable $ g_i \coloneqq p_i\sigma_i $ and the perspective functions, $U^*(g_i) = p_i u^*(g_i/p_i) $, $H(g_i) = p_i h(g_i/p_i)$. By substitution, we get OP \eqref{eq:info-acquisition-simplified-perspective} from OP~\eqref{eq:info-acquisition-simplified}, 
\begin{equation}\label{eq:info-acquisition-simplified-perspective}
    \max_{p_i, \sigma_i} \sum_{i \in [n]} \big[  U^*(g_i)  - H(g_i) \big] \quad \textup{ subject to } \sum_{i \in [n]} g_i  = f, \sigma_i \in \cS_i
\end{equation}
We can see that OP \eqref{eq:info-acquisition-simplified-perspective} is a convex program, since $U$ is concave and $H$ is convex w.r.t. $g_i$ by Lemma~\ref{lm:conjugate-property}. 
And after solving OP \eqref{eq:info-acquisition-simplified-perspective}, we can recover the $\sigma_i, p_i$ from each $g_i$ as $p_i = \sum_{\theta} g_i(\theta) $ and $\sigma_i = g_i / p_i $ if $p_i\neq 0$.
We defer the proof of Lemma~\ref{lm:conjugate-property} to Appendix \ref{appendix_section:them:computing-optimal-info-scheme}. It is worth noting that this special kind of  perspective functions, such as the constructed function $H, U^*$, has some additional properties because $h, u^*$ are the convex function that only supports on the simplex. 
    
    \begin{lemma}[Properties of Perspective on Simplex]
    \label{lm:conjugate-property}
    Given any convex function $h: \Delta^d \to \RR$, its perspective function on the simplex, $H: \RR^{d}_{\geq 0} \to \RR$, given by 
    $H(\bx ) = \begin{cases} t\cdot h(\bx /t) & t = \sum_i x_{i} \geq 0 \\ 0 &  \sum_i x_{i} = 0 \\
    \end{cases}$,
    is a convex function that satisfies $H(t \bx) = t H(\bx), \forall t \geq 0.$
    \end{lemma}

\end{proof}

Theorem \ref{thm:computing-optimal-info-scheme} implies several useful corollaries regarding computing the optimal information acquisition scheme.  
Firstly, one can solve the optimal information acquisition scheme in the natural \emph{single-agent} finite-action decision-making problem, which corresponds to a degenerated case of our general model with discrete principal actions but no agent actions (hence the issue of moral hazard becomes absent and the agent's utility function also becomes irrelevant).  
Concretely, in a finite decision problem, the decision maker (the principal) has a utility $u: \cA \times \Theta   \to \RR$ that depends on the state $\theta \in \Theta$ and action $a\in \cA$. Provided that a rational decision maker maximizes her expected utility, we can describe her expected utility under the optimal decision that only depends on the state distribution, $u^*: \Delta(\Theta) \to \RR$. 
When $u$ is concave in $\Theta$, we can observe that $u^*(\sigma) = \max_{a\in \cA} \Ex_{\theta \sim \sigma} u(a ; \theta)$ may be a neither convex or concave function over $\Delta(\Theta)$, making it difficult to directly optimize over. 
Nonetheless, we can easily construct a partition oracle for $u^*$ by solving the best response region for each decision, i.e., $\forall a_i\in \cA, \cS_i = \{\sigma : a_i \in \argmax_{a\in \cA}  \Ex_{\theta \sim \sigma} u(a ; \theta) \}$, and it is clear that, in each $\cS_i$, $u^*(\sigma)=\Ex_{\theta \sim \sigma} u(a_i ; \theta)$ is concave. We thus get the following corollary from Theorem \ref{thm:computing-optimal-info-scheme}.

\begin{corollary}[Optimal Information Acquisition in Finite Decision Problem] \label{coro:decision-problem}
For any finite decision problem $(\cA, \Theta, u)$ with utility $u$ concave in $\Theta$, the optimal information acquisition scheme under any convex cost function $h$ can be solved in polynomial time w.r.t. $|\cA|$.
\end{corollary}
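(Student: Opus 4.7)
The plan is to apply Theorem~\ref{thm:computing-optimal-info-scheme} directly, so the only real work is to exhibit an efficient partition oracle for $u^*: \Delta(\Theta) \to \RR$ with $u^*(\sigma) = \max_{a \in \cA} \Ex_{\theta \sim \sigma}[u(a;\theta)]$, such that the partition has size $|\cA|$. The natural candidate, as previewed in the paragraph preceding the corollary, is the collection of best-response regions
\[
\cS_i \;=\; \Bigl\{\, \sigma \in \Delta(\Theta) \;:\; \Ex_{\theta \sim \sigma}\bigl[u(a_i;\theta)\bigr] \;\geq\; \Ex_{\theta \sim \sigma}\bigl[u(a_j;\theta)\bigr] \ \text{ for all } a_j \in \cA \,\Bigr\}
\]
for each action $a_i \in \cA$. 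I would construct each $\cS_i$ explicitly by writing down the $|\cA|-1$ linear inequalities in $\sigma$ together with the simplex constraints, which takes $\poly(|\cA|, |\Theta|)$ time.

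Next I would verify the three conditions required by the partition oracle definition. Convexity: each $\cS_i$ is an intersection of half-spaces (the inequalities above are linear in $\sigma$ because $\Ex_{\theta \sim \sigma}[u(a;\theta)] = \sum_{\theta} \sigma(\theta) u(a;\theta)$), hence is a convex polytope. Coverage: for every $\sigma \in \Delta(\Theta)$ there exists some $a_i \in \argmax_{a \in \cA} \Ex_{\theta \sim \sigma}[u(a;\theta)]$, so $\sigma \in \cS_i$; therefore $\bigcup_i \cS_i = \Delta(\Theta)$. Concavity on each piece: restricted to $\cS_i$, the function $u^*$ coincides with the affine map $\sigma \mapsto \sum_\theta \sigma(\theta)\, u(a_i;\theta)$, which is (trivially) concave. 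Note that there is no contradiction with the fact that $u^*$ is globally convex as a pointwise maximum of affine functions — the point of the partition oracle is precisely to localize the analysis to pieces on which the globally non-concave function becomes concave.

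With the partition oracle in hand producing $|\cP(u^*)| = |\cA|$ pieces, I would invoke Theorem~\ref{thm:computing-optimal-info-scheme} to conclude that the optimal information acquisition scheme can be computed in time polynomial in $|\cA|$ (and in the remaining problem parameters, through the convex program~\eqref{eq:info-acquisition-simplified-perspective} with $n = |\cA|$ blocks of variables). The main — and rather minor — obstacle I anticipate is just bookkeeping: ensuring that the extra simplex and feasibility constraints introduced when specializing Theorem~\ref{thm:computing-optimal-info-scheme} to the finite decision setting (e.g., $g_i \in \cS_i$ expressed via the perspective variables) remain linear and that the number of variables and constraints in the resulting convex program is indeed $\poly(|\cA|, |\Theta|)$. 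Everything else follows by substitution into the already-established result.
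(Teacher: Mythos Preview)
Your proposal is correct and follows essentially the same approach as the paper: construct the partition oracle via the best-response regions $\cS_i = \{\sigma : a_i \in \argmax_{a\in\cA} \Ex_{\theta\sim\sigma}[u(a;\theta)]\}$, observe that $u^*$ is concave (indeed affine) on each $\cS_i$, and then invoke Theorem~\ref{thm:computing-optimal-info-scheme} with $|\cP(u^*)|=|\cA|$. The paper states this immediately before the corollary and offers no additional argument, so your added verification of convexity, coverage, and concavity-on-pieces is just a slightly more detailed spelling-out of the same idea.
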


Secondly, our general setup covers the costly persuasion problem~\cite{gentzkow2014costly} and Theorem~\ref{thm:computing-optimal-info-scheme} notably provides an efficient algorithm to compute the optimal costly persuasion scheme.
In a Bayesian persuasion problem~\cite{kamenica2011bayesian} $(\cA, \Theta, u, v)$ with receiver action space $\cA$, state space $\Theta$, sender and receiver utility $u, v: \cA \times \Theta \to \RR$, $u^*(\sigma) = \Ex_{\theta \sim \sigma} u(a^{\sigma} ; \theta)$ is the sender's utility given receiver's best response $a^{\sigma}$ under posterior $\sigma$, i.e., $a^{\sigma} = \max_{a \in \cA} \Ex_{\theta \sim \sigma} v(a ; \theta)$. 
When $u, v$ is linear in $\Theta$ and $\cA$, we can determine a partition for $u^*$ by solving the best response region for each of the agent's action, i.e., $\forall a_i\in \cA, \cS_i = \{\sigma : a_i \in \argmax_{a \in \cA}  \Ex_{\theta \sim \sigma} v(a ; \theta) \}$. We can see that in each $S_i$, $u^*(\sigma)=\Ex_{\theta \sim \sigma} u(a_i ; \theta)$ is linear.
When the sender's design of signal schemes involves a convex cost function $h$, Theorem \ref{thm:computing-optimal-info-scheme} implies the following corollary.\footnote{\citet{gentzkow2014costly} assumed concave cost function, which is seemingly different from yet actually equivalent to our setup, since they define the cost function $c$ in the opposite way. Our definition is more natural from the perspective of the principal (as opposed to the information designer) who only knows the prior and cannot observe the true state.
}

\begin{corollary}[Optimal Signaling in Costly Persuasion]
For any costly persuasion problem $(\cA, \Theta, u, v)$ where $u, v$ is linear in $\Theta$ and $\cA$, the optimal signaling scheme under convex cost function $h$ can be solved in polynomial time w.r.t. $|\cA|$.
\end{corollary}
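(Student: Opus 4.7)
\begin{proofsketch}
The plan is to invoke Theorem~\ref{thm:computing-optimal-info-scheme} by constructing an explicit partition oracle $\cP$ for the sender's indirect utility function $u^*$, whose size is at most $|\cA|$. Once such an oracle is available, Theorem~\ref{thm:computing-optimal-info-scheme} immediately delivers a polynomial-time algorithm in $|\cP(u^*)|=|\cA|$ and the dimension of $\Delta(\Theta)$, which establishes the corollary.

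To build the oracle, I would for each receiver action $a_i\in\cA$ define the best-response region
\[
\cS_i \;=\; \Bigl\{\sigma\in\Delta(\Theta)\,:\,a_i\in\argmax_{a\in\cA}\Ex_{\theta\sim\sigma}v(a;\theta)\Bigr\}.
\]
Since $v$ is linear in $\theta$ and $\cA$ is discrete, the map $\sigma\mapsto\Ex_{\theta\sim\sigma}v(a;\theta)$ is a linear functional of $\sigma$, so each $\cS_i$ is cut out by the finite system of linear inequalities $\Ex_{\theta\sim\sigma}v(a_i;\theta)\ge\Ex_{\theta\sim\sigma}v(a_j;\theta)$ for $j\neq i$; hence each $\cS_i$ is a closed convex polytope. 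The standard rationality assumption guarantees $\bigcup_{i}\cS_i=\Delta(\Theta)$, so $\{\cS_i\}_{i\in[|\cA|]}$ is a valid partition oracle in the sense of the definition (up to measure-zero overlaps on the boundary, which I will address in the next paragraph).

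On each piece $\cS_i$ the receiver plays $a_i$, so the sender's indirect utility collapses to the linear (and therefore concave) function
\[
u^*(\sigma)\;=\;\Ex_{\theta\sim\sigma}u(a_i;\theta),\qquad\sigma\in\cS_i,
\]
using the linearity of $u$ in $\theta$. This is exactly the concavity-on-each-piece requirement asked of a partition oracle. The only delicate point is that $u^*$ is formally only defined after a tie-breaking rule is fixed, since two different $\cS_i,\cS_j$ may overlap where both $a_i$ and $a_j$ are best responses. This is however harmless: on the overlap $\cS_i\cap\cS_j$ both pieces agree that the receiver is willing to play either action, and we are free to let the sender pick the action that yields the larger $\Ex_{\theta\sim\sigma}u(a;\theta)$, so $u^*$ is well-defined and linear on every $\cS_i$ under the chosen convention; this matches the convention implicit in the partition oracle framework of Theorem~\ref{thm:computing-optimal-info-scheme}.

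Putting it together, applying Theorem~\ref{thm:computing-optimal-info-scheme} with the partition $\{\cS_i\}_{i\in[|\cA|]}$ and the convex cost function $h$ reduces the optimal costly-persuasion problem to the convex program~\eqref{eq:info-acquisition-simplified-perspective} with $|\cA|$ blocks, each living in the polytope $\cS_i\subseteq\Delta(\Theta)$. This program is solvable in time polynomial in $|\cA|$ and the dimension of $\Theta$, from which the desired complexity bound follows. I expect the main (albeit mild) obstacle to be the bookkeeping around tie-breaking on the shared boundaries of the $\cS_i$; this is purely notational since the indirect utility is continuous and linear on each closed piece, but it needs to be stated carefully so that the partition oracle hypothesis of Theorem~\ref{thm:computing-optimal-info-scheme} is literally satisfied.
\end{proofsketch}
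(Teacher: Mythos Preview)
Your proposal is correct and follows essentially the same approach as the paper: construct the partition oracle from the receiver's best-response regions $\cS_i=\{\sigma:a_i\in\argmax_a\Ex_{\theta\sim\sigma}v(a;\theta)\}$, observe that $u^*$ is linear on each $\cS_i$, and invoke Theorem~\ref{thm:computing-optimal-info-scheme}. The paper's argument is terser and does not explicitly discuss the tie-breaking issue on boundaries, but otherwise the two are identical.
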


That said, this result does not generalize to the persuasion problem with multiple receivers, as there are exponentially many combinations of receivers' actions, in which case is indeed NP-hard to find the optimal signaling scheme, even without information cost~\cite{dughmi2017algorithmic}.

\subsection{The Hardness of Optimal Information Acquisition}

As we have identified tractable cases where the partition oracle can be efficiently constructed for concavification, we now show that it is APX-hard to compute the optimal information acquisition scheme in general. It is worth noting that this hardness result is non-trivial and perhaps surprising in the following sense: while both the utility function $u^*$ and cost function $h$ are convex hence easy to maximize in $\Delta(|\Theta|)$ separably --- in fact, we know any convex function is maximized at a vertex of its feasible region, which is just one of the $|\Theta|$ vertices of $\Delta(|\Theta|)$. Nevertheless,  it suddenly becomes intractable for solving the optimal information acquisition which just concavifies their difference $u^*-h$ over the simplex. %

\begin{theorem}[Hardness of Costly Information Acquisition]\label{thm:hard-info-acquisit}
Unless P=NP, for any $\epsilon \in (0,1)$, there is no algorithm in polynomial time of $1/\epsilon$ to compute a  $\frac{1}{|\Theta|^{2-\epsilon}}$-optimal information acquisition scheme for the principal, even when the  cost function $h$ is convex  and when the agent has no actions. 
\end{theorem}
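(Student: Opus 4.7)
The plan is to prove inapproximability via a polynomial-time reduction from Maximum Independent Set (MIS), which is NP-hard to approximate within $|V|^{1-\delta}$ for every constant $\delta > 0$. To boost the exponent from $1-\delta$ (inherent to MIS) to the desired $2-\epsilon$, the reduction would produce an instance with $|\Theta| = \sqrt{|V(G)|}$, so that the $|V|^{1-\delta}$ MIS gap becomes $|\Theta|^{2-2\delta}$; choosing $\delta = \epsilon/2$ yields the claimed bound.

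For the construction, label the vertices of the input graph $G = (V, E)$ as pairs $(i, j) \in [\sqrt{N}]^2$ (padding $G$ if needed), take $\Theta = [\sqrt{N}]$ with uniform prior $f$, and give the principal one action per vertex $v = (i,j)$ with utility $u(a_v, \theta) = \mathbb{1}[\theta = i] + \mathbb{1}[\theta = j]$. This makes $u^*(\sigma) = \max_{(i,j) \in V(G)}(\sigma_i + \sigma_j)$ a piecewise linear convex function, naturally rewarding signals whose support concentrates on an encoded vertex. For the cost $h$, use a convex form such as
\[
h(\sigma) \;=\; C \cdot \max_{\{(i,j),(i',j')\} \in E(G)} \left( \sigma_i + \sigma_j + \sigma_{i'} + \sigma_{j'} \right)^2,
\]
with a tunable constant $C$, so that $h$ is relatively small on signals whose support encodes an independent set of $G$ but sharply large on signals whose support covers an edge of $G$. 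Both $u^*$ and $h$ are convex by construction, and the instance has polynomial size.

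For the analysis I would argue as follows. Completeness: given an independent set $S \subseteq V(G)$ of size $k$, construct a Bayes-plausible scheme using $k$ posteriors, each supported on the pair $\{i, j\}$ of a distinct IS vertex $v = (i,j)$, complemented by filler posteriors that absorb residual mass so that the scheme averages to the uniform prior. The mass on each IS signal is $\Theta(k/\sqrt{N})$ and its per-signal net utility is $\Theta(1)$, yielding total value $\Omega(k^2/N)$. Soundness: if $\alpha(G) \leq k$, any Bayes-plausible scheme must either concentrate on posteriors whose support covers some edge of $G$ (incurring heavy penalty from $h$), or use sufficiently spread posteriors with diluted $u^*$; both cases bound the net value by $O(k^2/N)$. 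Combining the two directions transfers the MIS gap $|V|^{1-\delta}$ to a gap of $|\Theta|^{2-2\delta}$ between the YES and NO instances.

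The main obstacle is the soundness direction: ruling out clever signaling schemes that are neither concentrated on a single IS pair nor entirely spread. This requires carefully exploiting the convexity of $h$, the piecewise linear structure of $u^*$, and Bayes plausibility (which forces any scheme to average to the uniform prior). A secondary challenge is engineering $h$ to be simultaneously (i) convex, (ii) polynomial-size describable, and (iii) combinatorially discriminative between IS- and non-IS-supported posteriors; the natural first attempt above may need refinement, for example into a weighted sum of quadratic penalties, or augmented with additional principal actions that rule out degenerate concentrated signals, to make the soundness argument rigorous for every Bayes-plausible scheme.
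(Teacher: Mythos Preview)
Your proposal is more a research plan than a proof: you explicitly flag the soundness direction as the ``main obstacle'' and leave the form of $h$ open (``may need refinement''). Neither the completeness nor the soundness calculation is actually carried out, and the proposed $h$ looks fragile --- for instance, the fully informative scheme (one delta posterior per type) attains $u^*=1$ on every signal, while your quadratic penalty evaluates to $C$ on any delta whose type appears as a coordinate of some edge; tuning $C$ so that this does not dominate the intended IS construction, while still penalizing non-IS posteriors sharply enough for soundness, is not argued and may not be achievable with this $h$. The $\sqrt{N}$ encoding adds further complications: distinct vertices of $G$ can share a coordinate, so a posterior supported on $\{i,j\}$ need not select a unique vertex, and the Bayes-plausibility accounting for the filler signals is left entirely unspecified.

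The paper avoids all of this by a different and much simpler route. Its key step is the observation that concavifying a function $\phi$ is at least as hard as maximizing it: if one could evaluate the concave closure $\bar\phi$ at every prior, one could maximize $\bar\phi$ (it is concave), and $\max\bar\phi=\max\phi$. This collapses the information-acquisition problem to the pure maximization of $u^*-h$ over $\Delta^{|\Theta|}$ --- no analysis of Bayes-plausible schemes is needed at all. The paper then takes $|\Theta|=k$ equal to the number of MIS vertices (not $\sqrt{k}$), sets $u^*(\sigma)=\sum_i\max\{\sigma_i-\sum_j e_{ij}\sigma_j,\,0\}$ from the adjacency matrix, and chooses the convex cost $h(\sigma)=\max\{\|\sigma\|_\infty-1/k,\,0\}$ purely to force the algebraic identity $\max_{\sigma\in\Delta^k}\bigl(u^*(\sigma)-h(\sigma)\bigr)=\tfrac{1}{k}\max_{x\in[0,1]^k}u^*(x)$. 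The extra factor of $|\Theta|$ in the inapproximability exponent thus comes from this simplex-to-hypercube rescaling, not from packing vertices into $\sqrt{N}$ types, and the hypercube maximum on the right is already known to be $1/k^{1-\epsilon}$-inapproximable.
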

\begin{proof}[Proof Sketch]
The proof involves three major reduction steps: 
First, it is to realize that the decision problem is just a special case of the general principal-agent problem and therefore it suffices to show that solving optimal information acquisition in some single-agent decision problem is APX-hard. 
Next, we show that solving for the optimal information acquisition problem, i.e., concavifying $u^*-h$ is at least as hard as maximizing $u^*-h$. 
Lastly, we present a reduction from the maximum independent set problem to maximizing $u^*-h$ in a class of carefully constructed problem instance.

The first reduction is straightforward, due to the fact that for any decision problem $(\cA, \Theta, u)$, one can construct another principal-agent problem where
the principal and agent share the exact same utility function $u$, and utility is independent of the principal's strategy, but only function of the agent's action space, $\cA$ and the agent's type $\Theta$ (the state space of the decision problem). This means that the set of general principal-agent problems contains the set of the decision problems, and therefore, computing the optimal information acquisition scheme in a generalized principal-agent problem is at least as hard as computing the optimal information acquisition scheme in a single-agent decision problem. 

Next, we prove the second reduction via Lemma~\ref{lm:concavify-hard}.
\begin{lemma}[Concavification is at least as hard as maximization] \label{lm:concavify-hard}
 If a function $\phi: \Delta^k \to \RR$ is NP-Hard to maximize, then $\phi$ must also be NP-Hard to concavify, i.e., there exists some $f \in \Delta(\Theta)$ such that the following optimization problem is NP-Hard to solve  
 $$\bar{\phi}(f) :=  \max_{p_i,\sigma_i}  \sum_{i=1}^{k} p_i \phi(\sigma_i)   \ \textup{ s.t. } \sum_{i=1}^{k} p_i \sigma_i = f. $$
\end{lemma}

Lastly, it suffices to construct a decision problem instance $(\cA, \Theta, u)$ with cost function $h$ such that  $u^* - h $ is APX-Hard to maximize. 
Consider the following decision problem $(\cA, \Theta, u)$ with the state space $|\Theta| = k$ and singleton action space $\cA$. Hence, the function utility $u$ only depends on the state and is equivalent to the utility $u^*$ under optimal action, i.e., $u^*(\sigma)=u(a, \sigma), \forall a, \sigma$. 
At each state distribution $\sigma \in \Delta(\Theta)$, we set the utility function to be, for some $e_{i,j} \in \{0, 1\}$,
\[u^*(\sigma) =  \sum_{i\in [k]} \max  \{\sigma_i - \sum_{j\in [k]} e_{i,j} \sigma_j, 0 \} .\]
In addition, set the convex cost function as $h(\sigma) =  \max\{ \norm{\sigma}_{\infty}-\frac{1}{k}, 0  \}$. We claim that $ u^*(\sigma) - h(\sigma) $ is hard to maximize.
This relies on the following equality
\begin{equation}\label{eq:maximization-equivalence}
  \max_{\sigma \in \Delta^k} u^*(\sigma) - h(\sigma) = \frac{1}{k} \cdot \max_{x \in [0, 1]^k} u^*(x),   
\end{equation}
which implies that any $\epsilon$-optimal solution to $\max_{\sigma \in \Delta^k} u^*(\sigma) - h(\sigma) $ can be turned into an $(\frac{1}{k}\epsilon)$-optimal solution to $\max_{x \in [0, 1]^k} u^*(x)$. By Lemma~\ref{lm:max-ind-set}, there is no algorithm in polynomial time of $1/\epsilon$ to find an $\frac{1}{k^{1-\epsilon}}$-optimal solution to $\max_{x \in [0, 1]^k} u^*(x)$. Hence, there is no algorithm in polynomial time of $1/\epsilon$ to find an $\frac{1}{k^{2-\epsilon}}$-optimal solution to $\max_{\sigma \in \Delta^k} u^*(\sigma) - h(\sigma)$ and the problem is APX-hard.

\begin{lemma}[\citet{NEURIPS2021_e0126439}, Theorem 4]\label{lm:max-ind-set}
Unless P=NP, for any $\epsilon \in (0,1)$, there is no algorithm in polynomial time of $1/\epsilon$ that can solve for $\frac{1}{k^{1-\epsilon}}$-approximation of the following convex maximization problem for some $e_{i,j} \in \{0, 1\}$,
   \begin{equation*}\label{eq:independent_set} \max_{x \in [0, 1]^k} \sum_{i\in [k]} \max  \{x_i - \sum_{j\in [k]} e_{i,j} x_j, 0 \}.
  \end{equation*}
\end{lemma}
We defer the proof of Equation \eqref{eq:maximization-equivalence} and Lemma \ref{lm:concavify-hard} to appendix \ref{appendix_sec_info_hard}.
\end{proof}

\section{Conclusion}

This paper   extends \citet{myerson1982optimal}'s seminal work  to a   generalized principal-agent model that accommodates a rich set of application domains and design considerations. Rather than zeroing in on a specific setup, this paper chooses a less conventional route, zooming out for a panoramic view of the subject matter. From this particular viewpoint, we present a simplified and unified algorithmic solution to efficiently determine the principal's optimal coordination mechanism in the presence of moral hazard and adverse selection. In addition, our comprehensive review of coordination mechanism design in restricted design space highlights the diversity of challenges within the general principal-agent framework and provides a thorough investigation of the algorithmic properties of these special design problems. Finally, on the path of mitigating adverse selection through information acquisition, we delve into the computational intricacies of function concavification, another common problem structure that is shared in various design problems.  

The general framework also opens up several avenues for further investigation. One direction is to understand whether the complexity gaps between different design problems can be closed or whether some design problems are intrinsically harder than others. Another direction is to extend the current framework to the setup with multiple agents and explore how to derive similar generic results under common conditions such as submodular utilities. Lastly, the algorithmics of costly information acquisition can be refined on whether an efficient approximation scheme exists under the entropy cost or other common cost functions.

\bibliographystyle{ACM-Reference-Format}
\bibliography{refer}

%%% -*-BibTeX-*-
%%% Do NOT edit. File created by BibTeX with style
%%% ACM-Reference-Format-Journals [18-Jan-2012].

\begin{thebibliography}{62}

%%% ====================================================================
%%% NOTE TO THE USER: you can override these defaults by providing
%%% customized versions of any of these macros before the \bibliography
%%% command.  Each of them MUST provide its own final punctuation,
%%% except for \shownote{}, \showDOI{}, and \showURL{}.  The latter two
%%% do not use final punctuation, in order to avoid confusing it with
%%% the Web address.
%%%
%%% To suppress output of a particular field, define its macro to expand
%%% to an empty string, or better, \unskip, like this:
%%%
%%% \newcommand{\showDOI}[1]{\unskip}   % LaTeX syntax
%%%
%%% \def \showDOI #1{\unskip}           % plain TeX syntax
%%%
%%% ====================================================================

\ifx \showCODEN    \undefined \def \showCODEN     #1{\unskip}     \fi
\ifx \showDOI      \undefined \def \showDOI       #1{#1}\fi
\ifx \showISBNx    \undefined \def \showISBNx     #1{\unskip}     \fi
\ifx \showISBNxiii \undefined \def \showISBNxiii  #1{\unskip}     \fi
\ifx \showISSN     \undefined \def \showISSN      #1{\unskip}     \fi
\ifx \showLCCN     \undefined \def \showLCCN      #1{\unskip}     \fi
\ifx \shownote     \undefined \def \shownote      #1{#1}          \fi
\ifx \showarticletitle \undefined \def \showarticletitle #1{#1}   \fi
\ifx \showURL      \undefined \def \showURL       {\relax}        \fi
% The following commands are used for tagged output and should be
% invisible to TeX
\providecommand\bibfield[2]{#2}
\providecommand\bibinfo[2]{#2}
\providecommand\natexlab[1]{#1}
\providecommand\showeprint[2][]{arXiv:#2}

\bibitem[Akerlof(1978)]%
        {akerlof1978market}
\bibfield{author}{\bibinfo{person}{George~A Akerlof}.} \bibinfo{year}{1978}\natexlab{}.
\newblock \showarticletitle{The market for “lemons”: Quality uncertainty and the market mechanism}.
\newblock In \bibinfo{booktitle}{\emph{Uncertainty in economics}}. \bibinfo{publisher}{Elsevier}, \bibinfo{pages}{235--251}.
\newblock


\bibitem[Alon et~al\mbox{.}(2021)]%
        {alon2021contracts}
\bibfield{author}{\bibinfo{person}{Tal Alon}, \bibinfo{person}{Paul D{\"u}tting}, {and} \bibinfo{person}{Inbal Talgam-Cohen}.} \bibinfo{year}{2021}\natexlab{}.
\newblock \showarticletitle{Contracts with Private Cost per Unit-of-Effort}. In \bibinfo{booktitle}{\emph{Proceedings of the 22nd ACM Conference on Economics and Computation}}. \bibinfo{pages}{52--69}.
\newblock


\bibitem[Antle(1981)]%
        {antle1981moral}
\bibfield{author}{\bibinfo{person}{Ricky~Lee Antle}.} \bibinfo{year}{1981}\natexlab{}.
\newblock \bibinfo{booktitle}{\emph{Moral hazard and auditor contracts: an approach to auditors'legal liability and independence}}.
\newblock \bibinfo{publisher}{Stanford University}.
\newblock


\bibitem[Arieli and Babichenko(2019)]%
        {arieli2019private}
\bibfield{author}{\bibinfo{person}{Itai Arieli} {and} \bibinfo{person}{Yakov Babichenko}.} \bibinfo{year}{2019}\natexlab{}.
\newblock \showarticletitle{Private bayesian persuasion}.
\newblock \bibinfo{journal}{\emph{Journal of Economic Theory}}  \bibinfo{volume}{182} (\bibinfo{year}{2019}), \bibinfo{pages}{185--217}.
\newblock


\bibitem[Aumann et~al\mbox{.}(1995)]%
        {aumann1995repeated}
\bibfield{author}{\bibinfo{person}{Robert~J Aumann}, \bibinfo{person}{Michael Maschler}, {and} \bibinfo{person}{Richard~E Stearns}.} \bibinfo{year}{1995}\natexlab{}.
\newblock \bibinfo{booktitle}{\emph{Repeated games with incomplete information}}.
\newblock \bibinfo{publisher}{MIT press}.
\newblock


\bibitem[Babaioff et~al\mbox{.}(2012)]%
        {babaioff2012optimal}
\bibfield{author}{\bibinfo{person}{Moshe Babaioff}, \bibinfo{person}{Robert Kleinberg}, {and} \bibinfo{person}{Renato Paes~Leme}.} \bibinfo{year}{2012}\natexlab{}.
\newblock \showarticletitle{Optimal mechanisms for selling information}. In \bibinfo{booktitle}{\emph{Proceedings of the 13th ACM Conference on Electronic Commerce}}. \bibinfo{pages}{92--109}.
\newblock


\bibitem[Babichenko and Barman(2017)]%
        {babichenko2017algorithmic}
\bibfield{author}{\bibinfo{person}{Yakov Babichenko} {and} \bibinfo{person}{Siddharth Barman}.} \bibinfo{year}{2017}\natexlab{}.
\newblock \showarticletitle{Algorithmic aspects of private Bayesian persuasion}. In \bibinfo{booktitle}{\emph{8th Innovations in Theoretical Computer Science Conference (ITCS 2017)}}. Schloss Dagstuhl-Leibniz-Zentrum fuer Informatik.
\newblock


\bibitem[Barlevy and Veronesi(2000)]%
        {barlevy2000information}
\bibfield{author}{\bibinfo{person}{Gadi Barlevy} {and} \bibinfo{person}{Pietro Veronesi}.} \bibinfo{year}{2000}\natexlab{}.
\newblock \showarticletitle{Information acquisition in financial markets}.
\newblock \bibinfo{journal}{\emph{The Review of Economic Studies}} \bibinfo{volume}{67}, \bibinfo{number}{1} (\bibinfo{year}{2000}), \bibinfo{pages}{79--90}.
\newblock


\bibitem[Bergemann et~al\mbox{.}(2018)]%
        {bergemann2018design}
\bibfield{author}{\bibinfo{person}{Dirk Bergemann}, \bibinfo{person}{Alessandro Bonatti}, {and} \bibinfo{person}{Alex Smolin}.} \bibinfo{year}{2018}\natexlab{}.
\newblock \showarticletitle{The design and price of information}.
\newblock \bibinfo{journal}{\emph{American economic review}} \bibinfo{volume}{108}, \bibinfo{number}{1} (\bibinfo{year}{2018}), \bibinfo{pages}{1--48}.
\newblock


\bibitem[Bergemann and V{\"a}lim{\"a}ki(2002)]%
        {bergemann2002information}
\bibfield{author}{\bibinfo{person}{Dirk Bergemann} {and} \bibinfo{person}{Juuso V{\"a}lim{\"a}ki}.} \bibinfo{year}{2002}\natexlab{}.
\newblock \showarticletitle{Information acquisition and efficient mechanism design}.
\newblock \bibinfo{journal}{\emph{Econometrica}} \bibinfo{volume}{70}, \bibinfo{number}{3} (\bibinfo{year}{2002}), \bibinfo{pages}{1007--1033}.
\newblock


\bibitem[Boyd and Vandenberghe(2004)]%
        {boyd2004convex}
\bibfield{author}{\bibinfo{person}{Stephen~P Boyd} {and} \bibinfo{person}{Lieven Vandenberghe}.} \bibinfo{year}{2004}\natexlab{}.
\newblock \bibinfo{booktitle}{\emph{Convex optimization}}.
\newblock \bibinfo{publisher}{Cambridge university press}.
\newblock


\bibitem[Cacciamani et~al\mbox{.}(2023)]%
        {Cacciamani23}
\bibfield{author}{\bibinfo{person}{Federico Cacciamani}, \bibinfo{person}{Matteo Castiglioni}, {and} \bibinfo{person}{Nicola Gatti}.} \bibinfo{year}{2023}\natexlab{}.
\newblock \showarticletitle{Online mechanism design for information acquisition}. In \bibinfo{booktitle}{\emph{Proceedings of the 40th International Conference on Machine Learning}} (Honolulu, Hawaii, USA) \emph{(\bibinfo{series}{ICML'23})}. \bibinfo{publisher}{JMLR.org}, Article \bibinfo{articleno}{134}, \bibinfo{numpages}{28}~pages.
\newblock


\bibitem[Castiglioni et~al\mbox{.}(2021)]%
        {castiglioni2021bayesian}
\bibfield{author}{\bibinfo{person}{Matteo Castiglioni}, \bibinfo{person}{Alberto Marchesi}, {and} \bibinfo{person}{Nicola Gatti}.} \bibinfo{year}{2021}\natexlab{}.
\newblock \showarticletitle{Bayesian agency: Linear versus tractable contracts}. In \bibinfo{booktitle}{\emph{Proceedings of the 22nd ACM Conference on Economics and Computation}}. \bibinfo{pages}{285--286}.
\newblock


\bibitem[Castiglioni et~al\mbox{.}(2022a)]%
        {castiglioni2022bayesian}
\bibfield{author}{\bibinfo{person}{Matteo Castiglioni}, \bibinfo{person}{Alberto Marchesi}, {and} \bibinfo{person}{Nicola Gatti}.} \bibinfo{year}{2022}\natexlab{a}.
\newblock \showarticletitle{Bayesian Persuasion Meets Mechanism Design: Going Beyond Intractability with Type Reporting}. In \bibinfo{booktitle}{\emph{Proceedings of the 21st International Conference on Autonomous Agents and Multiagent Systems}}. \bibinfo{pages}{226--234}.
\newblock


\bibitem[Castiglioni et~al\mbox{.}(2022b)]%
        {castiglioni2022designing}
\bibfield{author}{\bibinfo{person}{Matteo Castiglioni}, \bibinfo{person}{Alberto Marchesi}, {and} \bibinfo{person}{Nicola Gatti}.} \bibinfo{year}{2022}\natexlab{b}.
\newblock \showarticletitle{Designing Menus of Contracts Efficiently: The Power of Randomization}.
\newblock \bibinfo{journal}{\emph{arXiv preprint arXiv:2202.10966}} (\bibinfo{year}{2022}).
\newblock


\bibitem[Chemla and Hennessy(2014)]%
        {chemla2014skin}
\bibfield{author}{\bibinfo{person}{Gilles Chemla} {and} \bibinfo{person}{Christopher~A Hennessy}.} \bibinfo{year}{2014}\natexlab{}.
\newblock \showarticletitle{Skin in the game and moral hazard}.
\newblock \bibinfo{journal}{\emph{The Journal of Finance}} \bibinfo{volume}{69}, \bibinfo{number}{4} (\bibinfo{year}{2014}), \bibinfo{pages}{1597--1641}.
\newblock


\bibitem[Chen et~al\mbox{.}(2018)]%
        {chen2018optimal}
\bibfield{author}{\bibinfo{person}{Yiling Chen}, \bibinfo{person}{Nicole Immorlica}, \bibinfo{person}{Brendan Lucier}, \bibinfo{person}{Vasilis Syrgkanis}, {and} \bibinfo{person}{Juba Ziani}.} \bibinfo{year}{2018}\natexlab{}.
\newblock \showarticletitle{Optimal data acquisition for statistical estimation}. In \bibinfo{booktitle}{\emph{Proceedings of the 2018 ACM Conference on Economics and Computation}}. \bibinfo{pages}{27--44}.
\newblock


\bibitem[Chen and Kash(2011)]%
        {chen2011information}
\bibfield{author}{\bibinfo{person}{Yiling Chen} {and} \bibinfo{person}{Ian~A Kash}.} \bibinfo{year}{2011}\natexlab{}.
\newblock \showarticletitle{Information elicitation for decision making}. In \bibinfo{booktitle}{\emph{The 10th International Conference on Autonomous Agents and Multiagent Systems-Volume 1}}. \bibinfo{pages}{175--182}.
\newblock


\bibitem[Chen et~al\mbox{.}(2020)]%
        {chen2020selling}
\bibfield{author}{\bibinfo{person}{Yiling Chen}, \bibinfo{person}{Haifeng Xu}, {and} \bibinfo{person}{Shuran Zheng}.} \bibinfo{year}{2020}\natexlab{}.
\newblock \showarticletitle{Selling information through consulting}. In \bibinfo{booktitle}{\emph{Proceedings of the Fourteenth Annual ACM-SIAM Symposium on Discrete Algorithms}}. SIAM, \bibinfo{pages}{2412--2431}.
\newblock


\bibitem[Chen and Zheng(2019)]%
        {chen2019prior}
\bibfield{author}{\bibinfo{person}{Yiling Chen} {and} \bibinfo{person}{Shuran Zheng}.} \bibinfo{year}{2019}\natexlab{}.
\newblock \showarticletitle{Prior-free data acquisition for accurate statistical estimation}. In \bibinfo{booktitle}{\emph{Proceedings of the 2019 ACM Conference on Economics and Computation}}. \bibinfo{pages}{659--677}.
\newblock


\bibitem[Conitzer and Sandholm(2006)]%
        {conitzer2006computing}
\bibfield{author}{\bibinfo{person}{Vincent Conitzer} {and} \bibinfo{person}{Tuomas Sandholm}.} \bibinfo{year}{2006}\natexlab{}.
\newblock \showarticletitle{Computing the optimal strategy to commit to}. In \bibinfo{booktitle}{\emph{Proceedings of the 7th ACM conference on Electronic commerce}}. \bibinfo{pages}{82--90}.
\newblock


\bibitem[Dawkins et~al\mbox{.}(2021)]%
        {NEURIPS2021_e0126439}
\bibfield{author}{\bibinfo{person}{Quinlan Dawkins}, \bibinfo{person}{Minbiao Han}, {and} \bibinfo{person}{Haifeng Xu}.} \bibinfo{year}{2021}\natexlab{}.
\newblock \showarticletitle{The Limits of Optimal Pricing in the Dark}. In \bibinfo{booktitle}{\emph{Advances in Neural Information Processing Systems}}, \bibfield{editor}{\bibinfo{person}{M.~Ranzato}, \bibinfo{person}{A.~Beygelzimer}, \bibinfo{person}{Y.~Dauphin}, \bibinfo{person}{P.S. Liang}, {and} \bibinfo{person}{J.~Wortman Vaughan}} (Eds.), Vol.~\bibinfo{volume}{34}. \bibinfo{publisher}{Curran Associates, Inc.}, \bibinfo{pages}{26649--26660}.
\newblock
\urldef\tempurl%
\url{https://proceedings.neurips.cc/paper_files/paper/2021/file/e0126439e08ddfbdf4faa952dc910590-Paper.pdf}
\showURL{%
\tempurl}


\bibitem[Dughmi and Xu(2017)]%
        {dughmi2017algorithmic}
\bibfield{author}{\bibinfo{person}{Shaddin Dughmi} {and} \bibinfo{person}{Haifeng Xu}.} \bibinfo{year}{2017}\natexlab{}.
\newblock \showarticletitle{Algorithmic persuasion with no externalities}. In \bibinfo{booktitle}{\emph{Proceedings of the 2017 ACM Conference on Economics and Computation}}. \bibinfo{pages}{351--368}.
\newblock


\bibitem[D{\"u}tting et~al\mbox{.}(2019)]%
        {dutting2019simple}
\bibfield{author}{\bibinfo{person}{Paul D{\"u}tting}, \bibinfo{person}{Tim Roughgarden}, {and} \bibinfo{person}{Inbal Talgam-Cohen}.} \bibinfo{year}{2019}\natexlab{}.
\newblock \showarticletitle{Simple versus optimal contracts}. In \bibinfo{booktitle}{\emph{Proceedings of the 2019 ACM Conference on Economics and Computation}}. \bibinfo{pages}{369--387}.
\newblock


\bibitem[Dutting et~al\mbox{.}(2021)]%
        {dutting2021complexity}
\bibfield{author}{\bibinfo{person}{Paul Dutting}, \bibinfo{person}{Tim Roughgarden}, {and} \bibinfo{person}{Inbal Talgam-Cohen}.} \bibinfo{year}{2021}\natexlab{}.
\newblock \showarticletitle{The complexity of contracts}.
\newblock \bibinfo{journal}{\emph{SIAM J. Comput.}} \bibinfo{volume}{50}, \bibinfo{number}{1} (\bibinfo{year}{2021}), \bibinfo{pages}{211--254}.
\newblock


\bibitem[Einav et~al\mbox{.}(2013)]%
        {einav2013selection}
\bibfield{author}{\bibinfo{person}{Liran Einav}, \bibinfo{person}{Amy Finkelstein}, \bibinfo{person}{Stephen~P Ryan}, \bibinfo{person}{Paul Schrimpf}, {and} \bibinfo{person}{Mark~R Cullen}.} \bibinfo{year}{2013}\natexlab{}.
\newblock \showarticletitle{Selection on moral hazard in health insurance}.
\newblock \bibinfo{journal}{\emph{American Economic Review}} \bibinfo{volume}{103}, \bibinfo{number}{1} (\bibinfo{year}{2013}), \bibinfo{pages}{178--219}.
\newblock


\bibitem[Frankel and Kamenica(2019)]%
        {frankel2019quantifying}
\bibfield{author}{\bibinfo{person}{Alexander Frankel} {and} \bibinfo{person}{Emir Kamenica}.} \bibinfo{year}{2019}\natexlab{}.
\newblock \showarticletitle{Quantifying information and uncertainty}.
\newblock \bibinfo{journal}{\emph{American Economic Review}} \bibinfo{volume}{109}, \bibinfo{number}{10} (\bibinfo{year}{2019}), \bibinfo{pages}{3650--3680}.
\newblock


\bibitem[Gan et~al\mbox{.}(2019)]%
        {gan2019imitative}
\bibfield{author}{\bibinfo{person}{Jiarui Gan}, \bibinfo{person}{Haifeng Xu}, \bibinfo{person}{Qingyu Guo}, \bibinfo{person}{Long Tran-Thanh}, \bibinfo{person}{Zinovi Rabinovich}, {and} \bibinfo{person}{Michael Wooldridge}.} \bibinfo{year}{2019}\natexlab{}.
\newblock \showarticletitle{Imitative Follower Deception in Stackelberg Games}. In \bibinfo{booktitle}{\emph{Proceedings of the 2019 ACM Conference on Economics and Computation}}. \bibinfo{pages}{639–657}.
\newblock


\bibitem[Gentzkow and Kamenica(2014)]%
        {gentzkow2014costly}
\bibfield{author}{\bibinfo{person}{Matthew Gentzkow} {and} \bibinfo{person}{Emir Kamenica}.} \bibinfo{year}{2014}\natexlab{}.
\newblock \showarticletitle{Costly persuasion}.
\newblock \bibinfo{journal}{\emph{American Economic Review}} \bibinfo{volume}{104}, \bibinfo{number}{5} (\bibinfo{year}{2014}), \bibinfo{pages}{457--462}.
\newblock


\bibitem[Grossman and Hart(1992)]%
        {grossman1992analysis}
\bibfield{author}{\bibinfo{person}{Sanford~J Grossman} {and} \bibinfo{person}{Oliver~D Hart}.} \bibinfo{year}{1992}\natexlab{}.
\newblock \showarticletitle{An analysis of the principal-agent problem}.
\newblock In \bibinfo{booktitle}{\emph{Foundations of insurance economics}}. \bibinfo{publisher}{Springer}, \bibinfo{pages}{302--340}.
\newblock


\bibitem[Guruganesh et~al\mbox{.}(2021)]%
        {guruganesh2021contracts}
\bibfield{author}{\bibinfo{person}{Guru Guruganesh}, \bibinfo{person}{Jon Schneider}, {and} \bibinfo{person}{Joshua~R Wang}.} \bibinfo{year}{2021}\natexlab{}.
\newblock \showarticletitle{Contracts under moral hazard and adverse selection}. In \bibinfo{booktitle}{\emph{Proceedings of the 22nd ACM Conference on Economics and Computation}}. \bibinfo{pages}{563--582}.
\newblock


\bibitem[Harris and Raviv(1979)]%
        {harris1979optimal}
\bibfield{author}{\bibinfo{person}{Milton Harris} {and} \bibinfo{person}{Artur Raviv}.} \bibinfo{year}{1979}\natexlab{}.
\newblock \showarticletitle{Optimal incentive contracts with imperfect information}.
\newblock \bibinfo{journal}{\emph{Journal of economic theory}} \bibinfo{volume}{20}, \bibinfo{number}{2} (\bibinfo{year}{1979}), \bibinfo{pages}{231--259}.
\newblock


\bibitem[Hartline et~al\mbox{.}(2023)]%
        {pmlr-v195-hartline23a}
\bibfield{author}{\bibinfo{person}{Jason~D. Hartline}, \bibinfo{person}{Liren Shan}, \bibinfo{person}{Yingkai Li}, {and} \bibinfo{person}{Yifan Wu}.} \bibinfo{year}{2023}\natexlab{}.
\newblock \showarticletitle{Optimal Scoring Rules for Multi-dimensional Effort}. In \bibinfo{booktitle}{\emph{Proceedings of Thirty Sixth Conference on Learning Theory}} \emph{(\bibinfo{series}{Proceedings of Machine Learning Research}, Vol.~\bibinfo{volume}{195})}, \bibfield{editor}{\bibinfo{person}{Gergely Neu} {and} \bibinfo{person}{Lorenzo Rosasco}} (Eds.). \bibinfo{publisher}{PMLR}, \bibinfo{pages}{2624--2650}.
\newblock


\bibitem[Holmstr{\"o}m(1979)]%
        {holmstrom1979moral}
\bibfield{author}{\bibinfo{person}{Bengt Holmstr{\"o}m}.} \bibinfo{year}{1979}\natexlab{}.
\newblock \showarticletitle{Moral hazard and observability}.
\newblock \bibinfo{journal}{\emph{The Bell journal of economics}} (\bibinfo{year}{1979}), \bibinfo{pages}{74--91}.
\newblock


\bibitem[Jang and Kang(2021)]%
        {jang2021adverse}
\bibfield{author}{\bibinfo{person}{Inkee Jang} {and} \bibinfo{person}{Kee-Youn Kang}.} \bibinfo{year}{2021}\natexlab{}.
\newblock \showarticletitle{Adverse selection and costly information acquisition in asset markets}.
\newblock \bibinfo{journal}{\emph{Journal of Mathematical Economics}}  \bibinfo{volume}{97} (\bibinfo{year}{2021}), \bibinfo{pages}{102533}.
\newblock


\bibitem[Kadan et~al\mbox{.}(2017)]%
        {kadan2017existence}
\bibfield{author}{\bibinfo{person}{Ohad Kadan}, \bibinfo{person}{Philip~J Reny}, {and} \bibinfo{person}{Jeroen~M Swinkels}.} \bibinfo{year}{2017}\natexlab{}.
\newblock \showarticletitle{Existence of Optimal Mechanisms in Principal-Agent Problems}.
\newblock \bibinfo{journal}{\emph{Econometrica}} \bibinfo{volume}{85}, \bibinfo{number}{3} (\bibinfo{year}{2017}), \bibinfo{pages}{769--823}.
\newblock


\bibitem[Kamenica and Gentzkow(2011)]%
        {kamenica2011bayesian}
\bibfield{author}{\bibinfo{person}{Emir Kamenica} {and} \bibinfo{person}{Matthew Gentzkow}.} \bibinfo{year}{2011}\natexlab{}.
\newblock \showarticletitle{Bayesian persuasion}.
\newblock \bibinfo{journal}{\emph{American Economic Review}} \bibinfo{volume}{101}, \bibinfo{number}{6} (\bibinfo{year}{2011}), \bibinfo{pages}{2590--2615}.
\newblock


\bibitem[Kiekintveld et~al\mbox{.}(2013)]%
        {kiekintveld2013security}
\bibfield{author}{\bibinfo{person}{Christopher Kiekintveld}, \bibinfo{person}{Towhidul Islam}, {and} \bibinfo{person}{Vladik Kreinovich}.} \bibinfo{year}{2013}\natexlab{}.
\newblock \showarticletitle{Security games with interval uncertainty}.
\newblock  (\bibinfo{year}{2013}).
\newblock


\bibitem[Kolotilin et~al\mbox{.}(2017)]%
        {kolotilin2017persuasion}
\bibfield{author}{\bibinfo{person}{Anton Kolotilin}, \bibinfo{person}{Tymofiy Mylovanov}, \bibinfo{person}{Andriy Zapechelnyuk}, {and} \bibinfo{person}{Ming Li}.} \bibinfo{year}{2017}\natexlab{}.
\newblock \showarticletitle{Persuasion of a privately informed receiver}.
\newblock \bibinfo{journal}{\emph{Econometrica}} \bibinfo{volume}{85}, \bibinfo{number}{6} (\bibinfo{year}{2017}), \bibinfo{pages}{1949--1964}.
\newblock


\bibitem[Laffont and Martimort(2009)]%
        {laffont2009theory}
\bibfield{author}{\bibinfo{person}{Jean-Jacques Laffont} {and} \bibinfo{person}{David Martimort}.} \bibinfo{year}{2009}\natexlab{}.
\newblock \showarticletitle{The theory of incentives}.
\newblock In \bibinfo{booktitle}{\emph{The Theory of Incentives}}. \bibinfo{publisher}{Princeton university press}.
\newblock


\bibitem[Li et~al\mbox{.}(2022)]%
        {Liopt-score}
\bibfield{author}{\bibinfo{person}{Yingkai Li}, \bibinfo{person}{Jason~D. Hartline}, \bibinfo{person}{Liren Shan}, {and} \bibinfo{person}{Yifan Wu}.} \bibinfo{year}{2022}\natexlab{}.
\newblock \showarticletitle{Optimization of Scoring Rules}. In \bibinfo{booktitle}{\emph{Proceedings of the 23rd ACM Conference on Economics and Computation}} (Boulder, CO, USA) \emph{(\bibinfo{series}{EC '22})}. \bibinfo{publisher}{Association for Computing Machinery}, \bibinfo{pages}{988–989}.
\newblock
\urldef\tempurl%
\url{https://doi.org/10.1145/3490486.3538338}
\showDOI{\tempurl}


\bibitem[Mat{\v{e}}jka and McKay(2015)]%
        {matvejka2015rational}
\bibfield{author}{\bibinfo{person}{Filip Mat{\v{e}}jka} {and} \bibinfo{person}{Alisdair McKay}.} \bibinfo{year}{2015}\natexlab{}.
\newblock \showarticletitle{Rational inattention to discrete choices: A new foundation for the multinomial logit model}.
\newblock \bibinfo{journal}{\emph{American Economic Review}} \bibinfo{volume}{105}, \bibinfo{number}{1} (\bibinfo{year}{2015}), \bibinfo{pages}{272--298}.
\newblock


\bibitem[Maug(1998)]%
        {maug1998large}
\bibfield{author}{\bibinfo{person}{Ernst Maug}.} \bibinfo{year}{1998}\natexlab{}.
\newblock \showarticletitle{Large shareholders as monitors: Is there a trade-off between liquidity and control?}
\newblock \bibinfo{journal}{\emph{The journal of finance}} \bibinfo{volume}{53}, \bibinfo{number}{1} (\bibinfo{year}{1998}), \bibinfo{pages}{65--98}.
\newblock


\bibitem[Menzio and Shi(2011)]%
        {menzio2011efficient}
\bibfield{author}{\bibinfo{person}{Guido Menzio} {and} \bibinfo{person}{Shouyong Shi}.} \bibinfo{year}{2011}\natexlab{}.
\newblock \showarticletitle{Efficient search on the job and the business cycle}.
\newblock \bibinfo{journal}{\emph{Journal of Political Economy}} \bibinfo{volume}{119}, \bibinfo{number}{3} (\bibinfo{year}{2011}), \bibinfo{pages}{468--510}.
\newblock


\bibitem[Mirrlees(1976)]%
        {mirrlees1976optimal}
\bibfield{author}{\bibinfo{person}{James~A Mirrlees}.} \bibinfo{year}{1976}\natexlab{}.
\newblock \showarticletitle{The optimal structure of incentives and authority within an organization}.
\newblock \bibinfo{journal}{\emph{The Bell Journal of Economics}} (\bibinfo{year}{1976}), \bibinfo{pages}{105--131}.
\newblock


\bibitem[Myerson(1981)]%
        {myerson1981optimal}
\bibfield{author}{\bibinfo{person}{Roger~B Myerson}.} \bibinfo{year}{1981}\natexlab{}.
\newblock \showarticletitle{Optimal auction design}.
\newblock \bibinfo{journal}{\emph{Mathematics of operations research}} \bibinfo{volume}{6}, \bibinfo{number}{1} (\bibinfo{year}{1981}), \bibinfo{pages}{58--73}.
\newblock


\bibitem[Myerson(1982)]%
        {myerson1982optimal}
\bibfield{author}{\bibinfo{person}{Roger~B Myerson}.} \bibinfo{year}{1982}\natexlab{}.
\newblock \showarticletitle{Optimal coordination mechanisms in generalized principal--agent problems}.
\newblock \bibinfo{journal}{\emph{Journal of mathematical economics}} \bibinfo{volume}{10}, \bibinfo{number}{1} (\bibinfo{year}{1982}), \bibinfo{pages}{67--81}.
\newblock


\bibitem[Oesterheld and Conitzer(2020)]%
        {oesterheld2020minimum}
\bibfield{author}{\bibinfo{person}{Caspar Oesterheld} {and} \bibinfo{person}{Vincent Conitzer}.} \bibinfo{year}{2020}\natexlab{}.
\newblock \showarticletitle{Minimum-regret contracts for principal-expert problems}. In \bibinfo{booktitle}{\emph{International Conference on Web and Internet Economics}}. Springer, \bibinfo{pages}{430--443}.
\newblock


\bibitem[Papireddygari and Waggoner(2022)]%
        {papireddygari2022contracts}
\bibfield{author}{\bibinfo{person}{Maneesha Papireddygari} {and} \bibinfo{person}{Bo Waggoner}.} \bibinfo{year}{2022}\natexlab{}.
\newblock \showarticletitle{Contracts with Information Acquisition, via Scoring Rules}. In \bibinfo{booktitle}{\emph{Proceedings of the 23rd ACM Conference on Economics and Computation}}. \bibinfo{pages}{703--704}.
\newblock


\bibitem[Paruchuri et~al\mbox{.}(2008)]%
        {paruchuri2008playing}
\bibfield{author}{\bibinfo{person}{Praveen Paruchuri}, \bibinfo{person}{Jonathan~P Pearce}, \bibinfo{person}{Janusz Marecki}, \bibinfo{person}{Milind Tambe}, \bibinfo{person}{Fernando Ordonez}, {and} \bibinfo{person}{Sarit Kraus}.} \bibinfo{year}{2008}\natexlab{}.
\newblock \showarticletitle{Playing games for security: An efficient exact algorithm for solving Bayesian Stackelberg games}. In \bibinfo{booktitle}{\emph{Proceedings of the 7th international joint conference on Autonomous agents and multiagent systems-Volume 2}}. \bibinfo{pages}{895--902}.
\newblock


\bibitem[Ross(1973)]%
        {ross1973economic}
\bibfield{author}{\bibinfo{person}{Stephen~A Ross}.} \bibinfo{year}{1973}\natexlab{}.
\newblock \showarticletitle{The economic theory of agency: The principal's problem}.
\newblock \bibinfo{journal}{\emph{The American economic review}} \bibinfo{volume}{63}, \bibinfo{number}{2} (\bibinfo{year}{1973}), \bibinfo{pages}{134--139}.
\newblock


\bibitem[Sims(2003)]%
        {sims2003implications}
\bibfield{author}{\bibinfo{person}{Christopher~A Sims}.} \bibinfo{year}{2003}\natexlab{}.
\newblock \showarticletitle{Implications of rational inattention}.
\newblock \bibinfo{journal}{\emph{Journal of monetary Economics}} \bibinfo{volume}{50}, \bibinfo{number}{3} (\bibinfo{year}{2003}), \bibinfo{pages}{665--690}.
\newblock


\bibitem[Smith(2004)]%
        {smith2004contract}
\bibfield{author}{\bibinfo{person}{Stephen~A Smith}.} \bibinfo{year}{2004}\natexlab{}.
\newblock \bibinfo{booktitle}{\emph{Contract theory}}.
\newblock \bibinfo{publisher}{OUP Oxford}.
\newblock


\bibitem[Stackelberg(1934)]%
        {stackelberg1934marktform}
\bibfield{author}{\bibinfo{person}{Heinrich~von Stackelberg}.} \bibinfo{year}{1934}\natexlab{}.
\newblock \showarticletitle{Marktform und gleichgewicht}.
\newblock  (\bibinfo{year}{1934}).
\newblock


\bibitem[Tambe(2011)]%
        {tambe2011security}
\bibfield{author}{\bibinfo{person}{Milind Tambe}.} \bibinfo{year}{2011}\natexlab{}.
\newblock \bibinfo{booktitle}{\emph{Security and game theory: algorithms, deployed systems, lessons learned}}.
\newblock \bibinfo{publisher}{Cambridge university press}.
\newblock


\bibitem[Van~Long and Sorger(2010)]%
        {van2010dynamic}
\bibfield{author}{\bibinfo{person}{Ngo Van~Long} {and} \bibinfo{person}{Gerhard Sorger}.} \bibinfo{year}{2010}\natexlab{}.
\newblock \showarticletitle{A dynamic principal-agent problem as a feedback Stackelberg differential game}.
\newblock \bibinfo{journal}{\emph{Central European Journal of Operations Research}} \bibinfo{volume}{18}, \bibinfo{number}{4} (\bibinfo{year}{2010}), \bibinfo{pages}{491--509}.
\newblock


\bibitem[Von~Stengel and Zamir(2004)]%
        {von2004leadership}
\bibfield{author}{\bibinfo{person}{Bernhard Von~Stengel} {and} \bibinfo{person}{Shmuel Zamir}.} \bibinfo{year}{2004}\natexlab{}.
\newblock \bibinfo{booktitle}{\emph{Leadership with commitment to mixed strategies}}.
\newblock \bibinfo{type}{{T}echnical {R}eport}. \bibinfo{institution}{Citeseer}.
\newblock


\bibitem[Von~Stengel and Zamir(2010)]%
        {von2010leadership}
\bibfield{author}{\bibinfo{person}{Bernhard Von~Stengel} {and} \bibinfo{person}{Shmuel Zamir}.} \bibinfo{year}{2010}\natexlab{}.
\newblock \showarticletitle{Leadership games with convex strategy sets}.
\newblock \bibinfo{journal}{\emph{Games and Economic Behavior}} \bibinfo{volume}{69}, \bibinfo{number}{2} (\bibinfo{year}{2010}), \bibinfo{pages}{446--457}.
\newblock


\bibitem[Wang and Williamson(1998)]%
        {wang1998debt}
\bibfield{author}{\bibinfo{person}{Cheng Wang} {and} \bibinfo{person}{Stephen~D Williamson}.} \bibinfo{year}{1998}\natexlab{}.
\newblock \showarticletitle{Debt contracts and financial intermediation with costly screening}.
\newblock \bibinfo{journal}{\emph{Canadian Journal of Economics}} (\bibinfo{year}{1998}), \bibinfo{pages}{573--595}.
\newblock


\bibitem[Xiao et~al\mbox{.}(2020)]%
        {xiao2020optimal}
\bibfield{author}{\bibinfo{person}{Shenke Xiao}, \bibinfo{person}{Zihe Wang}, \bibinfo{person}{Mengjing Chen}, \bibinfo{person}{Pingzhong Tang}, {and} \bibinfo{person}{Xiwang Yang}.} \bibinfo{year}{2020}\natexlab{}.
\newblock \showarticletitle{Optimal common contract with heterogeneous agents}. In \bibinfo{booktitle}{\emph{Proceedings of the AAAI Conference on Artificial Intelligence}}, Vol.~\bibinfo{volume}{34}. \bibinfo{pages}{7309--7316}.
\newblock


\bibitem[Xu et~al\mbox{.}(2016)]%
        {xu2016signaling}
\bibfield{author}{\bibinfo{person}{Haifeng Xu}, \bibinfo{person}{Rupert Freeman}, \bibinfo{person}{Vincent Conitzer}, \bibinfo{person}{Shaddin Dughmi}, {and} \bibinfo{person}{Milind Tambe}.} \bibinfo{year}{2016}\natexlab{}.
\newblock \showarticletitle{Signaling in Bayesian Stackelberg Games}. In \bibinfo{booktitle}{\emph{Proceedings of the 2016 International Conference on Autonomous Agents \& Multiagent Systems}}. \bibinfo{pages}{150--158}.
\newblock


\bibitem[Yang(2022)]%
        {yang2022selling}
\bibfield{author}{\bibinfo{person}{Kai~Hao Yang}.} \bibinfo{year}{2022}\natexlab{}.
\newblock \showarticletitle{Selling consumer data for profit: Optimal market-segmentation design and its consequences}.
\newblock \bibinfo{journal}{\emph{American Economic Review}} \bibinfo{volume}{112}, \bibinfo{number}{4} (\bibinfo{year}{2022}), \bibinfo{pages}{1364--93}.
\newblock


\end{thebibliography}

\section{Additional Discussion on the Related Work}\label{appendix_sec:related_work}
The problem of information acquisition to reduce adverse selection has been studied in the literature of asset markets, where the buyer often wants to acquire costly information or signals about assets' types from outside sources, also known as \emph{screening}.
\citet{jang2021adverse,wang1998debt,maug1998large,chemla2014skin}. These studies analyze the equilibrium structure and properties of the induced game. For instance, in an important earlier work,    
\citet{maug1998large} showed that large shareholders conduct costly information acquisition
only if the asset market is sufficiently liquid; 
\citet{chemla2014skin} demonstrated that  market liquidity incentivizes lenders to acquire costly information about asset quality. Similar costly information acquisition to reduce adverse selection has been studied in the labor market where the employer  as the principal often acquires signals to screen job applicants (i.e., agents) with private type \cite{sims2003implications,menzio2011efficient}. A more decision-theoretic perspective of costly information acquisition for reducing uncertainty is studied by \citet{matvejka2015rational}.  In contrast, our paper adopts an algorithmic approach to understand the computational complexity of solving for optimal information acquisition with respect to different information cost structures.

The problem of Bayesian persuasion of multiple receivers is studied in \cite{arieli2019private,babichenko2017algorithmic,dughmi2017algorithmic}. The no externality assumption means any receiver's utility is independent of any other receiver's action. In this case, when the sender's utility is additive across different receiver types, the problem will be mathematically equivalent to persuading a single receiver whose type $\theta$ is drawn  from set $\Theta$ uniformly at random. However, this problem starts to be different and more complex when the sender's utility is not additive across the receivers. For instance, \cite{arieli2019private,babichenko2017algorithmic,dughmi2017algorithmic} have considered supermodular or submodular sender utilities, which is the point their problem starts to depart from the problem here. Our framework can be generalized to the situation of single-principal-multiple-agent interactions to accommodate these more general models. That is an interesting future direction, though is out of the scope of the present paper.

\section{Proof of Theorem \ref{thm:revelation-principle}}\label{appendix_sec:thm_rp}
\begin{proof}

First, we generalize Proposition \ref{prop:revelation-principle} to accommodate supplemental constraints. Let us note that Proposition~\ref{prop:revelation-principle} by \citet{myerson1982optimal} applies for the case of infinite action space $\cX$ (as is explicitly pointed out by the author), but it does not consider the supplemental constraints on $\pi$ that we have just introduced in this paper. Hence, we first check that Proposition~\ref{prop:revelation-principle} holds even with supplemental constraints. 

Specifically, for any (non-direct, non-truthful) coordination mechanism $M$, we define $\Sigma$ as the set of all messages which agent might receive, $\cR$ as the set of possible reports that might be sent by the agent. As a result, we have $\rho: \Theta \rightarrow \cR$ as the agent's reporting strategy, $\delta: \Sigma \times \Theta \rightarrow \mathcal{A}$ as the agent's choice of decision, and the coordination policy $\pi: \cX\times\Sigma\times \cR \to [0,1]$ and $\sum_{\bx \in \cX, m\in \Sigma} \pi\big(\bx, m; \rho(\theta)\big) = 1$ for all $\theta \in \Theta$. The supplemental constraint is given by,
$\label{eq:sup_constr_nondirect} \sum_{\bx\in\mathcal{X}} \sum_{m \in \Sigma} \pi(\bx, m, \rho(\theta))\cdot \bx \in \mathcal{C}. $ To construct a direct, truthful mechanism, we can simulate the equilibrium in $M$. That is, we define $\delta^{-1}(a, \theta) = \big\{m|\delta(m, \theta) = a\big\}$ and $\pi^*: \mathcal{X}\times \mathcal{A} \times\Theta \rightarrow [0,1]$ such that
$\pi^*(\bx, a; \theta) = \sum_{m\in \delta^{-1}(a, \theta)} \pi\big(\bx, m, \rho(\theta)\big)$. As a result, the direct and truthful coordination mechanism satisfies the additional constraints as,
\begin{equation*}
\begin{split}
    \label{eq:sup_constr_direct}
    \sum_{\bx\in\mathcal{X}} \sum_{a \in \mathcal{A}} \pi^*(\bx, a; \theta)\cdot \bx  
   = \sum_{\bx\in\mathcal{X}} \sum_{a \in \mathcal{A}} \,\sum_{m\in \delta^{-1}(a, \theta)} \pi\big(\bx, m, \rho(\theta)\big) \cdot \bx 
   = \sum_{\bx\in\mathcal{X}} \sum_{m \in \Sigma} \pi(\bx, m, \rho(\theta))\cdot \bx \in \mathcal{C}.
\end{split}
\end{equation*} 

It now remains to show that the support of $\pi(\cdot, a; \theta)$ needs not to be larger than $1$. 
  To prove this, let us pick an arbitrary coordination mechanism $\pi$ that is truthful and direct. 
We construct a new coordination mechanism $\bar{\pi}$ as follows.
For every $a\in \cA$ and $\theta \in \Theta$, let the $\cX^{a, \theta}$ be the set of principal strategies against which action $a$ is a best response of a type-$\theta$ agent. 
Moreover, define the following strategy
$\bar{\bx}^{a, \theta} \coloneqq \frac{\sum_{\bx \in \cX^{a, \theta}} \pi(\bx, a;\theta) \cdot \bx }{\sum_{\bx \in \cX^{a, \theta}}  \pi(\bx, a;\theta)}$,
and let the new coordination mechanism choose this strategy with probability $\bar{\pi}(a;\theta) = \sum_{\bx \in \cX^{a, \theta}} \pi(\bx, a;\theta)$.\footnote{Throughout this proof, if $\cX^{a,\theta}$ is an infinite set we can replace the summation operation to integration.
}

It is easy to see that $\bar{\bx}^{a, \theta} \in \cX$ as it is a convex combination of the strategies in $\cX^{a, \theta}$. Moreover, for every $\theta\in \Theta$, we have $\sum_{a\in \cA} \bar{\pi}(a; \theta) = 1$, so $\bar{\pi}(\cdot;\theta)$ is a valid probability distribution. 
We next show the following facts about $\bar{\pi}$ to complete the proof.
\begin{enumerate}
   \item 
    First, $\bar{\pi}$ remains feasible. In particular, it satisfies the supplemental constraints defined in \eqref{eq:supplimental-const}: 
    for all $\theta \in \Theta$, we have
    \begin{align*}
\sum_{a \in \cA} \bar{\pi}(a;\theta) \cdot \bar{\bx}^{a, \theta} 
    = \sum_{a \in \cA} \sum_{\bx \in \cX^{a,\theta}} {\pi}(\bx, a;\theta) \cdot \bx 
    = \sum_{a \in \cA} \sum_{\bx \in \cX} {\pi}(\bx, a;\theta) \cdot \bx 
    \in \cC_{\theta}.
    \end{align*}

    \item 
    Second, $\bar{\pi}$ is incentive compatible:
    for any $\theta' \in \Theta$, we have 
\begin{align*}
 & \quad \sum_{a \in \cA} \bar{\pi}(a;\theta) \cdot V(\bar{\bx}^{a, \theta}, a; \theta)  
=  \sum_{a \in \cA} \sum_{\bx \in \cX^{a, \theta}} \pi(\bx, a;\theta) \cdot V(\bx, a; \theta)   
=  \sum_{\bx \in \cX} \sum_{ a \in \cA} \pi(\bx, a;\theta) \cdot V(\bx, a; \theta)   \\
&\quad \geq \sum_{\bx \in \cX} \sum_{a \in \cA} \big[ \pi(\bx, {a}; \theta') \cdot \max_{a'\in \cA} V(\bx, a'; \theta ) \big] 
= \sum_{a \in \cA} \sum_{\bx \in \cX^{a, \theta}} \big[ \pi(\bx, a;\theta') \cdot \max_{a'\in \cA} V(\bx, a'; \theta ) \big]  \\
 &\quad \geq  \sum_{a \in \cA} \bar{\pi}(a;\theta') \cdot \max_{a'\in \cA} V(\bar{\bx}^{a, \theta}, a'; \theta ) \big].
\end{align*}
Here the first transition follows by the definition of $\bar{\bx}^{a, \theta}$ and the linearity of $V$ with respect to $\bx$. %
The second follows by merging the strategies $\bx \in \cX^{a, \theta}$ into the summation terms. The third follows by the truthfulness constraint of the original coordination mechanism. The fourth follows by separating the strategies $\bx \in \cX^{a, \theta} $ from the summation terms. The last transition uses Jensen's inequality and the convexity of point-wise maximization.

    \item 
    Finally, 
    merging the strategies in $\cX^{a, \theta}$ will not cause the principal's utility to decrease.
    Specifically, for all $a \in \cA$ and $\theta \in \Theta$, using Jensen's inequality and the concavity of $U$ with respect to $\bx$, we get that
\begin{align*}
 \bar{\pi}( a; \theta) \cdot U(\bar{\bx}^{a, \theta}, a; \theta) 
\geq \sum_{\bx\in \cX^{a, \theta}} \pi(\bx, a; \theta) \cdot U(\bx, a; \theta) 
= \sum_{\bx\in \cX} \pi(\bx, a; \theta) \cdot U(\bx, a; \theta),
\end{align*}
where the equality is due to the incentive compatibility of $\bar{\pi}$ (i.e., $ \pi(\bx, a; \theta) = 0$ for $\bx \notin \cX^{a, \theta}$ ).
Now that $\bar{\pi}$ is also incentive compatible as we argued above, the principal's overall expected utility can be written as $\sum_{\theta \in \Theta} f(\theta) \sum_{a \in \cA} \bar{\pi}(a; \theta) \cdot U(\bar{\bx}^{a, \theta}, a; \theta)$. Hence, 
\begin{align*}
\sum_{\theta \in \Theta} f(\theta) \sum_{a \in \cA} \bar{\pi}( a; \theta) \cdot U(\bar{\bx}^{a, \theta}, a; \theta)
\ge \sum_{\theta \in \Theta} f(\theta) \sum_{a \in \cA} \sum_{\bx\in \cX} \pi(\bx, a; \theta) \cdot U(\bx, a; \theta),
\end{align*}
which means that the principal's utility for $\bar{\pi}$ is at least as large as that for $\pi$.
\qedhere
\end{enumerate}
\end{proof}

\section{Omitted Proof of Theorem \ref{thm:computability}}\label{appendix_sec:thm_computability}

\subsection{ Proof of Lemma \ref{lm:convex-z}}
\begin{proof}
To prove that $\cP$ is a convex set, pick any two element $( \lambda_1, \lambda_1 \bx_1), ( \lambda_2, \lambda_2 \bx_2)$ and we show that for any $\alpha\in [0,1]$, the element as a convex combination $( \alpha\lambda_1 + (1-\alpha)\lambda_2, \alpha \lambda_1 \bx_1 + (1-\alpha)\lambda_2 \bx_2)$ is still in  $\cP$. 

It suffices to consider two cases and show in both cases that the constructed element can be written as $(\lambda_3, \lambda_3 \bx_3)$ where $\lambda_3 \in [0,1] $ and $\lambda_3 \in \cX$.

\begin{enumerate}
\item 
     Any of $\lambda_1, \lambda_2$ is $0$: Without loss of generality, suppose $\lambda_1 = 0$, then we have $\lambda_1 \bx_1 = \zero$ and the new element must be in the form of $((1-\alpha)\lambda_2,  (1-\alpha)\lambda_2 \bx_2)$. Then, we can let $\lambda_3 = (1-\alpha)\lambda_2 \in [0, 1]$ and $\bx_3 = \bx_2 \in \cX$. 

\item 
None of $\lambda_1, \lambda_2$ is $0$:
First, we can easily see that $ \lambda_3 = \alpha\lambda_1 + (1-\alpha)\lambda_2 \in [0,1]$. 

Second, 
we have $\bx_3 = \frac{\alpha \lambda_1 \bx_1 + (1-\alpha)\lambda_2 \bx_2}{\alpha\lambda_1 + (1-\alpha)\lambda_2} =  \frac{\alpha \lambda_1 }{\alpha\lambda_1 + (1-\alpha)\lambda_2}\bx_1 + \frac{(1-\alpha) \lambda_2 }{\alpha\lambda_1 + (1-\alpha)\lambda_2}\bx_2 \in \cX$, since $ \frac{\alpha \lambda_1 }{\alpha\lambda_1 + (1-\alpha)\lambda_2} \in [0, 1]$ and $ \frac{\alpha \lambda_1 }{\alpha\lambda_1 + (1-\alpha)\lambda_2} + \frac{(1-\alpha) \lambda_2 }{\alpha\lambda_1 + (1-\alpha)\lambda_2} = 1$. 
\qedhere
\end{enumerate}
\end{proof}

\subsection{Proof of Lemma \ref{lm:lp-equivalence}}
\begin{proof}
For $\sup (\textup{CP} \eqref{lp:transformed}) = \sup (\textup{OP} \eqref{lp:original})$, we need to show both directions holds:

\begin{enumerate}
    \item $\sup (\textup{CP} \eqref{lp:transformed}) \geq \sup (\textup{OP} \eqref{lp:original})$.
    That is, for any $ \{( \pi(a; \theta), \bx^{a, \theta}) : \theta \in \Theta, a \in \cA\}$ in OP \eqref{lp:original}, we can construct the mechanism $ \{( \pi(a; \theta), \bx^{a, \theta}) : \theta \in \Theta, a \in \cA\}$ in CP \eqref{lp:transformed} that achieves the same utility. The construction is done by setting $\bz^{a, \theta} =  \pi(a; \theta) \cdot \bx^{a, \theta}$ and keeping $\pi(a; \theta)$ the same as before 
    for each $(a, \theta)$. It is immediately clear according to this construction that $ \{( \pi(a; \theta), \bx^{a, \theta}) : \theta \in \Theta, a \in \cA\}$ satisfies all constraints in CP \eqref{lp:transformed} and has exactly the same objective as the original mechanism.
    
    \item $\sup (\textup{CP} \eqref{lp:transformed}) \leq \sup (\textup{OP} \eqref{lp:original})$. 
    That is, for any $ \{( \pi(a; \theta), \bz^{a, \theta}) : \theta \in \Theta, a \in \cA\}$ in CP \eqref{lp:transformed}, we can construct a mechanism $ \{( \pi(a; \theta), \bx^{a, \theta}) : \theta \in \Theta, a \in \cA\}$ in OP \eqref{lp:original} that achieves the same utility. The construction is done by setting $\bx^{a, \theta} = \begin{cases}\bz^{a, \theta} / \pi(a; \theta) & \pi(a; \theta) \neq 0 \\ 0 & \pi(a; \theta) = 0 \end{cases}$ and keeping $\pi(a; \theta)$ the same as before for each $(a, \theta)$. %
    Observe that for any $(a, \theta)$, if $(\pi(a; \theta), \bz^{a, \theta})=(0, \zero)$, both the principal's and agent's utilities for action $a$ and type $\theta$ must be zero in CP~\eqref{lp:transformed}; Similarly, both utilities are zero in OP \eqref{lp:original} when $(\pi(a; \theta), \bx^{a, \theta})= (0, \zero)$ .
    It remains to verify that $ \{( \pi(a; \theta), \bx^{a, \theta}) : \theta \in \Theta, a \in \cA\}$ satisfies all the constraints in OP~\eqref{lp:original} and has the exact same objective as the original mechanism, which is straightforward.
    \qedhere
\end{enumerate}
\end{proof}

\subsection{Proof of Lemma \ref{lm:compact-z}}
\begin{proof}

Clearly, $\cP' = \{(\lambda, \bx) : \bx \in \cX, \lambda \in [0,1]\} = [0,1] \times \cX$ is a compact set.
The set $\cP$ is the image of $\cP'$ under the continuous mapping $\cT(\lambda, \bx) = (\lambda, \lambda\bx)$.
Since compactness is preserved under continuous mapping, it follows immediately that $\cP$ is compact. 
\end{proof}

\subsection{Proof of Lemma \ref{lm:unbounded-z}}
\begin{proof}

Let $\hat{\cP} = \{ ( 0, \bz) : \bz \neq \zero \}$. We will argue that for all $p \in \bar{\cP}$, it holds that $p \in \cP$ if and only if $p \notin \hat{\cP}$.
This easily generalizes to the stated result in the lemma as CPs~\eqref{lp:transformed-closure} and \eqref{lp:transformed} differ only in the second sets of constraints.

Pick arbitrary $p = (\lambda, \bz) \in \bar{\cP}$.
Since $p \in \bar{\cP}$, by definition, there exists a convergent sequence $p^1, p^2, \dots$, such that $p^{(\ell)} \in \cP$ for all $\ell = 1,2,\dots$, and $\lim_{\ell \to \infty} p^{(\ell)} = p$.
Let $p^{(\ell)} = (\lambda^{(\ell)}, \bz^{(\ell)})$.
Hence, $\lim_{\ell \to \infty} \lambda^{(\ell)} = \lambda$ and $\lim_{\ell \to \infty} \bz^{(\ell)} = \bz$.
The fact that $p^{(\ell)} \in \cP$ implies that $\lambda^{(\ell)} \in [0,1]$, so it must be that $\lambda \in [0,1]$.
Consider the following two cases:

\begin{itemize}
\item If $\lambda = 0$, then either $\bz = \zero$, in which case $p = (0, \zero) \in \cP$, or $\bz \neq \zero$, in which case $p \in \hat{\cP}$. 
Hence, $p \in \cP$ if and only if $p \notin \hat{\cP}$.

\item If $\lambda \in (0,1]$, we have 
$\lim_{\ell \to \infty} \bz^{(\ell)} / \lambda^{(\ell)} = \bz/\lambda$.
The fact that $p^{(\ell)} \in \cP$ implies that $\bz^{(\ell)} / \lambda^{(\ell)} \in \cX$ according to the definition of $\cP$. 
Recall that $\cX$ is a closed set, which means $\bz/\lambda \in \cX$. As a result, $p = (\lambda, \bz) \in \cP$. Moreover, since $\lambda \neq 0$, we also have $p \notin \bar{\cP}$ in this case.  
\end{itemize}
Hence, $p \in \cP$ if and only if $p \notin \hat{\cP}$.
\end{proof}

\section{Additional Special Cases of  Generalized Principal-Agency}\label{appendix:additional_special}
\subsection{Bayesian Contract Design: A Case with Unbounded Strategy Space}
In this model, each agent's action $a$  will emit  a random outcome $i$  drawn from $\{ 1, 2, \cdots, d \} = [d]$, which induce principal reward $r_i\in \RR$.  Let $\br \in \RR^d$ denote the reward vector over   outcomes. An agent of type $\theta$ suffers cost $c^{\theta}_a$ for taking action $a$, whereas action $a$ will lead to outcome $i$ with probability $P^{\theta}_{a,i}$. Let $P^{\theta}_a \in \Delta^d$ denote the probability distribution.  Similar to our general setup, the agent's type $\theta$ is assumed to be drawn from the distribution $f(\theta)$, game parameters are publicly known whereas the agent privately knows $a, \theta$ which the principal does not observe. Notably, the realized outcome $i$ is publicly observable.  The principal's goal here is to design \emph{contracts} that incentivize the  agent  to take an action that (hopefully) has low payment but high return to the principal. Naturally, such an ideal action is generally different for different agent types. Formally, a contract $\bx \in \RR_+^d$ is a vector that describes payment $x_i$ for realized outcome $i$. Naturally, the principal could use different contracts for different agent types, or even randomized contracts.

The Bayesian contract design problem above turns out to be a special case of our general model, with the principal's action space as $\cX = \RR_+^d$ that contains all possible contracts  and is an unbounded set. The principal's and agent's utilities $U, V: \RR^{d} \times A \times \Theta \to \RR$ turn out to be the following affine functions: %
\begin{eqnarray*}
U(\bx, a; \theta) = P^\theta_a \cdot  (\br - \bx)  \qquad  V(\bx, a;\theta) = P^\theta_{a} \cdot  \bx - c^\theta_a 
\end{eqnarray*}

\subsection{Bayesian Persuasion: A Case with supplemental Constraints}
\label{append_sec:info-design}
In this model, each agent type $\theta$ is called a \emph{receiver} who chooses an  action $a \in \cA$ whereas the principal is the \emph{sender}. Both the sender and each receiver's utility depends on the receiver's action $a \in \cA$ and a random state of nature $\omega \in \Omega$ with $|\Omega| = d$. We denote $u^\theta(\omega, a) \in \mathbb{R} $ (resp.  $v^\theta(\omega, a) \in \mathbb{R}$) as the sender's utility  (resp. receiver) when receiver type $\theta$ takes action $a$ under state $\omega$. Let $u^\theta, v^\theta \in \RR^{\Omega \times \cA}$ denote the corresponding utility matrices. The state $\omega$ is assumed to be drawn from some prior distribution $\mu$. All these game parameters are public knowledge. Each receiver privately knows their own type $\theta$ (but the receiver does not observe the realized state $\omega$).  

The persuasion problem assumes that the sender can design \emph{signaling schemes} that strategically reveal partial   information about the realized state $\omega$ hence influence the receiver's action and further the sender's own utility.  A signaling scheme $\pi(s;\omega)$  prescribes the probability of sending some \emph{signal} $s \in \cS$ conditioned on realized state $\omega$. Hence naturally we required $\sum_{s \in \cS} \pi(s;\omega) = 1$ for every $\omega$. Upon receiving any signal $s$, the receiver can infer a posterior distribution of the underlying state $\omega$ (which is all that matters to the receiver's decision making) based on the standard Bayes rule: $$x_{\omega}^s = \Pr(\omega|s) = \frac{ \pi(s; \omega) \times \mu(\omega) }{ \sum_{\omega'}\pi(s; \omega') \times \mu(\omega') }.$$
Hence one could  view each signal $s$ mathematically equivalently as a a distribution over states $\omega$, described by the vector $\bx^s \in \Delta(\Omega)$, which happens with probability $\pi(s) = \sum_{\omega'}\pi(s; \omega') \times \mu(\omega')$. Observe that $\sum_{s \in S} \pi(s) \cdot \bx^s = \mu$. This is not   a coincidence --- the well-known Bayes plausibility result (e.g., \cite{aumann1995repeated}) asserts that any signaling scheme is mathematically equivalent to a distribution $\pi(s)$ over posterior beliefs $\bx^s \in \Delta(\Omega)$ with an additional supplemental constraint $\sum_{s \in S} \pi(s) \cdot \bx^s = \mu$, and vice versa.  
 
We are now ready to observe that the persuasion problem above is a special case of our general principal agent problem. Specifically, the principal's strategy space $\cX = \Delta(\Omega)$ and a generic coordination mechanism is a distribution $\pi(\bx, a; \theta)$, subject to the supplemental constraints due to Bayes plausibility requirements: $\sum_{\bx \in \cX, a} \pi(\bx, a; \theta) \cdot \bx^s = \mu$ for every $\theta$. The sender's and receiver's utility functions $U, V:\Delta(\Omega) \times \cA \times \Theta \to \RR $  are both affine as follows :
\begin{equation}
     U(\bx, a; \theta) = \bx^\top \cdot [u^\theta(\cdot, a)] \qquad V(\bx, a; \theta) = \bx^\top \cdot  [v^\theta(\cdot, a)]  
\end{equation}

\subsection{Selling Information to a Bayesian Decision Maker}\label{append:sell-info}
The problem of selling information, as widely studied recently \cite{babaioff2012optimal,bergemann2018design,chen2020selling,yang2022selling}, is related to yet different from information design (i.e., Bayesian persuasion)  and can also be captured as a special case of our general model. Here the principal is a data broker, who can design signaling schemes to reveal partial information about a random state of nature $\omega \in \Omega$. The information buyer plays the role of the receiver  as in the Bayesian persuasion model above with utility function $v^{\theta}$. Unlike the sender in persuasion,   the broker here does not have any utility function, but instead would like to maximize the charge from the buyer. Previous works by \citet{babaioff2012optimal,chen2020selling} show  that the optimal mechanism here can without loss of generality have the following form: charging a payment $t^\theta_a$ from   buyer o type $\theta$ and meanwhile making obedient recommending action $a$ based on signaling scheme $\pi(a;\theta)$. It is easy to see that this corresponds to a  general principal-agent problem that is similar to that of Bayesian persuasion above, but with different principal strategy and player utilities:
\begin{equation*}
     U([\bx,t], a; \theta) = t \qquad V([\bx,t], a; \theta) = \bx^\top \cdot  [v^\theta(\cdot, a)]  - t
\end{equation*} 
where $t \in \RR$ is the charge to the information buyer of type $\theta$ when he is recommended action $a$, whereas $\bx$ is the agent's posterior belief given action $a$ which is additionally required to satisfy the Bayes plausibility constraint.

\subsection{Bayesian Stackelberg Game: A Case with Bounded Strategy Space}
In this model, the principal is the leader with commitment power whereas the agent is a  follower.  The follower has action set  $\cA = \{1, \cdots, n\}$  whereas the leader has an action set $[d] = \{1, \cdots, d \}$. The leader's strategy is a mixed strategy, i.e., a distribution $\bx \in \Delta^d$ over $[d]$ where   $ \Delta^d = \{\bx \in \RR_+^d: \sum_{i \in [d]} x_i = 1\}$ is the $d$-dimensional simplex.  
The follower has a private type $\theta$ sampled from a probability distribution $f$ supported  on set $\Theta$. When the leader takes action $i \in [d]$ and a follower type $\theta$ takes action $a \in \cA$, the leader receives utility  $u_{i,a}^\theta$ whereas the receiver receives $v_{i,a}^\theta$. Let $u^\theta, v^{\theta} \in \mathbb{R}^{d \times n}$ denotes the corresponding payoff matrices. 

It is easy to see that the above model is a special case of our general principal-agent model where the principal's strategy space $\cX = \Delta^d$ is bounded. The leader's and follower's  utility function as $U, V: \Delta^d \times \cA \times \Theta \to \RR $ can be written as follows:
$$U(\bx, a; \theta) = \bx^\top \cdot [ u^\theta(\cdot, a)]    \qquad V(\bx, a; \theta) = \bx^\top \cdot [ v^\theta(\cdot, a)].   $$  

\section{Proof of Theorem \ref{thm:Optimal_IC_leader_strategy} } \label{append:stackelberg-hard}
We show a reduction from the Max Independent Set (MIS) problem. It is known that no polynomial-time $\frac{1}{|\Theta|^{1-\epsilon}}$-approximation algorithm exists for MIS, unless P=NP. An MIS instance is given by a graph $G = (V, E)$, we construct a game with $K = |V| = |\Theta|$ follower types $\Theta = \{\theta_v:v \in V\}$; each follower $\theta_v$ corresponds to a node $v$ and appears with probability $\frac{1}{k}$. The leader has $2|V|$ actions $\{a_v:v\in V\}\cup\{b_v:v\in V\}$; the follower has $3$ actions $\{1^F, 2^F, 3^F\}$. The follower's utility is given by Table~\ref{table:follower_hardness}. 

\begin{table}[h]
    \centering
    \begin{tabular}{|c|c|c|c|} \hline
      Type $\theta_v$  &  $1^F$ & $2^F$ & $3^F$ \\ \hline
      $a_v$   & 0.1 & 0.1 & 0.1   \\ \hline
      $b_v$   &  & 1 & 1  \\ \hline
      $\{a_{v'}:v'\in \mathcal{N}(v)\}$   & 0.5 &  & 1   \\ \hline
      otherwise & & &0.1\\ \hline
    \end{tabular} 
    \caption{Utility information of the follower with type $\theta_v$. $\mathcal{N}(v)$ denotes the neighbour nodes of $v$ in the graph $G$. All empty elements are 0.}
    \label{table:follower_hardness}
\end{table}
Irrespective of the strategy the leader plays, the leader always gets utility $1$ when the follower responds with action $1^F$, and utility $0$ when the follower responds with $2^F$ or $3^F$. 

Given an independent set $V^*$ of size $k$, consider the following leader mechanism:
\[
\bvec{x}^*(\theta^v) = \begin{cases}
a_v, \quad \text{if } v \in V^* \\
b_v, \quad \text{otherwise}
\end{cases}
\]
We claim the above leader mechanism $\bvec{x}^*(\theta^v)$ is an incentive compatible leader mechanism and gives the expect leader utility $\frac{k}{K}$ because: (1) for any follower type $\theta^v \in V^*$, misreporting $\theta^{\bar{v}}$  leads to the leader playing strategy $a_{\bar{v}}$ if $\bar{v} \in V^*$ (which means $\bar{v} \notin \mathcal{N}(v)$ as $V^*$ is an independent set), or strategy $b_{\bar{v}}$ if $\bar{v} \notin V^*$. Neither strategy makes the follower better off than reporting truthfully. (2) for any follower type $\theta^v \notin V^*$, the follower can already get the maximum possible utility $1$ if they report truthfully. So no follower type has any incentive to misreport in this case. As a result, among the $K$ follower types, the $k$ types in the independent set will respond with $1^F$ which offers the leader a utility of $\frac{k}{K}$.

On the other hand, given any incentive compatible mechanism, we claim that the follower types who respond with $1^F$ must form an MIS. For the sake of contradiction, assume $(v_1, v_2) \in E$ and both follower types $\theta^{v_1}$ and $\theta^{v_2}$ best respond with $1^F$. Note that $1^F$ is strictly dominated by $3^F$ for the follower unless the leader plays $a_v$. Therefore, in order to make follower type $\theta^{v_1} (\theta^{v_2})$ best respond with $1^F$, we must have $\bvec{x}(\theta^{v_1}) = a_{v_1} (\bvec{x}(\theta^{v_2}) = a_{v_2})$. However, note that $(v_1, v_2) \in E$. Consequently,  the follower of type $\theta^{v_1} (\theta^{v_2})$ is incentivized to misreport  $\theta^{v_2} (\theta^{v_1})$, which gives them a higher utility, contradicting the incentive compatibility.

As a result, under the optimal incentive compatible leader strategy, the number of follower types who respond with $1^F$ equals to the size of the MIS. In addition, the number of follower types who respond with $1^F$ is also proportional to the leader's overall utility since the leader only gets utility when the follower responds with $1^F$. Therefore, any $\frac{1}{|\Theta|^{1-\epsilon}}$-approximation algorithm for the optimal incentive compatible leader strategy would also give $\frac{1}{|\Theta|^{1-\epsilon}}$-approximation to the MIS problem.

\section{Omitted Proof of Theorem \ref{thm:computing-optimal-info-scheme}}\label{appendix_section:them:computing-optimal-info-scheme}
\subsection{Proof of Lemma \ref{lm:conjugate-property}}
\begin{proof}
    We first prove the property that $H(t \bx) = t H(\bx), \forall t \geq 0$. When $t=0$, we have $H(t\bx) = 0 = tH(\bx)$. So instead pick any $t>0$, we have 
    \begin{eqnarray*}
        H(t \bx) & =&  \sum_{i} t x_{i} f( \frac{t \bx }{ \sum_i t x_i } )  \\
        & =& t \sum_{i} x_{i} f( \frac{\bx }{ \sum_i x_i } ) \\
        & = & t H(\bx) 
    \end{eqnarray*}
    
    We now prove its convexity by definition. Pick any $\bx, \bx' \in \RR_{\geq 0}^d$, we show that $H( \alpha \bx + (1-\alpha) \bx' ) \leq \alpha H(  \bx) + (1-\alpha) H( \bx' )$. If $\sum_i x_i = 0$ or $\sum_i x'_i = 0$, this is trivially true using the first property. Otherwise $\sum_i x_i > 0$, $H$ is convex as the perspective function of the convex function with domain restricted to the simplex $\Delta^d$.

    \end{proof}

\section{Omitted Proof of Theorem \ref{thm:hard-info-acquisit}}\label{appendix_sec_info_hard}
\subsection{Proof of Lemma \ref{lm:concavify-hard}}
\begin{proof}
We prove the contrapositive statement: if there exists a poly-time algorithm to solve $\bar{\phi}(f)$ for any $f$, then there exists a poly-time algorithm to find the global maximum for $\phi$.

Observe that $\bar{\phi}$ is a concave function with its maximum also being the global maximum of $\phi$. If there exists a poly-time algorithm to solve $\bar{\phi}(f)$ for any $f$, we can evaluate $\bar{\phi}$ at any $f$ in poly-time, which implies that the maximum of $\bar{\phi}$ can be solved in poly-time.
\end{proof}

\subsection{Proof of Equation \eqref{eq:maximization-equivalence}}
\begin{proof}
To prove Equation \eqref{eq:maximization-equivalence} that
\begin{equation*}%
  \max_{\sigma \in \Delta^k} u^*(\sigma) - h(\sigma) = \frac{1}{k} \cdot \max_{x \in [0, 1]^k} u^*(x),   
\end{equation*}
we consider two cases by splitting $\Delta^k$ into two part $\cS = [0, 1/k]^k$ and $\bar{\cS} = \Delta^k \setminus [0, 1/k]^k $: 
\begin{itemize}
    \item For any $\sigma \in \cS$, we have $h(\sigma)=0$. Since $ u^*(t\cdot x) = t\cdot u^*(x), \forall t > 0 $, we have 
    $$ 
    \max_{\sigma \in \cS} u^*(\sigma) - h(\sigma) = \max_{\sigma \in [0, 1/k]^k} u^*(\sigma) = \frac{1}{k}\cdot \max_{x \in [0, 1]^k} u^*(x).
    $$
    \item For any $\sigma \in \bar{S}$, we have $h(\sigma)=\norm{\sigma}_{\infty}-1/k$. We claim that 
    $$
    (*)=\max_{\sigma \in \cS} u^*(\sigma) - h(\sigma) \geq \max_{\sigma \in \bar{\cS}} u^*(\sigma) - h(\sigma).
    $$
    This can be proved by contradiction, suppose there is $\sigma' \in \bar{\cS}$ with $u^*(\sigma') - h(\sigma') > (*) $. Let $ t = \norm{\sigma'}_{\infty}$ and we know by definition $ t \geq 1/k$. As is shown above, for any $t > 0$, $\max_{\sigma \in [0, t]^k} u^*(\sigma) = t\cdot \max_{x \in [0, 1]^k} u^*(x) $, so we have 
    $$ u^*(\sigma') 
    =  \max_{\sigma \in \Delta^k, \norm{\sigma}_{\infty}=t} u^*(\sigma)
    \leq \max_{\sigma \in [0, t]^k} u^*(\sigma) 
    = t \cdot \max_{x \in [0, 1]^k} u^*(x).
    $$
    Applying the definition that $h(\sigma') = t - 1/k$, we get
$$ u^*(\sigma') - h(\sigma') \leq t\cdot \max_{x \in [0, 1]^k} u^*(x) - t + 1/k. $$
    Since $\max_{x \in [0, 1]^k} u^*(x) \leq 1$, we have 
 $$ t\cdot \max_{x \in [0, 1]^k} u^*(x) - t + 1/k 
 \leq \frac{1}{k}\cdot \max_{x \in [0, 1]^k} u^*(x) =  \max_{\sigma \in \cS} u^*(\sigma) - h(\sigma).$$
 Combining the two inequalities above, we get 
 $$
 u^*(\sigma') - h(\sigma') \leq \max_{\sigma \in \cS} u^*(\sigma) - h(\sigma).
 $$
  Hence, a contradiction is reached. 
\end{itemize}
  Combining the analysis on $\cS$ and $\bar{\cS}$, we get the Equation \eqref{eq:maximization-equivalence}.
  $$ 
    \max_{\sigma \in \Delta^k} u^*(\sigma) - h(\sigma) = \max_{\sigma \in \cS} u^*(\sigma) - h(\sigma) = \frac{1}{k}\cdot \max_{x \in [0, 1]^k} u^*(x).
    $$
\end{proof}

\end{document}